\definecolor{rltred}{rgb}{0.75,0,0}
\definecolor{rltgreen}{rgb}{0,0.5,0}
\definecolor{rltblue}{rgb}{0,0,0.75}
\newcommand{\bitem}{\begin{itemize}}
\newcommand{\eitem}{\end{itemize}}
\newcommand{\benu}{\begin{enumerate}}
\newcommand{\eenu}{\end{enumerate}}
\newcommand{\beq}{\begin{eqnarray*}}
\newcommand{\eeq}{\end{eqnarray*}}
\newcommand{\R}{\mathbb{R}}
\newcommand{\RR}{\mathbb{R}}
\newcommand{\E}{\mathbb{E}}
\newcommand{\F}{\mathcal{F}}
\newcommand{\G}{\mathcal{G}}
\newcommand{\C}{\mathbb{C}}
\newcommand{\N}{\mathbb{N}}
\newcommand{\NN}{\mathbb{N}}
\newcommand{\T}{ \mathbb{T}}
\newcommand{\Prob}{\mathbb{P}}
\newcommand{\ind}[1]{\mathds{1}_{#1}}
\DeclareMathOperator*{\esssup}{ess\,sup}
\DeclareMathOperator*{\essinf}{ess\,inf}
\DeclareMathOperator*{\argmin}{arg\,min}
\newtheoremstyle{miEstilo}
{5pt}
{5pt}
{\slshape} 
{\parindent}
{\bfseries}
{:}
{\newline } 
{}
\theoremstyle{miEstilo}
\theoremstyle{definition}
 \newtheorem{theorem}{Theorem}[section]
 \newtheorem{lemma}[theorem]{Lemma}
 \newtheorem{proposition}[theorem]{Proposition}
 \newtheorem{corollary}[theorem]{Corollary}
\theoremstyle{definition}
\newtheorem{definition}[theorem]{Definition}
\newtheorem{example}[theorem]{Example}
\newtheorem{assumption}[theorem]{Assumption}
\theoremstyle{remark}
\newtheorem{remark}[theorem]{Remark}
\begin{document}
\title{ Conditional Analysis and a Principal-Agent problem }


\author{ Julio Backhoff $^1$ }
  \address{ $^1$ Vienna University of Technology,
Institute of Statistics and Mathematical Methods in Economics,
Wiedner Hauptstr. 8 / E105-7 MSTOCH
1040 Wien, Austria. E-mail: \texttt{julio.backhoff@tuwien.ac.at} 
The author gratefully acknowledges financial support from the European Research Council (ERC) under grant FA506041 and the Austrian Science Fund (FWF) under grant Y00782, 
as well as from the Berlin Mathematical School. } 

\author{Ulrich Horst $^2$ }
  \address{  $^2$ Institut f\"ur Mathematik, Humboldt-Universit\"at zu Berlin.
  E-mail: \texttt{horst@mathematik.hu-berlin.de} Financial support through the SFB 649 {\sl Economic Risk} is gratefully acknowledged.}

\maketitle

\begin{abstract}
We analyze conditional optimization problems arising in discrete time Principal-Agent problems of delegated portfolio optimization with linear contracts. Applying tools from {\it Conditional Analysis} we show that some results known in the literature for very specific instances of the problem carry over to translation invariant and time-consistent utility functions in very general probabilistic settings. {However, we find that optimal contracts must in general make use of derivatives for compensation.}
\end{abstract}

\begin{description}
\item [{\bf Keywords}] Principal-Agent problem, conditional analysis, portfolio delegation, variational utilities.
\item [{\bf MSC 2010}] 91G10,91G80,91A25,46S10,90C39.
\item [{\bf JEL}]G110,C610,C650,D860.
\end{description}

\renewcommand{\baselinestretch}{1.1}\normalsize

\section{Introduction}
\label{introduction}

In this article we analyze conditional optimization problems arising in the dynamic \textit{Principal-Agent (PA) Problem} of delegated portfolio management. In these models, which belong to the class of contracting problems under moral hazard, an investor (the Principal) outsources her portfolio selection to a manager (the Agent) whose investment decisions the investor cannot or does not want to monitor. 

Moral hazard problems have been first studied in \cite{Holmstrom,HolmstromMilgrom} in static environments and in \cite{SS87,Schaettler93} in dynamic ones. In recent years, such problems have received renewed attention in the economics and financial mathematics literature. 
The seminal contribution \cite{Sannikov} analyzed dynamic moral hazard problems in continuous time in which the output is a diffusion process with drift determined by the Agent's effort. The optimal contract, based on the Agent's continuation value as a state variable, was computed using sophisticated stochastic control and PDE methods. Using similar tools, \cite{BMRV} studied a PA model in which a risk-neutral Agent with limited liability must exert unobservable effort to reduce the likelihood of large but infrequent losses. In \cite{Williams} a Stochastic Maximum Principle was applied to dynamic PA models to find first order conditions for optimality. In the most general case the Stochastic Maximum Principle leads to the characterization of optimal contracts through a system of fully coupled Forward-Backward Stochastic Differential Equations for which no general existence theory exists. These equations can typically only be solved explicitly when the analysis is confined to models driven by Brownian motion, specific preferences - typically linear, expected exponential or power utility functions - and information is symmetric, i.e. both parties observe the driving Brownian motion. We refer to the monograph \cite{Cvitanic} for a systematic survey of the mathematical literature on dynamic PA models and to \cite{Cvitanicnodominado} for a recent model of portfolio delegation under incomplete information which leads to even more complex dynamics.  

Our work is motivated by that of Ou-Yang in \cite{Ou-Yang} - which was in a sense generalized in \cite{CvitanicJET} - where a delegated portfolio management problem in continuous time was analyzed. {In his model, the {Agent} observes prices (a geometric brownian motion) while the {Principal} observes prices and the fluctuations in wealth resulting from the Agent's investment strategy; investment decisions are unobservable to the Principal and known only to the Agent.} {Under the assumption of exponential utilities the contracting problem was solved by means of a HJB approach, finding that the optimal contract is of the form ``cash plus a convex combination of the generated wealth and a benchmark portfolio''; derivatives are not part of the optimal contract. 
%
%

{Our goal is to clarify the mathematical structure of the optimal contracting problem and to analyze under which conditions on the Principal's and Agent's preferences the main structure-of-equilibrium-contract results in \cite{Ou-Yang} carry over to more general probabilistic settings.} 
To this end, we consider a portfolio delegation model in discrete time, 
retain the assumptions on the contract space and information structure but allow for rather general utility functions and price dynamics.} Our main assumptions on preferences are time-consistency and translation invariance; such preferences have been extensively studied in the mathematical finance literature; see \cite{BarKa, OCE, Equilibrium, Divutil, FKV}. 
%
%
With our choice of preferences we prove that the problem of dynamic contract design can be reduced to a series of one-period conditional optimization problems of risk-sharing type under constraints and optimal contracts can be computed by backwards induction. To do so, we employ the usual approach of viewing the Agent's continuation utility at any point in time as the Principal's decision variable, with the Principal's decisions being restricted by an incentive compatibility constraint. To the best of our knowledge this argument was first put forward in \cite{SS87} and later in \cite{Schaettler93}.

Our approach of reducing the dynamic contracting problem to a series of conditional one-period problems is similar to the one employed in {\cite{Equilibrium} where a model of equilibrium pricing in incomplete markets was analyzed.} The optimization therein is simpler, though, as the exchange of risk takes place through linear subspaces spanned by the tradable assets which is not the case in our model. Our optimization problems can be viewed as conditional extensions of the ones analyzed in e.g.\ \cite{BarKa, Borch, Equilibrium} where the exchange of risk takes place through (conditional) $L^p$ spaces. Conditional analysis - see e.g.\ \cite{FKV,FKV2,CondRd} - provides a framework to tackle conditional optimization problems, at the same time avoiding technical measurable selection arguments. 

Our conditional one-period optimization problems are  not convex a-priori, due to incentive compatibility constraints. However, {with our choice of contracts} the Principal's and Agent's problems can be merged into unconstrained ones, {which if solvable} yield an optimal contract. In economic terms, the reduction to unconstrained problems means that the first-best solution is implementable under moral hazard if it exists: the contract that one obtains is the same that one would obtain if the Principal and Agent had the same information and had to share the gains and losses from trading between themselves so as to maximize aggregate utility. 

The intuition is that in computing an optimal contract the Principal computes the Agent's optimal actions as function of stock prices. {This resolves the asymmetry of information and leads eventually to our main result that the optimal contract - if it exists - is of the form ``cash plus a convex combination of the generated wealth and a benchmark portfolio plus a path-dependent derivative on the stock price process''. In particular, under an optimal contract the Principal fully surrenders to the Agent the wealth generated by trading in exchange for a benchmark portfolio plus a (generally non-replicable) derivative. 

As in \cite{Ou-Yang} the benchmark portfolio is related to the optimal (at-equilibrium) effort of the Agent. Unlike in \cite{Ou-Yang} derivatives are generally part of optimal compensation schemes. Derivatives are {\sl not} needed in Markovian models under a predictable representation property (PRP)\footnote{Loosely speaking the predictable representation property states that uncertainty is spanned by finitely many random factors.}. The latter includes a discrete-time version of the model in \cite{Ou-Yang} as well as many of the standard dynamic risk sharing problems under symmetric information as special cases. It is only in this (restricted) setting that we can prove in Proposition \ref{summmark} that the structure-of-contracts results in \cite{Ou-Yang} carry over to more general preferences as long as the Agent's and the Principal's preferences originate from a common base preference functional (e.g. exponential utilities).} 
%

The main challenge is then to solve the unconstrained optimization problems. The approach we follow is to prove that the set of potential optimizers is bounded in a suitable sense. In the greatest generality we work in the conditional version of $L^1$ spaces and with conditional utility functions enjoying a certain sequential upper-semicontinuity on balls, which in particular yields a variational representation of the preference functionals in the spirit of \cite{EpsteinWang, GilboaSchmeidler, MMR2006}.
%
%
%
 The transit from boundedness to optimality uses a form of the usual {Komlos argument}. With this we fully solve in Theorem \ref{casodina} the PA problem for {a class of}  Optimized Certainty Equivalent (OCE) utilities including Average Value at Risk, and bounded prices. 

In a Markovian framework under PRP our static conditional problems reduce to deterministic ones in Euclidean spaces. For such setting we find for general OCEs the optimal contract by the Lagrange multiplier method. Under PRP the solution to our contracting problem can also be obtained in terms of the solution to a coupled system of backward stochastic difference equations. As in \cite{Equilibrium}  the benefit of having a discrete model is that such systems can be solved by backwards induction, {whereas the continuous-time equivalent is usually intractable. This, and the fact that continuous-time models are unlikely to yield additional insights into the structure of optimal contracts over discrete-time models, motivates our discrete-time framework.}

The remainder of this paper is structured as follows. In Section \ref{themodel} we introduce the modeling framework including the preferences and the contract space. We show how the dynamic problem reduces to a sequence of static ones and present our main results along with examples (OCEs) for which these results can be applied. In Section \ref{GenResL0} we prove general attainability results for the Agent's and Principal's problem. In Section \ref{mark} we specialize our analysis assuming Markovianity and PRP, which allows us to obtain the optimal contracts explicitly. In Appendix A we survey existing and prove new conditional analysis results which we need throughout this work. Appendices B and C prove results on OCEs and one of the main results of this paper, respectively.  

{{\sl Notation.} We take the convention that vectors are regarded as column ones. The transpose of a vector $x$ is denoted $x'$ and unless necessary to do otherwise the inner product of two vectors $x,y$ is denoted $x y$. As usual, $(\cdot)_+$ and $(\cdot)_-$ denotes taking positive and negative parts.}

\section{The model and main results}
\label{themodel}

 We consider a discrete time model with time grid  $ \{0, 1, ..., T\}$ for some deterministic terminal time $T < \infty$. Uncertainty is modelled by a probability space $(\Omega, \F, \Prob)$. The probability space carries an $N$-dimensional, strictly positive, discounted stock price process $P=\{P_t\}$ whose filtration we denote by $\F=\{\F_t\}$. We assume throughout that $\E[P_{t+1}\vert\F_t]$ is finite\footnote{All equalities and inequalities are to be understood in the $\Prob$-a.s.\ sense.}. We put $\Delta P_{t+1} := P_{t+1}-P_t$ and $\Delta\tilde{P}_{t+1}:=diag(P_{t})^{-1}\Delta P_{t+1}$, where $diag(\cdot)$ denotes the diagonal matrix associated with the vector in its argument. The same notation applies for other processes different than $P$. We write {$P_{0:t}$ to denote the path of the price process from time $0$ to $t$.} For a $\sigma$-algebra ${\mathcal G}$ we denote by $L^0({\mathcal G})$ the set of real-valued ${\mathcal G}$-measurable functions. $\underline{L^0}({\mathcal G})$ and $\overline{L^0}({\mathcal G})$ denote the set of ${\mathcal G}$-measurable functions taking values in ${\mathbb R} \cup\{-\infty\}$, respectively ${\mathbb R} \cup\{+\infty\}$. 


\subsection{Effort levels and wealth dynamics}
\quad

At each time $t  \in \T := \{0, 1, ..., T-1\}$ the Agent (he) chooses an N-dimensional $\F_t$-measurable random variable  $A_t$ {that we call \textit{effort level}, in line with the Principal-Agent literature. For the delegated portfolio optimization application we have in mind the vector $A_t$ stands for the {dollar amount} invested in each asset. The cost associated with choosing $A_t$ is given by $c_t(A_t)$. We make the following standing assumption: }

\begin{assumption}\label{estconv}
	The cost functions $c_t(\cdot):\RR^N\to \RR$ are strictly convex for each $t  \in \T$. 
\end{assumption}

Effort levels are known only to the Agent. The wealth at time $t \in \T$ associated with a sequence of effort levels $A=\{A_t\}$ is given by\footnote{For simplicity we assume zero interest rate}:
\begin{equation}
W^A_t=W_0 +\Delta W^A_1+\dots + \Delta W^A_{t} = W_0 + \sum_{s<t}A_s \Delta\tilde{P}_{s+1} .\label{flow}
\end{equation} 
The Principal (she) observes progressively stock prices and wealth levels and offers the Agent a contract based on her available information. Following \cite{Ou-Yang} a contract will consist of a linear combination of a payment contingent on the evolution of the price process and a reward depending linearly on the wealth increments. This includes replicable derivatives on the terminal wealth. 

\subsection{Preferences} 
\label{preferences}
\quad

{
{
Payments are evaluated according to a family of time-consistent and translation invariant utility functions.  To connect with the existing literature we first define our preference functionals on spaces of almost surely finite random variables (``general framework''). Subsequently, we introduce an additional conditional integrability and continuity condition (``conditional $L^1$ framework'') from which we infer a variational representation of our preference functionals. While time-consistency and translation invariance allows us to reduce the dynamic contracting problem to a series of conditional one-step ones, the variational representation allows us to state sufficient conditions under which the Principal's and the Agent's conditional optimal one-step payments and actions exist at any time.}} 


\subsubsection{General framework}\label{genpreferences}

To introduce our preference functionals we denote by $\F^A$, for a given choice of effort level $A$, the filtration generated by the pair of processes $(P,W^A)$. For the Agent the filtrations $\F$ and $\F^A$ coincide; for the Principal they differ unless she knows the Agent's actions\footnote{The fact that the Principal observes price and wealth dynamics does not necessarily mean that she can observe directly Agent's decisions. For {\sl optimal} contracts the Principal will indeed know Agent's decisions as function of prices. This is not true ``off equilibrium'', though. Hence we need to distinguish Agent's and Principal's information at this point.}. The respective preferences are then encoded by a family of utility functionals:
\begin{equation}
U^a_t:L^0(\F_T)\to \underline{L}^0(\F_t) \hspace{6pt}\mbox { and }\hspace{6pt} U^p_t:L^0(\F^A_T)\to \underline{L}^0(\F^A_t) \quad (t \in \T). 
\end{equation}
%

 We use the notation $U^a$ and $U^p$ when referring to the Agent's and Principal's preferences. For a  filtration $\{\G_t\}$ and a family $U:=\{U_t\}$ of utility functionals $U_t:L^0(\G_T)\mapsto \underline{L}^0(\G_t)$ we say that $U$ is:
\begin{itemize}

\item \textit{normalized if } $U_t(0)=0$,

\item \textit{proper if }{there exists $X'\in L^0(\G_T) $ s.t.\ $U_t(X')>-\infty$ and $U_t(X)<\infty$ for all $X\in L^0(\G_T)$}

\item \textit{monotone if } $U_t(X)\geq U_t(Y)$ whenever $X,Y\in L^0(\G_T)$ and $X\geq Y$ 

\item $\F_t$-\textit{conditionally concave if } $U_t(\lambda X + (1-\lambda)Y)\geq \lambda U_t(X) + (1-\lambda)U_t(Y)$ whenever $\lambda \in L^0(\G_t) \cap [0,1]$ and $X,Y\in L^0(\G_T)$,

\item $\F_t$-\textit{translation invariant if } $U_t(X+Y)=U_t(X)+Y$ whenever $X\in L^0(\G_T)$ and $Y\in L^0(\G_t)$,

\item \textit{time consistent if } $U_{t+1}(X)\geq U_{t+1}(Y)$ implies $U_{t}(X)\geq U_{t}(Y) $, 

\end{itemize} 
for all $t \in \T$. We shall refer to these axioms as the \textit{usual conditions/assumptions} and denote by 
$$dom(U_t):=\{X\in L^0(\G_T):U_t(X)\in L^0(\G_t)\},$$ 
the domain of $U_t$. For a detailed discussion of the usual conditions along with their implications for utility optimization and equilibrium pricing we refer to \cite{Equilibrium} and references therein. For instance, it is well-known that they imply the \textit{tower property}, stating that $U_t(X)=U_t(U_{t+1}(X))$ whenever $X\in dom(U_{t+1})$, as well as the \textit{local property}, stating that $\ind{A}U_t(X)=\ind{A}U_t(Y)$ whenever $X,Y\in L^0(\G_T), A \in \G_t$ and $\ind{A}X=\ind{A}Y$.  

{We assume throughout that $U^a_t$ and $U^p_t$ satisfy the usual conditions w.r.t.\ the respective filtrations. They are satisfied for a wide class of preferences as illustrated by the following examples.} 

\begin{example}[\bf Entropic utilites]
\label{ejementropic}
Given a constant $\gamma>0$ the entropic family given by $$U_t(X)=-\frac{1}{\gamma}\log \E\left [ \exp\left ( -\gamma X\right ) |\G_t \right ],$$
evidently satisfies the usual conditions.
\end{example}  

\begin{example}[\bf Pasting]
\label{ejempaste}
{Starting from one-step utilities defined for bounded random variables, a family satisfying the usual conditions can be built over $L^0$ as follows \cite[Example 2]{Equilibrium}.} For each $t,$ let $\tilde{U}_t:L^{\infty}(\G_{t+1})\to L^{\infty}(\G_t)$ be a {normalized} and $\G_t-${translation invariant} functional, for which the extensions 
\begin{equation}
X\mapsto \lim_{n\rightarrow +\infty}\lim_{m\rightarrow {-}\infty} \tilde{U}_t([X\wedge n]\vee m) ,\label{eqextension}
\end{equation}
again denoted $\tilde{U}_t$, are well defined between $\underline{L}(\G_{t+1})$ and $\underline{L}(\G_t)$. It is not difficult to see that the \textit{pasting} $U_t(X):=\tilde{U}_t \circ \tilde{U}_{t+1}\circ \dots \circ \tilde{U}_T (X)$ forms a time consistent and translation invariant family.

\end{example}

\begin{example}[\bf Optimized Certainty Equivalents]
\label{oce}
{
Consider $H_t(\cdot)$ a convex, closed and increasing function satisfying $H_t^*(1):=\sup_s[s-H_t(s)]=0$, and define the one-step functionals
$$\tilde{U}_t:X\in L^{\infty}(\G_{t+1}) \mapsto \esssup_{s\in\RR}\{s-\E[H_t(s-X)\vert \G_t]\},$$ 
which are then normalized, translation-invariant and monotone. Such a family is called \textit{Optimized Certainty Equivalent} (OCE) in the literature; see \cite{OCE}. {The entropic utility of Example \ref{ejementropic} corresponds to $H(l)=\gamma^{-1}\exp(\gamma l-1)$. Lemma \ref{ejemx2} shows that if $1\in int(dom(H_t^*))$ and $H_t$ is bounded from below, the extensions \eqref{eqextension} are well-defined. Hence we obtain a family satisfying the usual conditions by pasting; see Remark \ref{remtechnical} as well. This fills a minor gap in \cite{Equilibrium}.} }
\end{example}

\begin{example}[\bf Tail-value-at-risk utility]
\label{ejemTVAR}
{{By Lemma \ref{ejemx2}},
a family satisfying all the requirements of Example \ref{oce} is given by the so-called Tail-value-at-risk (TVAR) utilities, defined for each $\lambda\in (0,1)$ by 
\begin{equation}
\tilde{U}_t(X)= \esssup_{s}\left\{s-\lambda^{-1}\E([s-X]_+|\G_t)  \right\}.\label{TVARrep}
\end{equation}
{
TVAR (or Average-Value-at-Risk) was characterized in \cite{Rock-Uryasev} and later extensively analyzed in the mathematical finance literature. The representation \eqref{TVARrep} is more convenient for us than the equivalent: 
$$ \tilde{U}_t(X) = -\frac{1}{\lambda}\int_{1-\lambda}^1 V@R_{\alpha}(-X |\G_t)d\alpha.$$}
}
\end{example}

\subsubsection{Conditional $L^1$ framework}\label{preferencesL1}
 
{We are now going to introduce additional conditional integrability and continuity conditions on our preference functionals (we refer to Appendix A for more details). We define} for two sigma-algebras $\mathcal{G}\subset\tilde{\mathcal{G}}$ the \textit{conditional $L^1$ space} $$L^1_{\mathcal{G}}(\tilde{\mathcal{G}}):=\left\{Z \in L^0(\tilde{\mathcal{G}}):\E[\vert Z\vert\vert\mathcal{G}]\in L^0(\mathcal{G})\right\}.$$ For $p<\infty$ the $L^p$ variant thereof is evident and we remark that $L^1_{\mathcal{G}}(\tilde{\mathcal{G}})= L^0(\mathcal{G})L^1(\tilde{\mathcal{G}})$ as sets. Call also 
\[
	L^{\infty}_{\mathcal{G}}(\tilde{\mathcal{G}}):=\{Z \in L^0(\tilde{\mathcal{G}}):\vert Z\vert \leq Y,\mbox{ for some }Y\in 		L^0(\mathcal{G})\}. 
\]	
{The following continuity property can be viewed as a Fatou property in our conditional framework.} 
%
%
%

\begin{definition}
\label{ranx}
For $p\in[1,\infty)$ a functional $U:L^p_{\mathcal{G}}(\tilde{\mathcal{G}})\rightarrow \underline{L}^0(\mathcal{G})$ is called $L^0-L^p$ upper semicontinuous if for each sequence $\{X_n\}_n$ bounded in $L^p_{\mathcal{G}}(\tilde{\mathcal{G}})$ (i.e.\ $\sup_n \E[|X_n|^p|\mathcal{G}]\in L^0(\mathcal{G})$) such that $X_n\rightarrow X$ a.s.\ it holds that $\limsup U(X_n)\leq U(X)$. {We use this terminology even if $U$ is defined in a larger set than $L^p_{\mathcal{G}}(\tilde{\mathcal{G}})$.}
\end{definition}

We state now a standing assumption on the preferences. 



\begin{assumption}
\label{suposUasymm}
Let $U$ stand for $U^a$ or $U^p$ and $\G$ for $\F$ or $\F^A$, respectively. Then $U$ satisfies the {usual conditions} with respect to $\G$. Moreover, $U_t$ is $L^0-L^1$ upper semicontinuous for each $t$ and $${\{X_{-}:X\in dom(U_t)\} }\subset L^1_{\G_t}(\G_T).$$ 
\end{assumption}

%
%


{The following representation result is an immediate consequence of Proposition \ref{implicanciausc}. It will be used below to prove that the Agent's one-step optimization problems have a solution.} 

\begin{proposition} \label{vorgeschmack} Under Assumption \ref{suposUasymm} the following variational representations hold:
\begin{align}
U_t^p(X)&= \essinf_{Z\in \mathcal{W}^A_t} \left\{\E[ZX\vert \F^A_t] + \alpha^p_t(Z)  \right\} \mbox{  for }X\in L^1_{\F^A_t}(\F^A_{t+1}) \label{repvariacional}, \\
U_t^a(X)&= \essinf_{Z\in \mathcal{W}_t} \left\{\E[ZX\vert \F_t] + \alpha^a_t(Z)  \right\} \hspace{6pt}\mbox{ for }X\in L^1_{\F_t}(\F_{t+1}), \label{repvariacional2}
\end{align}
where 
$$\alpha^p_t:L^{\infty}_{\F^A_t}(\F^A_{t+1})\to \overline{L}^0(\F^A_t)    \;\;\;\; \alpha^a_t:L^{\infty}_{\F_t}(\F_{t+1})\to \overline{L}^0(\F_t),$$
are the respective conjugates of the utility functionals $U_t^p$ and $U_t^a$, and 
$$\mathcal{W}^A_t:=\left\{Z\in L^{\infty}_{\F^A_t}(\F^A_{t+1}):Z\geq 0, \E[Z\vert\F^A_t]=1\right\}\;\;\;\; \mathcal{W}_t:=\left\{Z\in L^{\infty}_{\F_t}(\F_{t+1}):Z\geq 0, \E[Z\vert\F_t]=1\right\}.$$ 
\end{proposition}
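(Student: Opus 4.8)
The plan is to derive both variational representations as corollaries of a general conditional duality result, namely Proposition \ref{implicanciausc}, which is explicitly invoked in the statement; the two displays \eqref{repvariacional} and \eqref{repvariacional2} are structurally identical, so it suffices to prove one of them, say \eqref{repvariacional}, and read off the other by replacing $(\F^A,U^p)$ with $(\F,U^a)$. First I would fix $t$ and observe that under Assumption \ref{suposUasymm} the functional $U_t^p$ restricted to the conditional $L^1$ space $L^1_{\F^A_t}(\F^A_{t+1})$ enjoys exactly the hypotheses needed: it is $\F^A_t$-conditionally concave and monotone (from the usual conditions), $\F^A_t$-translation invariant, proper, and $L^0$--$L^1$ upper semicontinuous, and the domain condition $\{X_-:X\in dom(U_t^p)\}\subset L^1_{\F^A_t}(\F^A_T)$ guarantees the finiteness needed for the conjugate to be well behaved. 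One technical point to check is that we only need $U_t^p$ on the one-step space $L^1_{\F^A_t}(\F^A_{t+1})$ rather than on all of $L^0(\F^A_T)$; here the tower and local properties let us regard $U_t^p$ as a map into $\overline{L}^0(\F^A_t)$ acting on $\F^A_{t+1}$-measurable payoffs, which is the setting of the cited proposition.

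The core step is then to quote Proposition \ref{implicanciausc}: it should say that a monotone, $\F^A_t$-conditionally concave, $\F^A_t$-translation invariant, $L^0$--$L^1$ upper semicontinuous functional on $L^1_{\F^A_t}(\F^A_{t+1})$ admits the dual representation $U_t^p(X)=\essinf_{Z}\{\E[ZX\vert\F^A_t]+\alpha^p_t(Z)\}$, where the essential infimum ranges over the conditional ``density'' set and $\alpha^p_t$ is the conditional conjugate $\alpha^p_t(Z)=\esssup_{Y}\{U_t^p(Y)-\E[ZY\vert\F^A_t]\}$. Monotonicity forces the dual elements to be nonnegative, and normalization together with translation invariance — applied to constants $Y\in L^0(\F^A_t)$ — forces $\E[Z\vert\F^A_t]=1$, which is precisely the definition of $\mathcal{W}^A_t$; the boundedness constraint $Z\in L^\infty_{\F^A_t}(\F^A_{t+1})$ comes from the duality between $L^1_{\F^A_t}$ and $L^\infty_{\F^A_t}$ at the conditional level. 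I would spell out these three normalizations explicitly since they pin down the index set $\mathcal{W}^A_t$ in the statement.

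The main obstacle I anticipate is not conceptual but bookkeeping: making sure the conditional Fenchel--Moreau machinery from Appendix A applies verbatim to the one-step restriction, in particular that the $L^0$--$L^1$ upper semicontinuity in Definition \ref{ranx} is the correct surrogate for the lower-semicontinuity/Fatou hypothesis needed to get equality (rather than merely $\leq$) in the biconjugate, and that the $\essinf$ is attained or at least well defined as an element of $\overline{L}^0(\F^A_t)$ with no measurability pathologies. Assuming Proposition \ref{implicanciausc} already packages all of this — which the phrasing ``immediate consequence'' strongly suggests — the proof reduces to: (i) verify the hypotheses of that proposition for $U_t^p$ on $L^1_{\F^A_t}(\F^A_{t+1})$ and for $U_t^a$ on $L^1_{\F_t}(\F_{t+1})$; (ii) apply it; (iii) identify the dual set with $\mathcal{W}^A_t$, resp.\ $\mathcal{W}_t$, using monotonicity, normalization and translation invariance; (iv) name the penalty term $\alpha^p_t$, resp.\ $\alpha^a_t$. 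I would present this as a short three-line argument, deferring everything substantive to the appendix.
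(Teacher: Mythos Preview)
Your proposal is correct and matches the paper's approach: the paper states that the proposition is ``an immediate consequence of Proposition \ref{implicanciausc}'', and your plan is precisely to verify the hypotheses of that proposition (properness, monotonicity, $\F_t$-concavity, translation invariance, and $L^0$--$L^1$ upper semicontinuity, all supplied by Assumption \ref{suposUasymm}) and then apply it. The identification of the dual set as $\mathcal{W}_t$ via monotonicity and translation invariance that you propose in step (iii) is already packaged inside Theorem \ref{Urep}, which Proposition \ref{implicanciausc} invokes, so that step is redundant but harmless.
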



{The next example shows that entropic families and pastings of many OCE, such as the TVAR families, fulfilll Assumption \ref{suposUasymm} along with the usual conditions. These are hence the canonical utilities to which our main results in Section \ref{main_results} apply.} 

\begin{example}\label{ejemplificar}
{The entropic families of Example \ref{ejementropic} clearly fulfilll Assumptions \ref{suposUasymm}. In Lemma \ref{ejemx2} we prove that the TVAR families of Example \ref{ejemTVAR}, and more generally the OCE families of Example \ref{oce} for which $1\in int(dom(H^*_t))$ and $H_t$ is bounded from below, fulfilll Assumptions \ref{suposUasymm} after pasting.} 
In Remark \ref{positiva} we will justify that if the Predictable Representation Property holds, then any OCE satisfies Assumption \ref{suposUasymm}.
\end{example}

The variational representation of preferences yields a convenient way to define preference functionals that satisfy Assumption \ref{suposUasymm} by specifying families of ``conditionally acceptable models'' for both parties. 

\begin{example}
For a filtration $\{\mathcal{G}_t\}$ let $\mathcal{A}_t \subset L^{\infty}_{\mathcal{G}_t}(\mathcal{G}_{t+1})$ be a convex set (of conditionally acceptable models) and let $\chi_{\mathcal{A}_t}$ be the associated convex indicator function. Then, the preference functional defined by
\[
	U_t(X) := \essinf_{Z\in \mathcal{W}_t} \left\{\E[ZX\vert \mathcal{G}_t] - \chi_{\mathcal{A}_t}(X) \right\}
\]
satisfies Assumption \ref{suposUasymm} after pasting where $\mathcal{W}_t:=\left\{Z\in L^{\infty}_{\mathcal{G}_t}(\mathcal{G}_{t+1}):Z\geq 0, \E[Z\vert\mathcal{G}_t]=1\right\}$
\end{example}

%
%

\subsection{Contracts and optimal actions}
\label{constracts}
\quad

The simplest contracts the Principal may offer the Agent consist of a fixed $\F_T$-measurable (lump-sum) payment $\Theta$, which we may interpret as a financial derivative contingent only on the path of the price process, plus a constant $\beta$ times $W^A_T$. Such contracts (or more exactly, menus of payments) take the form:
$$\bar{S} = \left\{A\mapsto \bar{S}(A) := \Theta(P_{0:T})+ \beta W^A_T \right\}.$$ 
%

%
Because the Principal observes the {wealth and price processes} progressively, we shall actually consider a wider family of contracts of the form: 
$$S = \left\{A\mapsto S(A) := \Theta(P_{0:T})+   \sum_{t<T} \beta_t\Delta W^A_{t+1}   \right\},$$  
where $\beta_t\in {L^0(\F^A_t)}$ and $\Theta$ is as before, which make better use of her available information. {This contract space is rather large and contains replicable path-dependent derivatives on the wealth process.} However, as in \cite{Ou-Yang}, we shall find that an optimal incentive-compatible contract is indeed of the form $\bar{S}$. This is a consequence of our implicit modeling assumption that the Principal does not seek to infer anything about $A$ from observing $P$ and $W^A$, which we may justify as it being too expensive or time-consuming for the Principal.  
%

We will conveniently refer to a contract as $S$, $(\Theta,\beta)$ or $(\Theta,\{\beta_t\})$ depending on the context and denote by $R\in\R$ the Agent's reservation utility, i.e.\ the least utility the Agent demands in order to commit to a contract 
%

\begin{definition}
\label{incentivecompatible}
A contract $(\Theta,\{\beta_t\})$ is \textit{individually rational} if the optimal utility the Agent can obtain at time $0$ from it is at least $R$. 
%
%
\end{definition}



In the sequel we show how to obtain recursive representations of the Agent's and the Principal's utilities and how to reduce the problem of optimal dynamic contract design to a sequence of static problems.

\subsubsection{Agent's problem}

Let us assume that the Agent chooses an effort level $A$ when presented with a contract $S(\cdot)$. His total cost of effort is then $C(A) := \sum_{t=0}^{T-1}c_t(A_{t})$ and his utility seen from time $t$ is $U^a_t(S(A)-C(A))$. Using translation invariance we compute:
\begin{align}
 U_t^a\left(S(A) - \sum_t c_t(A_t) \right) =&\hspace{3pt}  U_t^a\left( \Theta(P_{0:T}) +  \sum_{s\geq t}\left\{ \beta_s\Delta W^A_{s+1} -c_{s+1}(A_{s+1}) \right\} \right)\notag  \\
 & - c_t(A_t)  +\sum_{s<t} \left\{\beta_s\Delta W^A_{s+1} -c_s(A_s) \right\}. \label{eqlarga}
\end{align}
This shows that the Agent's optimization problem of finding the best effort level $A$ given a contract $S(\cdot)$ reduces to the following recursion (we omit for simplicity the dependence of $H$ in $S$):
\begin{equation} \label{recHbas}
\begin{split}
H_T &= \Theta(P_{0:T})  \\
H_t &=\esssup \limits_{A \in L^0(\F_t)^N} \left\{ U_t^a\left( H_{t+1} + \beta_t A \Delta\tilde{P}_{t+1} \right)  - c_t(A)   \right\}.
\end{split}
\end{equation}

\begin{remark}
The preceding analysis shows that $H_t$ has the interpretation of being the maximal utility the Agent can get, from time $t$ onwards. Since adding an $\F_t-$measurable term to $\Theta$ translates additively into $H_t$ and preserves optimality of effort levels, we see that the individual rationality condition binds ($H_0 = R$) for any contract that is optimal for the Principal.  
\end{remark}

\begin{definition}
A contract $(\Theta,\{\beta_t\})$ is called \textit{incentive-compatible} if the essential suprema in (\ref{recHbas}) are attained for each $t \in \T$.  
%
%
\end{definition}

\subsubsection{Principal's problem} 

The Principal's problem is to design an optimal incentive compatible and individually rational contract. To that end, suppose again that the Agent has chosen $A$ when presented with a contract $S(\cdot)$, and that the Principal knows this. Her utility seen from time $t$ is then:
\begin{align*}
&U_t^p\left (W_T^A-\Theta-\sum_{s<T} \beta_s\Delta W_{s+1}^A \right )\\
&= {W_0^A}- H_t+\sum_{s<t} (1-\beta_s)A_s\Delta\tilde{P}_{s+1} 
+ U_t^p\left( \sum_{s\geq t} \left[(1-\beta_s)A_s\Delta\tilde{P}_{s+1} -\Delta H_{s+1} \right] \right),
\end{align*}
where the identity $\Theta=H_t + \sum_{s\geq t }\Delta H_{s+1}$ and translation invariance was used. If we denote by $h_{t}(A,\beta)$ her utility from future income, then time consistency along with translation invariance yields:
\begin{equation}  \label{intermh}
\begin{split}
	h_{t}(A,\beta) &:= U_{t}^p \left(\sum_{s\geq t} \left[(1-\beta_s)A_s\Delta\tilde{P}_{s+1} -\Delta H_{s+1} \right] \right) \\
	& = U_t^a\left( H_{t+1}+ \beta_t A_t\Delta\tilde{P}_{t+1} \right) - c_t(A_t)  +
U_t^p\left(h_{t+1}(A,\beta) + (1- \beta_t) A_t\Delta\tilde{P}_{t+1} - H_{t+1}   \right) .
\end{split}
\end{equation}
%
%

Performing the change of variables 
%
\begin{equation}
\Gamma_{t+1} := \beta_t A_t\Delta\tilde{P}_{t+1} + H_{t+1}  \in L^0(\F_{t+1}),\label{covG}
\end{equation}
%
and writing $h_t(A,\Gamma)$ instead of $h_t(A,\beta)$ we arrive at:
\begin{equation}
h_t(A,\Gamma) =  U_t^a( \Gamma_{t+1} ) - c_t(A_t)  +
U_t^p\left(h_{t+1}(A,\Gamma) +   A_t\Delta\tilde{P}_{t+1} - \Gamma_{t+1}    \right).
\label{intermhG}
\end{equation} 

%

%

If $(\Theta,\{\beta_t\})$ is incentive compatible, then unique optimal effort levels for the Agent exist, due to our concavity assumptions on his utility and cost function. For every time $t \in \T$ we 
may thus construct  the random variable $\Gamma_{t+1}$, and $A_t$ will attain the essential supremum: 
$$\esssup_{a}\left[ U^a_t\left(\Gamma_{t+1} + \beta_t[ a - A_t ]\Delta\tilde{P}_{t+1}\right) -c_t(a)\right ] .$$ 
We say that $(\{A\},\{\Gamma\})$ is \textit{incentive-compatible} whenever for every $t$ this $A_t$ attains this supremum. In terms of the set 
$$\C_t(\beta):= 
\left\{
\begin{array}{c}
(A,\Gamma)\in [L^0(\F_t)]^N\times L^0(\F_{t+1})\mbox{ s.t. for every } \bar{A}\in [L^0(\F_t)]^N: \\
U^a_t(\Gamma) -c_t(A) \geq U^a_t\left(\Gamma +\beta [ \bar{A} -A ]\Delta\tilde{P}_{t+1}\right) -c_t(\bar{A})
\end{array}
\right\},$$
incentive compatibility amounts to $(A_t,\Gamma_{t+1})\in \C_t(\beta_t)$ for every $t \in \T$. In particular, we can introduce the following recursion for the Principal's future optimal wealth:
%
\begin{equation} \label{rechas}
\begin{split}
h_T&=0, \\
h_t&= \esssup\limits_{ \substack{{(\beta,A,\Gamma)} \\ {(A,\Gamma)\in \C_t(\beta)}}}  U_t^a( \Gamma ) - c_t(A)  + U_t^p\left(h_{t+1}+   A\Delta\tilde{P}_{t+1} - \Gamma   \right). 
\end{split}
\end{equation}
%


\begin{remark}\label{blabla}
We arrived at the well-known result that in constructing an optimal contract the Principal should consider the Agent's continuation utility as a decision variable of hers. This also resolves the issue of information asymmetry: assuming that the Principal knows the mappings $A_t$ as functions of $\{P_s\}_{s\leq t}$ for each $t$ implies that all the random variables in \eqref{intermh} and \eqref{intermhG} become price-adapted.\footnote{We emphasize, again, that this is a consequence of the assumption that the Principal is not trying to learn/infer something from the Agent's actions.} If optimal efforts are not unique, then one has to specify which effort levels $\{A_t\}$ (the Principal recommends) the Agent implements in order to carry out the above recursion. This is why in the PA literature one often calls such effort levels {\sl recommended effort levels} and the triple $(\Theta,\{\beta_t\},\{A_t\})$ incentive compatible.
\end{remark}

\subsection{Main results}\label{main_results}
\quad

In this section we summarize the main results of our paper. We start with the following theorem that makes our formal derivations of the Agent's and Principal's optimal wealth precise. 
{It states that if the Principal's and the Agent's conditional one-step optimization problems have solutions, then the dynamic contracting problem has a solution that can be obtained out of these.} 
The proof is given in Appendix C.

\begin{theorem}
\label{master}
Assume that the recursions \eqref{recHbas} and \eqref{rechas} admit a solution with the essential suprema attained at each time $t$. Then the Principal's optimal utility at time $t=0$ equals $W_0-R+h_0$. Further, calling $(\beta_t,A_t,\Gamma_{t+1})_{t\in\T}$ the maximizers attaining $h$ in \eqref{rechas}, and defining 
%
$$\Theta =\Theta(P_{0:T}):= \sum_{0\leq t< T} \left[ \Gamma_{t+1} - U^a_t(\Gamma_{t+1}) + c_t(A_t) \right],$$ 
the contract 
$$S=\left\{\bar{A}\mapsto R + \Theta(P_{0:T})+\sum \beta_t\left [\Delta W_{t+1}^{\bar{A}}{-A_{t}\Delta\tilde{P}_{t+1}}\right]\right\},$$
 is optimal for the Principal, among those satisfying incentive compatibility and individual rationality constraints. The associated optimal effort for the Agent is  $A$ and his optimal wealth will be $R$. 
\end{theorem}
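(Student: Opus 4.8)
The plan is to prove two things separately: an \emph{upper bound}, that every incentive-compatible and individually rational contract yields the Principal time-$0$ utility at most $W_0-R+h_0$ with $h_0$ the solution of \eqref{rechas}; and \emph{attainment}, that the contract $S$ in the statement is admissible and realizes this value. Both parts run on the same mechanism, collapsing nested utility evaluations into one-step recursions by repeatedly invoking translation invariance, the tower and local properties, and monotonicity of $U^a_t$ and $U^p_t$. A first preparatory step is to make \eqref{recHbas} rigorous: for a fixed contract $(\Theta,\{\beta_t\})$, writing $S(A)-C(A)=\Theta+\sum_{t<T}[\beta_tA_t\Delta\tilde{P}_{t+1}-c_t(A_t)]$ and setting $V_t:=U^a_t(\Theta+\sum_{s\ge t}[\beta_sA_s\Delta\tilde{P}_{s+1}-c_s(A_s)])$, repeated use of the tower property and translation invariance (pushing the $\F_{t+1}$-measurable $\beta_tA_t\Delta\tilde{P}_{t+1}$ out of $U^a_{t+1}$ and the $\F_t$-measurable $-c_t(A_t)$ out of $U^a_t$) gives $V_t=U^a_t(V_{t+1}+\beta_tA_t\Delta\tilde{P}_{t+1})-c_t(A_t)$ with $V_T=\Theta$, whence $V_t\le H_t$ for every effort by downward induction on the monotonicity of $U^a_t$; in particular $\esssup_A U^a_0(S(A)-C(A))\le H_0$. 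If the contract is incentive-compatible, the maximizers $A^*_t$ of \eqref{recHbas} exist and are unique (Assumption \ref{estconv} and conditional concavity of $U^a$ make the objective strictly concave in $A$); inserting them makes the recursion hold with equalities, so $\esssup_A U^a_0(S(A)-C(A))=H_0$, attained at $A^*$, and optimality for the Principal forces $H_0=R$ by the remark following \eqref{recHbas}. I would also record the key reformulation: with $\Gamma^*_{t+1}:=\beta_tA^*_t\Delta\tilde{P}_{t+1}+H_{t+1}$ as in \eqref{covG}, one has $H_{t+1}+\beta_ta\Delta\tilde{P}_{t+1}=\Gamma^*_{t+1}+\beta_t(a-A^*_t)\Delta\tilde{P}_{t+1}$, so the maximality of $A^*_t$ in \eqref{recHbas} is \emph{exactly} the condition $(A^*_t,\Gamma^*_{t+1})\in\C_t(\beta_t)$.

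For the upper bound, fix such a contract, let $A^*$ be the Agent's unique optimal effort, and assume --- as in Remark \ref{blabla} --- that the Principal knows $A^*$ as a function of the price path, so that $\F^{A^*}=\F$ along it and $U^p$ may legitimately be applied to the price-adapted quantities below. Using $\Theta=H_t+\sum_{s\ge t}\Delta H_{s+1}$ together with \eqref{flow}, translation invariance and time consistency --- this is the display preceding \eqref{intermh} --- one gets $U^p_0(W^{A^*}_T-S(A^*))=W_0-H_0+h_0(A^*,\beta)$, which is $\le W_0-R+h_0(A^*,\Gamma^*)$ since $H_0\ge R$ by individual rationality and $h_0(A^*,\beta)=h_0(A^*,\Gamma^*)$ under \eqref{covG}, where $h_t(A^*,\Gamma^*)$ obeys \eqref{intermhG}. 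A backward induction then gives $h_t(A^*,\Gamma^*)\le h_t$: both sides vanish at $t=T$, and in the inductive step monotonicity of $U^p_t$ bounds the right-hand side of \eqref{intermhG} above by the value of the objective in \eqref{rechas} at the feasible point $(\beta_t,A^*_t,\Gamma^*_{t+1})$ --- feasible precisely by the reformulation at the end of the previous paragraph --- which is itself at most $h_t$. Hence the Principal's time-$0$ utility is at most $W_0-R+h_0$.

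For attainment, let $(\beta_t,A_t,\Gamma_{t+1})_{t\in\T}$ be maximizers in \eqref{rechas} and set $\Theta=\Theta(P_{0:T})$ as in the statement; that $\Theta\in L^0(\F_T)=L^0(\sigma(P_{0:T}))$ is immediate since each summand is $\F_{t+1}$-measurable. The contract $S$ then has standard form $(\Theta',\{\beta_t\})$ with $\Theta':=R+\Theta-\sum_{t<T}\beta_tA_t\Delta\tilde{P}_{t+1}$. The heart of the argument is to feed $\Theta'$ back into \eqref{recHbas} and check it regenerates the data $(\beta_t,A_t,\Gamma_{t+1})$: by backward induction one rewrites $H_{t+1}+\beta_ta\Delta\tilde{P}_{t+1}$ as an $\F_t$-measurable term plus $\Gamma_{t+1}+\beta_t(a-A_t)\Delta\tilde{P}_{t+1}$, pulls the former out of $U^a_t$, and uses $(A_t,\Gamma_{t+1})\in\C_t(\beta_t)$ to recognize $A_t$ as the unique maximizer, obtaining $H_t=R+\sum_{s<t}[\Gamma_{s+1}-\beta_sA_s\Delta\tilde{P}_{s+1}-U^a_s(\Gamma_{s+1})+c_s(A_s)]$ for every $t$, and in particular $H_0=R$. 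Thus $S$ is incentive-compatible and individually rational, the Agent's unique optimal effort is the recommended $A$, and his optimal value $H_0$ equals $R$ --- the ``optimal wealth'' of the statement. Since the Agent plays $A$, we have $\Delta W^A_{t+1}=A_t\Delta\tilde{P}_{t+1}$, so $S(A)=R+\Theta(P_{0:T})$ is a pure lump-sum payment and $W^A_T-S(A)=W_0-R+\sum_{t<T}[A_t\Delta\tilde{P}_{t+1}-\Gamma_{t+1}+U^a_t(\Gamma_{t+1})-c_t(A_t)]$; a backward induction parallel to that of the upper bound --- tower property, translation invariance to extract the $\F_t$-measurable pieces $U^a_t(\Gamma_{t+1})-c_t(A_t)$, and now the fact that $(\beta_t,A_t,\Gamma_{t+1})$ \emph{attains} the essential supremum in \eqref{rechas} --- yields $U^p_0(W^A_T-S(A))=W_0-R+h_0$, matching the upper bound, so $S$ is optimal and the Principal's optimal time-$0$ utility equals $W_0-R+h_0$. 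The one genuinely delicate point I anticipate is exactly this inverse bookkeeping: checking that the particular $\Theta$ produced by \eqref{rechas}, once reinserted into \eqref{recHbas}, makes the incentive constraints $\C_t(\beta_t)$ bind at the prescribed $A_t$ and forces $H_0=R$ automatically --- together with the care needed around the information asymmetry, namely that all the identities above are valid only because everything is evaluated along the recommended effort, where $\F^A=\F$.
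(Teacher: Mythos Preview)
Your proposal is correct and follows essentially the same approach as the paper's proof in Appendix~C: both arguments make the recursions \eqref{recHbas} and \eqref{rechas} rigorous via backward induction using translation invariance, the tower property, and monotonicity, then verify that the specific contract built from the maximizers of \eqref{rechas} is incentive-compatible with $H_0=R$ and attains the Principal's upper bound $W_0-R+h_0$. Your explicit split into ``upper bound'' and ``attainment'' is slightly cleaner organizationally than the paper's more linear presentation, but the substance --- including the key bookkeeping that feeding $\Theta'$ back into \eqref{recHbas} yields $H_t=R+\sum_{s<t}[\Gamma_{s+1}-\beta_sA_s\Delta\tilde{P}_{s+1}-U^a_s(\Gamma_{s+1})+c_s(A_s)]$ --- is identical.
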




We now define an auxiliary unconstrained version of the optimization problem in \eqref{rechas}, and prove that if such a problem is well-posed, it yields a at time $t$ a solution to the original one-step problem, and the corresponding $\beta_t = 1$ is optimal. {This opens the way to our main result, Theorem \ref{casodina}}. The technical importance of this is that we may dispense with the non-convex sets $\C_t$, making the incentive-compatibility constraint much more tractable. {Economically, this indicates that the first-best solution is optimal if it exists: 
at any point in time $t \in \T$ both parties share the ``aggregate endowment'' given by the Principal's utility from future income $h_{t+1}$ plus gains from trading $A \Delta \tilde{P}_{t+1}$ so as to maximize aggregate utility. }

\begin{proposition}
\label{reduc}
Assume that the following problem is finite and attainable:
\begin{equation}
\label{PI}
\Sigma\hspace{3pt}:=\esssup\limits_{(A,\Gamma)\in [L^0(\F_t)]^N\times L^0(\F_{t+1}) }  U_t^a( \Gamma ) - c_t(A)  + U_t^p\left(h_{t+1}+  A\Delta\tilde{P}_{t+1} - \Gamma   \right).
\end{equation}
Then any maximizer $\left(\hat{A},\hat{\Gamma}\right)$ belongs to the set $\C_t(1)$ and therefore 
\begin{align*} 
\Sigma \,\,\, &=  \esssup\limits_{ \substack{{(\beta,A,\Gamma)} \\ {(A,\Gamma)\in \C_t(\beta)}}} U_t^a( \Gamma ) - c_t(A)  + U_t^p\left({h_{t+1}}+   A\Delta\tilde{P}_{t+1} - \Gamma    \right).
\end{align*}

\end{proposition}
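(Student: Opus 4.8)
The plan is to show that any maximizer $(\hat A,\hat\Gamma)$ of the unconstrained problem $\Sigma$ automatically satisfies the incentive-compatibility constraint $\left(\hat A,\hat\Gamma\right)\in\C_t(1)$; once this is established, the second claim follows at once. Indeed, for every $\beta$ and every $(A,\Gamma)\in\C_t(\beta)$, the pair $(A,\Gamma)$ lies in the feasible set $[L^0(\F_t)]^N\times L^0(\F_{t+1})$ of the unconstrained problem, so the constrained supremum is $\le\Sigma$; conversely, taking $\beta=1$ together with the maximizer $(\hat A,\hat\Gamma)$ — which, once we know it lies in $\C_t(1)$, is admissible in the constrained problem — shows the constrained supremum is $\ge\Sigma$. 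Hence the two suprema coincide and $\beta_t=1$ is an optimal choice.

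So the heart of the matter is the membership $(\hat A,\hat\Gamma)\in\C_t(1)$. Spelling out the definition of $\C_t(1)$, I must verify that for every $\bar A\in[L^0(\F_t)]^N$,
\[
U^a_t(\hat\Gamma)-c_t(\hat A)\;\ge\;U^a_t\!\left(\hat\Gamma+[\bar A-\hat A]\,\Delta\tilde P_{t+1}\right)-c_t(\bar A).
\]
The idea is to feed a competing pair into the unconstrained problem and use optimality of $(\hat A,\hat\Gamma)$. Fix $\bar A$ and set $\tilde\Gamma:=\hat\Gamma+[\bar A-\hat A]\,\Delta\tilde P_{t+1}$, which lies in $L^0(\F_{t+1})$. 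Then $(\bar A,\tilde\Gamma)$ is admissible, and the key observation is that the \emph{Principal's} term is left unchanged by this substitution: $h_{t+1}+\bar A\,\Delta\tilde P_{t+1}-\tilde\Gamma=h_{t+1}+\hat A\,\Delta\tilde P_{t+1}-\hat\Gamma$, so $U^p_t$ evaluated at the two arguments is literally the same random variable. Therefore the inequality $\Sigma\ge$ (value at $(\bar A,\tilde\Gamma)$), after cancelling the common $U^p_t$ term, reduces exactly to the displayed incentive inequality. Since $\bar A$ was arbitrary, $(\hat A,\hat\Gamma)\in\C_t(1)$.

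One technical point deserves care, and I expect it to be the only real obstacle: the essential suprema are over uncountable families and must be compared $\Prob$-a.s., so the ``plug in a competitor'' step has to be done correctly in the conditional/essential-supremum sense. Concretely, fixing a single $\bar A$ and using that $(\hat A,\hat\Gamma)$ \emph{attains} $\Sigma$ gives the inequality for that $\bar A$ outside a null set depending on $\bar A$; to conclude membership in $\C_t(1)$, which quantifies over all $\bar A\in[L^0(\F_t)]^N$ simultaneously, one invokes the standard local-property / exhaustion argument for essential suprema (as used throughout the conditional-analysis framework of Appendix A): it suffices to check the inequality against a countable family, or equivalently to note that the defining inequality of $\C_t(1)$ is itself stable under the pasting of $\F_t$-measurable competitors, so a single well-chosen $\bar A$ built by pasting handles all of them. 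Everything else — admissibility of $\tilde\Gamma$, the cancellation of the $U^p_t$ terms, and the two-way comparison of suprema — is routine and uses only translation invariance and the definitions already in place.
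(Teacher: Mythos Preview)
Your proof is correct and follows exactly the same route as the paper's: for a competitor $\bar A$ set $\tilde\Gamma=\hat\Gamma+[\bar A-\hat A]\Delta\tilde P_{t+1}$, observe that the $U^p_t$-term cancels, and read off the incentive inequality from optimality of $(\hat A,\hat\Gamma)$. The null-set issue you flag is not a genuine obstacle: the definition of $\C_t(1)$ only requires the inequality to hold a.s.\ \emph{for each fixed} $\bar A\in[L^0(\F_t)]^N$ (with the exceptional set allowed to depend on $\bar A$), which is exactly what your argument delivers, so the paper simply omits that discussion.
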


\begin{proof}
Let $(\hat{A},\hat{\Gamma})$ be a maximizer for \eqref{PI}. For arbitrary $A$, define $\Gamma=\hat{\Gamma}+(A-\hat{A})\Delta\tilde{P}$. Plugging in that $(\hat{A},\hat{\Gamma})$ is better than $(A,\Gamma)$ for \eqref{PI}, we see that the terms involving $U^p$ cancel out and so: 
\begin{equation} 
U^a_t(\hat{\Gamma})-c_t(\hat{A}) \geq U^a_t(\hat{\Gamma}+(A-\hat{A})\Delta\tilde{P})-c_t(A). \label{interm}
\end{equation}
This means that $\left(\hat{A},\hat{\Gamma}\right)\in\C_t(1)$ so the values of the constrained and unconstrained problems coincide.
\end{proof}

\begin{remark}
The previous proof crucially relies on the fact that contracts are {linear in wealth increments}. Indeed by varying $\hat{\Gamma}$ in directions of the form $(A-\hat{A})\Delta\tilde{P}$ and by linearity of contracts the term in the objective function involving Principal's utility cancel out, making it possible to compare the values of Agent's utilities. 
%
\end{remark}
In Section \ref{GenResL0} we shall, therefore, turn our attention to the question of attainability of the unconstrained problem. For the reader's convenience we state in this section our main results therein and show how they apply to specific classes of examples. The proof of the following result is given in Section \ref{PrinProb}. {The technical conditions will be easily satisfied by the {utility functionals listed in Example \ref{ejemplificar}.}}

%


\begin{theorem}
\label{teogeralasim}
Suppose at time $t\in\T$ that
%
%
%
$$K^{p}_t:=\esssup\limits_{Z\in \mathcal{W}_t\cap [1-\epsilon,1+\epsilon]}\alpha_t^{p}(Z) \in L^0(\F_t) \hspace{10pt}\mbox{ and }\hspace{10pt}K^{a}_t:=\esssup\limits_{Z\in \mathcal{W}_t\cap [1-\epsilon,1+\epsilon]}\alpha_t^{a}(Z) \in L^0(\F_t),$$
 for some $\epsilon\in L^0(\F_t)\cap(0,1]$. Then, if $h_{t+1}\in dom(U^p_t)$ and $\lim\limits_{\vert a \vert \rightarrow \infty}\frac{c_t(a)}{\vert a\vert}=+\infty$, 
the random variable {$\Sigma$ defined in \eqref{PI} belongs to $L^0(\F_t)$, satisfies}
$$\Sigma =\esssup\limits_{(A,\Gamma)\in [L^0(\F_t)]^N\times L^1_{\F_{t}}(\F_{t+1}) }   U_t^a( \Gamma ) - c_t(A)  + U_t^p\left(h_{t+1}+  A\Delta\tilde{P}_{t+1} - \Gamma   \right)
,$$ 
and the essential supremum is attained. In particular $\beta_t = 1$ is optimal at time $t \in\T$.
\end{theorem}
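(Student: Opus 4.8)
The plan is to show that the supremum $\Sigma$ in \eqref{PI} can be restricted to a "bounded" set of candidate maximizers, and then extract a maximizer by a conditional Komlós-type argument, using the $L^0$–$L^1$ upper semicontinuity from Assumption \ref{suposUasymm}. First, I would use the variational representations \eqref{repvariacional}–\eqref{repvariacional2} from Proposition \ref{vorgeschmack} together with the hypothesis $K^p_t, K^a_t \in L^0(\F_t)$ to derive an a priori bound on the objective functional: for any feasible $(A,\Gamma)$, plugging the particular dual element $Z\equiv 1 \in \mathcal{W}_t\cap[1-\epsilon,1+\epsilon]$ into the essinf gives $U^a_t(\Gamma)\le \E[\Gamma\vert\F_t]+\alpha^a_t(1)$ and similarly for $U^p_t$, so that the objective is bounded above by an $\F_t$-measurable quantity that does not depend on $(A,\Gamma)$ beyond $h_{t+1}\in dom(U^p_t)$ and $\E[A\Delta\tilde P_{t+1}\vert\F_t]$-type terms. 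Conversely, the choice $A=0$, $\Gamma = 0$ shows $\Sigma \ge U^a_t(0) - c_t(0) + U^p_t(h_{t+1}) > -\infty$ by properness and normalization, so $\Sigma$ is a well-defined element of $\overline{L}^0(\F_t)$ bounded above, hence in $L^0(\F_t)$ once finiteness is confirmed.

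The core step is to produce a maximizing sequence $(A_n,\Gamma_n)$ and control it. Using the superlinear growth $\lim_{|a|\to\infty} c_t(a)/|a| = +\infty$, I would argue that along any maximizing sequence the $A_n$ must stay conditionally bounded: if not, on a set of positive probability $c_t(A_n)$ would dominate the (at most linearly growing in $A_n$, via the $Z\equiv1$ bound on the $U^p_t$ term) gain, forcing the objective below $\Sigma$ on that set, contradicting maximality by the local property. With $\{A_n\}$ conditionally bounded one can, using a conditional version of Bolzano–Weierstrass / Komlós (presumably available in Appendix A), pass to forward convex combinations converging a.s. to some $\hat A \in [L^0(\F_t)]^N$. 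For the $\Gamma_n$ component: the upper bound on $U^a_t(\Gamma_n)$ combined with $\{X_-:X\in dom(U^a_t)\}\subset L^1_{\F_t}(\F_T)$ and the fact that $U^a_t(\Gamma_n)$ is bounded below along the maximizing sequence forces $\E[|\Gamma_n|\,\vert\,\F_t]$ to be conditionally bounded, i.e.\ $\{\Gamma_n\}$ is bounded in $L^1_{\F_t}(\F_{t+1})$; a conditional Komlós argument then yields forward convex combinations $\tilde\Gamma_n \to \hat\Gamma$ a.s. with $\hat\Gamma\in L^1_{\F_t}(\F_{t+1})$. Concavity of $U^a_t$, $U^p_t$ (the usual conditions) and convexity of $c_t$ ensure the convex combinations are still maximizing.

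Finally I would pass to the limit: $L^0$–$L^1$ upper semicontinuity of $U^a_t$ gives $\limsup U^a_t(\tilde\Gamma_n)\le U^a_t(\hat\Gamma)$, and likewise $\limsup U^p_t\big(h_{t+1}+\tilde A_n\Delta\tilde P_{t+1}-\tilde\Gamma_n\big)\le U^p_t\big(h_{t+1}+\hat A\Delta\tilde P_{t+1}-\hat\Gamma\big)$ — here one checks the argument sequence is $L^1_{\F_t}$-bounded, using $h_{t+1}\in dom(U^p_t)$, conditional boundedness of $\tilde A_n$, and $L^1_{\F_t}$-boundedness of $\tilde\Gamma_n$ — while continuity of $c_t$ handles the cost term. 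This yields $\Sigma \le U^a_t(\hat\Gamma) - c_t(\hat A) + U^p_t(h_{t+1}+\hat A\Delta\tilde P_{t+1}-\hat\Gamma) \le \Sigma$, so $(\hat A,\hat\Gamma)$ is a maximizer and $\Sigma\in L^0(\F_t)$. The identity of the two essential suprema (over $L^1_{\F_t}(\F_{t+1})$ vs.\ over all of $L^0(\F_{t+1})$) follows because the a priori upper bound shows any candidate with $U^a_t(\Gamma)>-\infty$ effectively has $\Gamma\in L^1_{\F_t}(\F_{t+1})$ up to a negligible loss, and the last claim $\beta_t=1$ is optimal is then immediate from Proposition \ref{reduc}. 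The main obstacle I anticipate is the conditional compactness/Komlós step for $\{\Gamma_n\}$: one must carefully justify that the dual bound forces $L^1_{\F_t}$-boundedness (not merely $L^0$-finiteness) and that the conditional Komlós lemma applies with $\F_t$-measurable random bounds rather than deterministic ones — this is exactly the kind of measurable-selection-free conditional analysis the paper sets up in Appendix A, so I would lean heavily on those results.
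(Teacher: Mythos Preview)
Your overall architecture matches the paper's proof: restrict to a super-level set, show it is conditionally bounded in the $(A,\Gamma)$ variables, extract a maximizing sequence via directedness, apply the Koml\'os-type Lemma \ref{teoDSenchulado}, and pass to the limit using $L^0$--$L^1$ upper semicontinuity and Proposition \ref{implicanciausc}. The bound on the $A$-component via superlinear growth of $c_t$ and the dual element $Z\equiv 1$ is exactly what the paper does.

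The genuine gap is in your argument for $L^1_{\F_t}$-boundedness of $\{\Gamma_n\}$, which you yourself flag as the main obstacle. Using only $Z\equiv 1$ in the variational representation gives $U^a_t(\Gamma_n)\le \E[\Gamma_n\vert\F_t]+\alpha^a_t(1)$ and the analogous $U^p_t$ bound, so you control $\E[\Gamma_n\vert\F_t]$ from both sides, but \emph{not} $\E[\vert\Gamma_n\vert\,\vert\,\F_t]$. The domain condition $\{X_-:X\in dom(U^a_t)\}\subset L^1_{\F_t}(\F_T)$ only tells you each $(\Gamma_n)_-$ is conditionally integrable, not that $\sup_n \E[(\Gamma_n)_-\vert\F_t]\in L^0(\F_t)$. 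The paper's resolution is precisely where the full hypothesis on $K^a_t,K^p_t$ enters (not just at $Z=1$): after centering so that $\E[\Gamma\vert\F_t]=0$, one takes the $\Gamma$-dependent dual elements
\[
Z^a = 1+\epsilon\bigl[\ind{\Gamma\le 0}-\Prob(\Gamma\le 0\vert\F_t)\bigr],\qquad
Z^p = 1+\epsilon\bigl[\ind{\Gamma>0}-\Prob(\Gamma>0\vert\F_t)\bigr],
\]
which lie in $\mathcal{W}_t\cap[1-\epsilon,1+\epsilon]$ and satisfy $\E[Z^a\Gamma\vert\F_t]=-\epsilon\E[\Gamma_-\vert\F_t]$, $\E[Z^p\Gamma\vert\F_t]=\epsilon\E[\Gamma_+\vert\F_t]$. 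Plugging these into the variational representations and using $\alpha^a_t(Z^a)\le K^a_t$, $\alpha^p_t(Z^p)\le K^p_t$ yields $V(A,\Gamma)\le \tilde K - \epsilon\,\E[\vert\Gamma\vert\,\vert\,\F_t]$ on the super-level set, which is the missing uniform $L^1_{\F_t}$ bound. Without this idea your Koml\'os step for $\Gamma_n$ is unjustified.
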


It is well-known that if the utility functionals originate from a common base functional, more explicit treatments of equilibrium/risk-sharing problems become available (as in \cite{BarKa,Borch,Equilibrium}). In the same spirit we have the following result, {stating that in that case the Principal and the Agent share the ``aggregate endowment''  according to their risk attitudes.} The proof is given in Section \ref{PrinProb}. 

\begin{theorem}[Base Preferences]
\label{teoUsimilar} 
Suppose that there exists non-negative numbers $\gamma^a,\gamma^p$ and base preference functionals $\{U_t\}$ such that 
$$U^l_t(\cdot) := \frac{1}{\gamma^l} U_t\left(\gamma^l \cdot \right) \quad (l=a,p).$$
Further assume that 
$$\frac{\gamma^a\gamma^p}{\gamma^a+\gamma^p}h_{t+1}\in dom(U_t) \hspace{6pt}\mbox{ and }\hspace{6pt} \lim\limits_{\vert a \vert \rightarrow \infty}\frac{c_t(a)}{\vert a\vert}=+\infty.$$
Then Principal's one-step problem (at time $t$) has as solution: $$\beta =  1 \mbox{ and }\Gamma^* = \frac{\gamma^p}{\gamma^a+\gamma^p} (h_{t+1}+A^*\Delta \tilde{P}_{t+1}),$$ for the optimal action $A^*$ of the Agent, which attains:
$$\esssup_A \left\{ -c_t(A) + \frac{\gamma^a+\gamma^p}{\gamma^a\gamma^p}U_t\left(\frac{\{\gamma^a\gamma^p\}[h_{t+1}+A\Delta \tilde{P}_{t+1}]}{\gamma^a+\gamma^p}\right) \right\} .$$

\end{theorem}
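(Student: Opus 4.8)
The plan is as follows. Throughout $\gamma^a,\gamma^p>0$ (this is implicit in the statement, which divides by $\gamma^a\gamma^p$); set $\kappa:=\tfrac{\gamma^a\gamma^p}{\gamma^a+\gamma^p}$ and $\lambda:=\tfrac{\gamma^p}{\gamma^a+\gamma^p}\in(0,1)$, and introduce the ``aggregate'' functional $U^{\mathrm{agg}}_t(\cdot):=\tfrac1\kappa\,U_t(\kappa\,\cdot)$. Being a rescaling of $U_t$ (itself a rescaling of $U^a_t$ and $U^p_t$), it inherits the usual conditions and Assumption~\ref{suposUasymm}, with $X\in dom(U^{\mathrm{agg}}_t)$ iff $\kappa X\in dom(U_t)$; in particular $h_{t+1}\in dom(U^{\mathrm{agg}}_t)$ by hypothesis.

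\textbf{Step 1: the inner risk--sharing problem.} First I would fix $A\in[L^0(\F_t)]^N$, put $Y:=h_{t+1}+A\Delta\tilde{P}_{t+1}$, and identify $\esssup_{\Gamma\in L^0(\F_{t+1})}\big\{U^a_t(\Gamma)+U^p_t(Y-\Gamma)\big\}$. Substituting $U^l_t(\cdot)=\tfrac1{\gamma^l}U_t(\gamma^l\cdot)$: at $\Gamma^*:=\lambda Y$ one gets $\gamma^a\Gamma^*=\gamma^p(Y-\Gamma^*)=\kappa Y$, so the objective there equals $\big(\tfrac1{\gamma^a}+\tfrac1{\gamma^p}\big)U_t(\kappa Y)=U^{\mathrm{agg}}_t(Y)$; and for an arbitrary $\Gamma$, writing $\kappa Y=\lambda(\gamma^a\Gamma)+(1-\lambda)\big(\gamma^p(Y-\Gamma)\big)$ and invoking $\F_t$--conditional concavity of $U_t$ gives $U_t(\kappa Y)\ge\lambda U_t(\gamma^a\Gamma)+(1-\lambda)U_t\big(\gamma^p(Y-\Gamma)\big)$, i.e.\ (multiplying by $\tfrac1\kappa$) $U^{\mathrm{agg}}_t(Y)\ge U^a_t(\Gamma)+U^p_t(Y-\Gamma)$. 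Hence the inner essential supremum equals $U^{\mathrm{agg}}_t(Y)$ and is attained at $\Gamma^*=\lambda Y$. The delicate point of this step is keeping every quantity well defined in $\overline{L}^0$: the hypothesis $\kappa h_{t+1}\in dom(U_t)$ and the standing assumption $\E[P_{t+1}\vert\F_t]<\infty$ (so that $A\Delta\tilde{P}_{t+1}\in L^1_{\F_t}(\F_{t+1})$), together with translation invariance and properness, ensure $U^{\mathrm{agg}}_t(Y)\in\overline{L}^0(\F_t)$ and rule out indeterminate sums; one localizes via the local property where convenient.

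\textbf{Step 2: the outer problem over the effort, and attainment.} Feeding Step~1 into \eqref{PI}: the upper bound shows that for every $(A,\Gamma)$ the integrand is $\le -c_t(A)+U^{\mathrm{agg}}_t(h_{t+1}+A\Delta\tilde{P}_{t+1})$, while the choice $\Gamma=\lambda(h_{t+1}+A\Delta\tilde{P}_{t+1})$ gives the reverse inequality for the suprema; therefore
\[
\Sigma=\esssup_{A\in[L^0(\F_t)]^N}\Big\{-c_t(A)+U^{\mathrm{agg}}_t\big(h_{t+1}+A\Delta\tilde{P}_{t+1}\big)\Big\},
\]
which is precisely the outer problem displayed in the statement. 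This is an Agent-type one-step problem of the form \eqref{recHbas} (with $\beta_t=1$, continuation payoff $h_{t+1}$, utility $U^{\mathrm{agg}}$), so its finiteness and attainability follow from $\lim_{|a|\to\infty}c_t(a)/|a|=+\infty$ and $h_{t+1}\in dom(U^{\mathrm{agg}}_t)$ by the same boundedness--plus--conditional--Koml\'os technique used in Section~\ref{GenResL0} to establish Theorem~\ref{teogeralasim}. Letting $A^*$ be a maximizer and $\Gamma^*:=\lambda(h_{t+1}+A^*\Delta\tilde{P}_{t+1})$, the pair $(A^*,\Gamma^*)$ attains $\Sigma$ in \eqref{PI}; Proposition~\ref{reduc} then upgrades this unconstrained optimum to a solution of the constrained Principal problem \eqref{rechas} with $\beta_t=1$, which is the assertion.

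\textbf{Main obstacle.} The conceptual heart is Step~1, which is short once one spots the convex combination realizing $\kappa Y$; its only subtlety is the $\pm\infty$ and conditional-integrability bookkeeping. The real work sits in Step~2, namely importing the attainment result: one must verify that coercivity of $c_t$ forces any maximizing sequence $\{A_n\}$ to be conditionally bounded ($\esssup_n|A_n|\in L^0(\F_t)$, using that $a\mapsto U^{\mathrm{agg}}_t(h_{t+1}+a\Delta\tilde{P}_{t+1})$ grows at most linearly in $|a|$ and $h_{t+1}\in dom$), pass to an a.s.\ convergent conditional convex combination via Koml\'os, and conclude with the $L^0$--$L^1$ upper semicontinuity of $U^{\mathrm{agg}}_t$ — all of which is furnished by Section~\ref{GenResL0}.
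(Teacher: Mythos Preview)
Your proposal is correct and follows essentially the same route as the paper: solve the inner risk-sharing problem by concavity and the explicit candidate $\Gamma^*=\lambda Y$, collapse \eqref{PI} to a one-variable problem in $A$, invoke the Agent-type attainment result, and finish with Proposition~\ref{reduc}. The only minor discrepancy is that the paper cites Lemma~\ref{attaing} for Step~2, whose attainment goes through the conditional Bolzano--Weierstrass machinery of Theorem~\ref{CondOpt} on the $L^0(\F_t)$-bounded set $\Lambda$ rather than a Koml\'os argument, but this does not affect the validity of your argument.
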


In light of Theorem \ref{master}, the two previous results yield a solution to the dynamic problem, as explained in the following proposition. 

\begin{proposition}
\label{Genmetateo}
If the assumptions of Theorem \ref{teogeralasim} or Theorem \ref{teoUsimilar} hold for every $t \in\T$,
then the respective one-step problems have a solution and glueing them together yields a solution for the respective dynamic problems, whereby $\beta_t = 1$ for every $t \in\T$ is optimal. 
\end{proposition}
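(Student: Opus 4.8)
The plan is to assemble Proposition \ref{Genmetateo} directly from the machinery already in place, treating it essentially as a bookkeeping statement that glues the one-step results into the dynamic recursion \eqref{rechas}. First I would argue by backwards induction on $t$, starting from the terminal condition $h_T = 0$, which trivially lies in every relevant domain. Suppose inductively that the recursion \eqref{rechas} has been solved from $T$ down to $t+1$, producing in particular the random variable $h_{t+1}$; the only thing one must check to continue is that $h_{t+1}$ satisfies the domain hypothesis required at time $t$ by whichever of Theorem \ref{teogeralasim} or Theorem \ref{teoUsimilar} is being invoked (namely $h_{t+1}\in dom(U^p_t)$ in the first case, or $\frac{\gamma^a\gamma^p}{\gamma^a+\gamma^p}h_{t+1}\in dom(U_t)$ in the second). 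Granting that, the cited theorem tells us that the unconstrained problem \eqref{PI} at time $t$ is finite and attained, hence lies in $L^0(\F_t)$.

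The second step is to pass from the unconstrained problem to the constrained one appearing in \eqref{rechas}. This is exactly the content of Proposition \ref{reduc}: any maximizer $(\hat A_t,\hat\Gamma_{t+1})$ of \eqref{PI} automatically lies in $\C_t(1)$, so that the value $\Sigma$ of the unconstrained problem coincides with the value $h_t$ of the constrained one and the choice $\beta_t = 1$ is optimal at time $t$. Thus $h_t\in L^0(\F_t)$ and the essential supremum in \eqref{rechas} is attained, with maximizers $(\beta_t,A_t,\Gamma_{t+1}) = (1,\hat A_t,\hat\Gamma_{t+1})$; in the base-preference case one moreover has the explicit form $\Gamma_{t+1}^* = \frac{\gamma^p}{\gamma^a+\gamma^p}(h_{t+1}+A_t^*\Delta\tilde P_{t+1})$ from Theorem \ref{teoUsimilar}. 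This closes the induction and shows \eqref{rechas} admits a solution with all essential suprema attained.

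Finally I would invoke Theorem \ref{master}: since both recursions \eqref{recHbas} and \eqref{rechas} now admit solutions with the essential suprema attained at each $t$ --- the solvability of \eqref{recHbas} being subsumed, since incentive compatibility of the constructed contract is precisely what Theorem \ref{master} extracts from the attained maximizers of \eqref{rechas}, the Agent's optimal effort being the component $A_t$ --- the dynamic contracting problem has the solution described in Theorem \ref{master}, with Principal's optimal utility $W_0 - R + h_0$ and the contract built from the $(\beta_t,A_t,\Gamma_{t+1})$. Since $\beta_t = 1$ was shown optimal at each stage, the assertion that $\beta_t=1$ for every $t\in\T$ is optimal follows.

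The only genuine subtlety --- and the step I expect to require a line of care rather than a routine citation --- is verifying that the propagated quantity $h_{t+1}$ stays in the required domain so that the hypotheses of Theorem \ref{teogeralasim} or \ref{teoUsimilar} remain applicable at the next step down. One must check that solving the one-step problem at time $t+1$ produces an $h_{t+1}$ that is not merely in $L^0(\F_{t+1})$ but in $dom(U^p_t)$ (resp. has the scaled version in $dom(U_t)$); in the base-preference setting this is transparent from the explicit formula for $h_{t+1}$ together with translation invariance and the domain assumptions on the $c_s$, while in the general case one relies on the $L^1$-integrability built into Assumption \ref{suposUasymm} and the fact that $\Sigma\in L^0(\F_t)$ already controls the relevant negative part. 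I would state this domain-propagation as the one substantive verification and present the rest as the straightforward induction above.
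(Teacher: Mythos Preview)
Your proof plan is correct and follows exactly the same route as the paper, which in fact declares the proof of this proposition ``obvious'': backwards induction along \eqref{rechas}, invoking Theorem \ref{teogeralasim} or \ref{teoUsimilar} at each step to attain the unconstrained one-step problem, then Proposition \ref{reduc} to identify the constrained value with the unconstrained one and get $\beta_t=1$, and finally Theorem \ref{master} to assemble the dynamic solution.

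One clarification on what you flag as the ``only genuine subtlety'': the domain condition $h_{t+1}\in dom(U^p_t)$ (resp.\ $\frac{\gamma^a\gamma^p}{\gamma^a+\gamma^p}h_{t+1}\in dom(U_t)$) is part of the \emph{hypothesis} of Theorem \ref{teogeralasim} (resp.\ \ref{teoUsimilar}), and the proposition assumes these hypotheses hold at every $t$. So there is nothing to verify inside the proof of the proposition itself --- the paper makes this explicit immediately after stating it, remarking that ``in applying this result, several technical conditions need be checked \emph{a-posteriori}.'' Your sketch of how one might propagate the domain condition (via an upper bound on $h_{t+1}$ and monotonicity) is essentially the argument the paper carries out separately, in the proof of Theorem \ref{casodina}, for the OCE/TVAR class with bounded prices. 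You are not wrong to think about it, but it belongs to the application of the proposition, not to its proof.
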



{The proof of the preceding proposition is obvious. In applying this result, several technical conditions need be checked {\sl a-posteriori}. As shown by the following theorem, these conditions are satisfied {\sl a-priori} for entropic and TVAR families and for OCE utilities (Example \ref{ejemplificar}) under mild conditions.} The proof is given in Appendix B.

\begin{theorem}
\label{casodina}
Suppose that prices are bounded ($0<p_-\leq P_t^i\leq p_+$ a.s.) and that both $U^a$ and $U^p$ are constructed by pasting of optimized certainty equivalent functionals:
\begin{align*}
X\in L^1_{\F_{t}}(\F_{t+1})&\mapsto U^a_t(X)=\esssup_{s\in\R}\{s-\E[H^a_t(s-X)|\F_t]\} \\
X\in L^1_{\F^A_{t}}(\F^A_{t+1})&\mapsto U^p_t(X)=\esssup_{s\in\R}\{s-\E[H^p_t(s-X)|\F^A_t]\},
\end{align*} 
{for which the following conditions hold for each $t$:
\begin{itemize}
\item $1\in int(dom(H^a_t))\cap int(dom(H^p_t))$, 
\item $H^a_t$ and $H^p_t$ are lower-bounded.
\end{itemize}}
Finally assume that $\lim_{|a|\to\infty}\frac{c_t(a)}{|a|}=\infty$ for every $t$. 
Then our dynamic Principal-Agent problem has a solution whereby the Agent keeps the output wealth and the Principal is given a possibly path-dependent derivative. 

\end{theorem}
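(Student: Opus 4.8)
The plan is to obtain the theorem from Proposition \ref{Genmetateo}: it suffices to verify that the hypotheses of Theorem \ref{teogeralasim} hold at every $t\in\T$. Once that is done, the one-step solutions glue into a solution of the dynamic problem with $\beta_t=1$ throughout, and substituting $\beta_t\equiv 1$ into the contract of Theorem \ref{master} (after undoing the change of variables \eqref{covG}) exhibits the optimal contract as cash, plus the full wealth $W^A_T$ produced by trading, plus the path-functional $\Theta=\sum_t[\Gamma_{t+1}-U^a_t(\Gamma_{t+1})+c_t(A_t)]$, which is a (generally non-replicable) derivative on the price path $P_{0:T}$ --- this is the announced structure. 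The growth condition on the cost functions is among the assumptions, so what remains, and what I would establish by backward induction on $t$, is: (i) the conjugates $\alpha^a_t$ and $\alpha^p_t$ of Proposition \ref{vorgeschmack} are essentially bounded on a conditional ball $\mathcal{W}_t\cap[1-\epsilon,1+\epsilon]$ about the reference density $1$, for some $\epsilon\in L^0(\F_t)\cap(0,1]$; and (ii) $h_{t+1}\in dom(U^p_t)$.

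For (i), I would use that, since $U^a$ and $U^p$ are pastings of OCEs, translation invariance collapses the composition on $L^1_{\F_t}(\F_{t+1})$ to the single step $U^l_t(X)=\esssup_{s\in\R}\{s-\E[H^l_t(s-X)\vert\F_t]\}$ for $l\in\{a,p\}$. A conditional version of the convex-duality characterization of OCEs --- among the results on OCEs collected in Appendix B and built on Lemma \ref{ejemx2} --- identifies the conjugate of this step, in the sense of Proposition \ref{vorgeschmack}, as $\alpha^l_t(Z)=\E[(H^l_t)^*(Z)\vert\F_t]$ for $Z\in\mathcal{W}_t$, where $(H^l_t)^*$ is the deterministic Legendre transform. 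Since (by the interiority hypothesis on the generators, cf.\ Lemma \ref{ejemx2}) $1$ lies in the interior of $dom((H^l_t)^*)$, the convex function $(H^l_t)^*$ is bounded above by a deterministic constant $M^l_t$ on a deterministic interval $[1-\epsilon,1+\epsilon]$ with $\epsilon\in(0,1]$; hence $K^l_t=\esssup_{Z\in\mathcal{W}_t\cap[1-\epsilon,1+\epsilon]}\E[(H^l_t)^*(Z)\vert\F_t]\le M^l_t\in L^0(\F_t)$, which is exactly (i).

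For (ii), I would run the backward induction while additionally carrying the assertion that each $h_{t+1}$ is dominated by a deterministic constant $M_{t+1}$; this yields (ii) at once, since $H^p_t$ is increasing with $(H^p_t)^*(1)=0$, so a random variable $X$ with $\vert X\vert\le M$ satisfies $-M\le U^p_t(X)\le\E[X\vert\F_t]\le M$ and thus lies in $dom(U^p_t)$. The base case $h_T=0$ is trivial. For the step, if $h_{t+1}$ is bounded we may invoke Theorem \ref{teogeralasim} at time $t$ --- its hypotheses being (i), $h_{t+1}\in dom(U^p_t)$ and $\lim_{\vert a\vert\to\infty}c_t(a)/\vert a\vert=\infty$ --- obtaining that $\Sigma$ of \eqref{PI} equals $h_t\in L^0(\F_t)$ and that the essential supremum may be restricted to $[L^0(\F_t)]^N\times L^1_{\F_t}(\F_{t+1})$. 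On that set, bounding $U^a_t(\Gamma)\le\E[\Gamma\vert\F_t]$ and $U^p_t(h_{t+1}+A\Delta\tilde{P}_{t+1}-\Gamma)\le\E[h_{t+1}\vert\F_t]+A\,\E[\Delta\tilde{P}_{t+1}\vert\F_t]-\E[\Gamma\vert\F_t]$ (both from $H^l_t(u)\ge u$) makes the $\Gamma$-terms cancel, leaving $h_t\le c^*_t(\E[\Delta\tilde{P}_{t+1}\vert\F_t])+\E[h_{t+1}\vert\F_t]$; as $0<p_-\le P^i_t\le p_+$ renders $\E[\Delta\tilde{P}_{t+1}\vert\F_t]$ deterministically bounded, $c^*_t$ is a finite convex function hence bounded on bounded sets, and $\E[h_{t+1}\vert\F_t]\le M_{t+1}$, the right-hand side is dominated by a deterministic constant. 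The matching lower bound comes from evaluating \eqref{PI} at $(A,\Gamma)=(0,0)$: $h_t\ge -c_t(0)+U^p_t(h_{t+1})\ge -c_t(0)-M_{t+1}$. Hence $h_t$ is bounded, which closes the induction.

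With (i) and (ii) in force at every $t$, Theorem \ref{teogeralasim} applies throughout, Proposition \ref{Genmetateo} yields a solution of the dynamic problem with $\beta_t\equiv 1$, and the contract structure is read off from Theorem \ref{master} as in the first paragraph: the Agent receives cash plus the wealth he generates (net of the recommended holdings) plus the derivative $\Theta(P_{0:T})$, i.e.\ the Principal surrenders the trading output to the Agent against that derivative. I expect step (ii) to be the main obstacle: one must propagate boundedness --- and hence domain membership --- all the way along the backward recursion, and it is precisely there that boundedness of prices, superlinearity of the cost functions, and the lower bound on the OCE generators are used in combination. A minor point to dispose of along the way is that $U^p$ is a priori indexed by the Agent-dependent filtration $\F^A$; but along the optimal, price-adapted recursion of Remark \ref{blabla} the Principal's information may be taken to coincide with $\F$, so the $\F$-versus-$\F^A$ distinction does not affect any of the estimates above.
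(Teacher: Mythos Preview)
Your proof is correct and follows essentially the same route as the paper: verify the $K^{a,p}_t$ condition via the OCE conjugate bound (the paper's Remark \ref{oceobs} gives only the inequality $\alpha^l_t(Z)\leq\E[(H^l_t)^*(Z)\vert\F_t]$, which is all that is needed, so you need not assert equality), and establish $h_{t+1}\in dom(U^p_t)$ by backward-inducting a deterministic bound on $h_t$ from $U^l_t(X)\leq\E[X\vert\F_t]$, price boundedness and cost superlinearity. The paper carries only the upper bound on $h_t$ and invokes monotonicity for domain membership, whereas your two-sided bound is a slightly cleaner way to close the induction.
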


\begin{remark}
{In conjunction with Theorem \ref{master}, the previous result yields the economic interpretation we referred to in the introduction: the optimal contract is of the form ``cash plus a path-dependent derivative on the stock price process plus {performance w.r.t.\ }a benchmark portfolio''. As the derivative may not be replicable, this shows that the structure in \cite[Theorem 1]{Ou-Yang} need not hold. }
\end{remark}

A family of examples where we can provide explicitly the form of an optimal contract, recovering the results of \cite{Ou-Yang} in the continuous case is given in Section \ref{LagLag} below. It requires additional notation, though, so we postpone the statement of the result to Section 4. 

{
\begin{remark}\label{rem:simplification}
For simplicity and ease of exposition we took zero interest rates and $c_t=c_t(A_t)$. The case with non-null interest rates and/or $c(A,W)=\sum_t[c_t(A_t)+\gamma_t W_{t-1}]$ can be solved exactly in the same way, the only difference being that $\beta_t$ will not be constant (but remains deterministic) anymore. The qualitative structure of contracts and their interpretation remain the same however.
\end{remark}
}  

%

\section{General attainability results}
\label{GenResL0}

We prove in this section the attainability of the Agent's and Principal's one-step problems, and consequently, for the dynamic problem. 


\subsection{Agent's Problem}
\quad

We start with an abstract conditional optimization problem of which the Agent's one-step optimization problems are special cases. For a given pair of random variables $(X,\beta)\in L^0(\F_{t+1})\times L^0(\F_{t})$, let   
\begin{equation} \label{sup-problem}
\begin{split}
	G(t,X,\beta) &:=  \esssup\limits_{A\in L^0(\F_t)^N}\left\{-c_t(A) + U_t^a\left(X+ \beta A\Delta \tilde{P}_{t+1}  	\right) \right\} \\
	& =:   \esssup\limits_{A\in L^0(\F_t)^N} g_t(A) .
\end{split}
\end{equation}
Under the usual conditions $g_t$ is $\F_t-$concave, and hence stable (see Definition \ref{directorio2}). The key to the above optimization problem is to reduce it to an $L^0(\F_t)-$bounded set. 

\begin{lemma}
\label{attaing} 
Under the following condition, the essential supremum in (\ref{sup-problem}) is attained:
$$X \in  dom(U^a_t) \hspace{10pt}
\mbox{ and } \hspace{10pt}\lim\limits_{\vert a \vert \rightarrow \infty}\frac{c_t(a)}{\vert a\vert}=+\infty.$$ 
\end{lemma}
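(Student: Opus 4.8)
The plan is to show that $G(t,X,\beta)$, which is the supremum of the $\F_t$-concave functional $g_t(A) = -c_t(A) + U_t^a(X+\beta A\Delta\tilde P_{t+1})$ over $A\in L^0(\F_t)^N$, is attained by reducing the search to an $L^0(\F_t)$-bounded set and then applying a Komlos-type / conditional compactness argument. First I would dispose of trivial cases using the local property: on the event $\{\beta = 0\}$ the problem reduces to maximizing $-c_t(A)$, which by strict convexity (Assumption~\ref{estconv}) and superlinearity of $c_t$ has a unique minimizer of $c_t$, so we may restrict attention to the event $\{\beta\neq 0\}$ and absorb $\beta$ into $A$ if convenient, or simply carry it along. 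The key structural input is translation invariance and monotonicity of $U^a_t$, together with the variational/Fatou-type $L^0$-$L^1$ upper semicontinuity from Assumption~\ref{suposUasymm}.

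The heart of the argument is the a-priori bound. Using monotonicity and the bound $X+\beta A\Delta\tilde P_{t+1} \ge X - |\beta||A||\Delta\tilde P_{t+1}|$ (componentwise, with suitable absolute values), plus translation invariance to peel off an $\F_t$-measurable piece where possible, and the variational representation $U^a_t(Y)\ge \E[ZY|\F_t] + \alpha^a_t(Z)$ for a conveniently chosen $Z$ (e.g.\ $Z=1$, giving $U^a_t(Y)\ge \E[Y|\F_t] + \alpha^a_t(1) \ge \E[Y|\F_t]$ after normalization), I would get
\[
g_t(A) \le -c_t(A) + \E[X|\F_t] + |\beta|\,|A|\,\E[|\Delta\tilde P_{t+1}|\,|\,\F_t]
\]
(being careful that $X\in dom(U^a_t)$ ensures $\E[X_-|\F_t]<\infty$, so the right side makes sense and $\E[X|\F_t]\in\overline{L^0}(\F_t)$ is well-defined; one should handle the possibility $\E[X|\F_t]=+\infty$, but on that event the problem is easily seen to be $+\infty$ or can be excluded, and anyway $X\in dom(U^a_t)$ makes $U^a_t(X)$ finite). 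Since $\E[|\Delta\tilde P_{t+1}|\,|\,\F_t]$ is finite (by the standing assumption that $\E[P_{t+1}|\F_t]$ is finite and $P$ is strictly positive), the superlinear growth $c_t(a)/|a|\to\infty$ forces $g_t(A)\to-\infty$ as $|A|\to\infty$. Concretely, comparing with $g_t(0)$ (a finite function since $X\in dom(U^a_t)$), there is an $\F_t$-measurable radius $\rho\in L^0(\F_t)$ such that maximizing over $\{|A|\le\rho\}$ gives the same essential supremum: on $\{|A|>\rho\}$ we have $g_t(A)<g_t(0)$ pointwise a.s.\ for $\rho$ chosen large enough depending on $\E[X|\F_t]$ and $\E[|\Delta\tilde P_{t+1}||\F_t]$ and the growth of $c_t$.

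Having reduced to the $L^0(\F_t)$-bounded set $B_\rho = \{A\in L^0(\F_t)^N : |A|\le\rho\}$, I would take an approximating sequence $A_n$ with $g_t(A_n)\uparrow G(t,X,\beta)$ (such a sequence exists as $g_t$ is stable, cf.\ Definition~\ref{directorio2}, so the essential supremum is attained along a sequence). By a conditional Komlos / Bolzano--Weierstrass argument in $L^0$ (this is exactly the kind of result surveyed in Appendix~A; e.g.\ one can pass to $\F_t$-measurable convex combinations, or use that bounded sequences in $L^0(\F_t)^N$ have an a.s.\ convergent subsequence after randomization), extract a subsequence — or forward convex combinations $\tilde A_n\in conv(A_n,A_{n+1},\dots)$, using $\F_t$-concavity of $g_t$ so that $g_t(\tilde A_n)$ still converges to $G$ — converging a.s.\ to some $\hat A\in B_\rho\subset L^0(\F_t)^N$. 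Then $X+\beta\tilde A_n\Delta\tilde P_{t+1}\to X+\beta\hat A\Delta\tilde P_{t+1}$ a.s., and this sequence is bounded in $L^1_{\F_t}(\F_{t+1})$ (because $|\beta\tilde A_n\Delta\tilde P_{t+1}|\le|\beta|\rho|\Delta\tilde P_{t+1}|$ which is in $L^1_{\F_t}(\F_{t+1})$ and $X$ may be truncated / is controlled on $dom$). Applying the $L^0$-$L^1$ upper semicontinuity of $U^a_t$ and continuity of $c_t$ gives $\limsup_n g_t(\tilde A_n)\le g_t(\hat A)$, hence $g_t(\hat A)=G(t,X,\beta)$, so $\hat A$ is a maximizer.

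The main obstacle I expect is making the a-priori bound fully rigorous in the conditional ($L^0$-valued) setting: ensuring all the conditional expectations are well-defined (in $\overline{L^0}(\F_t)$) despite $X$ only lying in $dom(U^a_t)$ rather than being integrable, choosing the $\F_t$-measurable radius $\rho$ measurably and uniformly, and verifying that the reduced sequence $X+\beta\tilde A_n\Delta\tilde P_{t+1}$ genuinely satisfies the hypotheses of Definition~\ref{ranx} (i.e.\ is $L^1_{\F_t}$-bounded \emph{conditionally}, $\sup_n\E[\,\cdot\,|\F_t]\in L^0(\F_t)$) so that upper semicontinuity applies — in particular that $X$ itself, or the part of it that survives, contributes a conditionally integrable term. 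The superlinearity-versus-linear-growth comparison and the Komlos extraction are then essentially routine given the conditional-analysis toolkit of Appendix~A.
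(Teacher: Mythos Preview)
Your strategy is the same as the paper's: bound the superlevel set $\{A:g_t(A)\ge g_t(0)\}$ using the variational representation of $U^a_t$ and the superlinear growth of $c_t$, then extract a maximizer by conditional compactness. The paper packages the two steps via Theorem~\ref{L0convbound} (so no explicit $\F_t$-measurable radius $\rho$ needs to be constructed) and Theorem~\ref{CondOpt} (so no Komlos extraction is spelled out), together with Proposition~\ref{implicanciausc} for the $L^0$-upper semicontinuity of $A\mapsto g_t(A)$; this sidesteps exactly the measurable-selection concerns you flag at the end.

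One genuine slip in your write-up: the variational representation gives $U^a_t(Y)\le \E[ZY\mid\F_t]+\alpha^a_t(Z)$ for each $Z\in\mathcal W_t$ (it is an essential infimum), not $\ge$; correspondingly the monotonicity step with the lower bound $X+\beta A\Delta\tilde P_{t+1}\ge X-|\beta||A||\Delta\tilde P_{t+1}|$ is pointing the wrong way and is not needed. With the correct inequality and $Z\in\mathcal W_t$ fixed (e.g.\ $Z=1$), you directly obtain $g_t(A)\le \alpha^a_t(Z)+\E[ZX\mid\F_t]+|\beta||A|\,|\E[Z\Delta\tilde P_{t+1}\mid\F_t]|-c_t(A)$, which is the bound the paper uses.
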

  
\begin{proof}
We intend to apply Theorem \ref{CondOpt}. Evidently $$\esssup_{A\in [L^0(\F)]^N} g_t(A) =\esssup_{A \in\Lambda} g_t(A),$$ where $\Lambda=\{A:g_t(A)\geq g_t(0)\}$. The set $\Lambda$ is $L^0-$convex, contains the origin and is $\sigma-$stable. That $\Lambda$ is sequentially closed is an application of Proposition \ref{implicanciausc}.

For $A \in [L^0(\F_t)]^N$ not identically null we use the variational representation of $U^a$ established in Proposition \ref{vorgeschmack} to bound:
\begin{equation} \label{equtil}
\begin{split}
g_t(nA)&=U_t^a(X+n\beta A\Delta \tilde{P})-c_t(nA) \\
&\leq K+ \E[ZX_+\vert\F_t] + n\E[\beta Z A\Delta \tilde{P}\vert\F_t]-c_t(nA),
\end{split}
\end{equation}
where $Z\in \mathcal{W}_t$. 
Using that $A,\beta$ are $\F_t$-measurable and Cauchy-Schwarz applied pointwise, we bound from above the sum of the last two terms in \eqref{equtil} on the set where $A$ does not vanish by:
$$n\vert A\vert \vert\beta\vert \vert \E[Z\Delta\tilde{P}\vert\F_t] \vert   -c_t(nA) \leq  n\vert A\vert \left [\vert\beta\vert \vert \E[Z\Delta\tilde{P}\vert\F_t] \vert - \frac{c_t(nA)}{n\vert A\vert} \right ].$$ 
%
Since $Z\in L^{\infty}_{\F_t}(\F_{t+1})$ and $\vert \E[\Delta\tilde{P}\vert\F_t] \vert$ is a.s.\ finite by assumption, we see that the majorizing term tends to $-\infty$ on a non-negligible set as $n \to \infty$ and so does $g_t(nA)$. Since $g_t(0)=U^a_t(X)-c_t(0) >-\infty$ by assumption, we get a contradiction, and so Theorem \ref{L0convbound} shows that $\Lambda$ is $L^0(\F_t)$-bounded. Hence Theorem \ref{CondOpt} applies to $\esssup_{A \in\Lambda }g_t(A)$, since the mapping $A \mapsto g_t(A)$ is $L^0$-upper semicontinuous by Proposition \ref{implicanciausc}. This establishes attainability.
\end{proof}



The following is an immediate corollary of the previous lemma.

\begin{corollary}
Assume that $H_{t+1} \in  dom(U^a_t)$ and $\lim\limits_{\vert a \vert \rightarrow \infty}\frac{c_t(a)}{\vert a\vert}=+\infty$. Then the one-step conditional optimization problem of the Agent at time $t$, as in \eqref{recHbas}, is attained.

\end{corollary}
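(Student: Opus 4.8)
The plan is to deduce this corollary directly from Lemma \ref{attaing} by verifying that its two hypotheses hold. The recursion \eqref{recHbas} defines
\[
H_t = \esssup_{A\in L^0(\F_t)^N}\left\{U_t^a\bigl(H_{t+1}+\beta_t A\Delta\tilde P_{t+1}\bigr) - c_t(A)\right\},
\]
which is precisely $G(t,H_{t+1},\beta_t)$ in the notation of \eqref{sup-problem}, with the role of the generic random variable $X$ played by $H_{t+1}\in L^0(\F_{t+1})$ and the role of $\beta$ played by $\beta_t\in L^0(\F_t)$. So the only thing to check is that the pair $(H_{t+1},\beta_t)$ satisfies the standing hypothesis of Lemma \ref{attaing}, namely $H_{t+1}\in dom(U_t^a)$ and the superlinear growth condition on $c_t$.

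First I would observe that the growth condition $\lim_{|a|\to\infty} c_t(a)/|a| = +\infty$ is exactly one of the hypotheses we are assuming in the statement of the corollary, so there is nothing to do there. Second, the hypothesis $H_{t+1}\in dom(U_t^a)$ is likewise assumed directly. Hence both conditions of Lemma \ref{attaing} are in force with $X=H_{t+1}$ and $\beta=\beta_t$, and the lemma immediately yields that the essential supremum defining $H_t$ is attained. This is the entire content of the corollary.

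There is essentially no obstacle here: the corollary is a pure specialization of the lemma, obtained by matching notation ($X\leftrightarrow H_{t+1}$, $\beta\leftrightarrow\beta_t$) and noting that the abstract hypotheses of Lemma \ref{attaing} coincide verbatim with the assumptions placed on the corollary. The only minor point worth spelling out is that the contract under consideration is of the form $(\Theta,\{\beta_t\})$, so $\beta_t$ is indeed $\F_t^A$-measurable and, since $\F^A=\F$ from the Agent's viewpoint (as noted in Subsection \ref{genpreferences}), it is $\F_t$-measurable; and $H_{t+1}$ is $\F_{t+1}$-measurable by construction of the recursion. Thus the one-step conditional optimization problem of the Agent at time $t$ in \eqref{recHbas} is attained, completing the proof.
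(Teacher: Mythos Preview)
Your proposal is correct and matches the paper's approach: the paper simply states that this is ``an immediate corollary of the previous lemma,'' and you have spelled out precisely the identification $X\leftrightarrow H_{t+1}$, $\beta\leftrightarrow\beta_t$ that makes Lemma~\ref{attaing} apply verbatim. Your remarks on the measurability of $\beta_t$ and $H_{t+1}$ are fine but not strictly needed, as these are built into the setup of Section~\ref{constracts} and the recursion~\eqref{recHbas}.
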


\subsection{Principal's Problem}\label{PrinProb}
\quad

In this section we prove Theorem \ref{teogeralasim}, which sharpens Proposition \ref{reduc}. The Principal's problem at time $t$ consists in maximizing 
$$V_t(A,\Gamma):= U_t^a( \Gamma ) - c_t(A)  + U_t^p\left(h_{t+1}+  A\Delta\tilde{P}_{t+1} - \Gamma   \right). $$
Recall from Remark \ref{blabla} that the Principal's preference functionals $U^p_t$ may and will be considered as mappings from $L^0(\F_T)$ to $L^0(\F_t)$, satisfying the usual assumptions w.r.t.\ $\F$. 
%

\mbox{ } 

{\sl Proof of Theorem \ref{teogeralasim}.} Let us introduce the set
\[
	\mathcal{S}:=\{(A,\Gamma)\in L^0(\F_t)^N\times Q: V(A,\Gamma)\geq V(0,0)\}, 
\]	
where $Q:=\left\{\Gamma \in L^1_{\F_{t}}(\F_{t+1}):\E[\Gamma \vert \F_t]=0\right\}$.
{In maximizing $V$, i.e.\ in computing $\Sigma$, we may assume that $\Gamma\in L^1_{\F_{t}}(\F_{t+1})$, since for candidate optima, $\Gamma$ and $-\Gamma$ must be in the domains of $U^a$ and $U^p$ respectively, and by assumption this yields $\Gamma_{-},\Gamma_+ \in L^1_{\F_{t}}(\F_{t+1})$.} We may thus further assume that $(A,\Gamma)$ belong to $\mathcal{S}$, since $\F_t$-measurable components of $\Gamma$ cancel out in $V$, i.e.
\[
	\Sigma=\esssup_{(A,\Gamma)\in\mathcal{S} }V(A,\Gamma).
\]
In a first step, we will show that the set 
\[
	\mathcal{S}^A:=\{A\in L^0(\F_t)^N: \mbox{ there exists } \Gamma\in Q \mbox{ such that }(A,\Gamma)\in\mathcal{S}\}
\]	
is $L^0(\F_t)-$bounded. To this end, we first notice that $V(0,0)=-c(0) + U^p_t(h_{t+1}) \in L^0(\F_t)$. Taking 
\[
	\tilde{Z}\in dom(\alpha^p)\cap dom(\alpha^a)\cap\mathcal{W}\cap L^0(\F_t)
\]	
(e.g.\ $\tilde{Z}=1$) and using the variational representation of the preference functionals, we get: 
\begin{eqnarray*}
	U^a_t(\Gamma) &\leq& \alpha^a(\tilde{Z})+\tilde{Z}\E[\Gamma\vert\F_t] \\
	U^p_t(h + A \Delta \tilde{P}-\Gamma ) &\leq& \alpha^p(\tilde{Z})+ \tilde{Z}\E[h\vert\F_t]-\tilde{Z}\E[\Gamma\vert\F_t] + \tilde{Z}\E[A\Delta \tilde{P}_{t+1}\vert \F_t].
\end{eqnarray*}
For $\Gamma \in Q$ the term $\E[\Gamma\vert\F_t]$ vanishes and hence
\small $$V(0,0)\leq \alpha^p(\tilde{Z})+\alpha^a(\tilde{Z})+\tilde{Z}\E[h\vert\F_t] + \vert A\vert   \vert\tilde{Z}\E[ \Delta \tilde{P}_{t+1} \vert\F_t]\vert -c(A).$$ 
\normalsize Since $\mathcal{S}^A$ is $\sigma$-stable, we can use Lemma \ref{lemnoacotado} to conclude. Indeed, if $\mathcal{S}^A$ were not $L^0(\F_t)$-bounded, then there exists a non-negligible set $\tilde{\Omega}$ and a sequence $\{A_n\}\subset \mathcal{S}^A$ such that $\vert A_n\vert \geq n$ on $\tilde{\Omega}$. Similar arguments as in the proof of Lemma \ref{attaing} would establish $V(0,0)=-\infty$ on a non-negligible set, contradicting our hypotheses. Thus $\mathcal{S}^A$ must be $L^0(\F_t)$-bounded.

Next, we prove that the set 
\[
	\mathcal{S}^\Gamma:=\{\Gamma \in Q: \mbox{ there exists } A \in L^0(\F_t)^N \mbox{ such that }(A,\Gamma)\in\mathcal{S}\}
\]
is bounded in $L^1_{\F_t}(\F_{T})$. Let us chose $\epsilon\in L^0(\F_t)\cap (0,1]$ as in the statement of this theorem, fix $\Gamma \in \mathcal{S}^\Gamma$ and define 
\begin{eqnarray*}
	Z^a &:=& 1+\epsilon[\ind{\Gamma\leq 0} - \Prob(\Gamma\leq 0\vert\F_t)] \in L^{\infty}(\F_T)\cap[1-\epsilon,1+\epsilon] \\ 
	Z^p &:=& 1+\epsilon[\ind{\Gamma> 0} - \Prob(\Gamma> 0\vert\F_t)]  \in L^{\infty}(\F_T)\cap[1-\epsilon,1+\epsilon]
\end{eqnarray*}
%
Since $\Gamma\in Q$ we see that
 $$\E[Z^a\Gamma\vert\F_t]=-\epsilon\E[(\Gamma)_{-}\vert\F_t]\hspace{10pt}\mbox{ and }\hspace{10pt}\E[Z^p\Gamma \vert\F_t]=\epsilon\E[(\Gamma )_+ \vert\F_t] .$$
Moreover, $\E[Z^{a}\vert\F_t]=\E[Z^{p}\vert\F_t]=1$, implying that $Z^{a,p}\in \mathcal{W}_t$ and thus $\alpha^{p}(Z^{p})\leq K^{p}$ and $\alpha^{a}(Z^{a})\leq K^{a}$. We hence obtain that
\begin{align*}
U^a(\Gamma) &\leq -\epsilon\E[(\Gamma)_{-}\vert\F_t] + K^a, \\
U^p(h+A \Delta \tilde{P}-\Gamma)&\leq \E[Z^p(h+A \Delta \tilde{P})\vert\F_t]-\epsilon\E[\Gamma_+\vert\F_t]+K^p \\
&\leq 2\E[\vert h\vert \vert\F_t]+2\vert A\vert\E[  \vert\Delta \tilde{P} \vert\vert\F_t]-\epsilon\E[\Gamma_+\vert\F_t]+K^p \\
&\leq N-\epsilon\E[\Gamma_+\vert\F_t],
\end{align*}
for some $N\in L^{0}(\F_t)$ where the latter inequality follows by assumption and the fact that the effort levels had already been proven to be $L^0(\F_t)$-bounded. Therefore for $(A,\Gamma)\in\mathcal{S}$ we have 
$$V(0,0)\leq V(A,\Gamma)\leq N+K^a-\epsilon E[ (\Gamma)_-\vert\F_t]-\epsilon\E[\Gamma_+\vert\F_t]-c_t(A) \leq \tilde{K} -\epsilon E[ \vert\Gamma\vert\vert\F_t],$$ 
for some $\tilde{K}\in L^{0}(\F_t) $. This implies that $\mathcal{S}^{\Gamma}$ is bounded in $L^1_{\F_t}(\F_{T})$ since $\epsilon >0$ a.s.


Next, we notice that there exists a sequence $(A_n,\Gamma_n)\in \mathcal{S}$ such that $V(A_n,\Gamma_n) \uparrow \Sigma$ since $\mathcal{S}$ is directed upwards. Indeed, if $V(A^i,\Gamma^i)\geq V(0,0)$ for $i=1,2$ and if we define $\xi=\{V(A^1,\Gamma^1)\geq V(A^2,\Gamma^2)\}$ and $(A,\Gamma)=(A^1,\Gamma^1)\ind{\xi}+(A^2,\Gamma^2)\ind{\xi^c}$, then  
$$V(A,\Gamma)=\max\{V(A^1,\Gamma^1),V(A^2,\Gamma^2)\}\geq V(0,0),$$
thanks to the terms in $V$ being $\F_t-$stable and $\xi\in \F_t$. 
%
%
{By virtue of $\mathcal{S}^A$ being $L^0(\F_t)$-bounded, we can apply the usual Komlos lemma (or Lemma \ref{teoDSenchulado}) to the positive and negative parts of each component of the sequence $\{A_n\}_n$ in an iterative, nested way, i.e.\ taking convex combinations of convex combinations and so forth. On the other hand, the $L^1_{\F_t}(\F_{T})$-boundedness of $\mathcal{S}^{\Gamma}$ implies that the technical condition in Lemma \ref{teoDSenchulado} holds for the positive and negative parts of the sequence $\{\Gamma_n\}_n$, by Jensen's inequality, so we can again take convex combinations of convex combinations. All in all we have found a sequence} of non-negative real numbers $\{\lambda_i^n\}$ with $\sum_{i\geq n}\lambda_i^n=1$, and random variables $\Gamma^*\in L^0(\F_{t+1})$ and $A^*\in L^0(\F_t)^N$ such that $\tilde{\Gamma}_n= \sum_{i\geq n} \lambda^n_i \Gamma_i \rightarrow \Gamma^*$ and $\tilde{A}_n= \sum_{i\geq n} \lambda^n_i A_i \rightarrow A^*$ a.s.\ {(for each component)}. Also $(\tilde{A}_n,\tilde{\Gamma}_n)\in \mathcal{S}$ by convexity. Moreover,  
$$\Sigma=\lim_{n} V(A_n,\Gamma_n)= \lim_n \sum_{i\geq n}\lambda^n_i V(A_i,\Gamma_i)\leq \limsup_n V(\tilde{A}_n,\tilde{\Gamma}_n),$$
since (a.s.) convergent sequences of real numbers remain converging under convex combinations of its tails and $V$ is concave.
 %
%

{The cost-term in $V$ is u.s.c.\ and since $\mathcal{S}^{\Gamma}$ is $L^1_{\F_t}(\F_{T})$-bounded we get for the $U^a$ term in $V$ that $\limsup_n U^a\left (\tilde{\Gamma}_n\right ) \leq U^a(\Gamma^*)$. Finally, for the $U^p$ term in $V$, we obtain from the last assertion in  Proposition \ref{implicanciausc} that $\limsup_n U^p\left (h_{t+1}+\tilde{A}_n\Delta \tilde{P}_{t+1}-\tilde{\Gamma}_n\right ) \leq U^p(h_{t+1}+A^*\Delta\tilde{P}_{t+1}-\Gamma^*)$.} 
We thus get that $\Sigma\leq V(A^*,\Gamma^*)$ and hence we have equality. This shows that $\Sigma <\infty$ since the preference functionals are proper. Finally, by Proposition \ref{reduc} we conclude that $\beta =1$ is optimal and Principal's one-step problem is attained. 
\hfill $\Box$


%
%

%

We proceed now to the proof of Theorem \ref{teoUsimilar}. 

\mbox{ } 

{\sl Proof of Theorem \ref{teoUsimilar}.}  
%
Let us first fix an effort level $A$ and put $x:=h+A\Delta \tilde{P}_{t+1}$ and $\hat{\gamma}:= \frac{\gamma^a\gamma^p}{\gamma^a+\gamma^p}$. Concavity of the preference functional yields:
\begin{eqnarray*}
	\esssup_{\Gamma}  \left\{ U_t^a(\Gamma)+U_t^p(x-\Gamma) \right\}
	= \esssup_{\Gamma}  \frac{1}{\hat{\gamma}}\left\{ \frac{\hat{\gamma}}{\gamma^a}U_t(\gamma^a \Gamma)+ \frac{\hat{\gamma}}{\gamma^p}U_t(\gamma^p [x-\Gamma])\right\} \leq \frac{1}{\hat{\gamma}} U_t(\hat{\gamma} x). 
\end{eqnarray*}
 On the other hand, taking $\Gamma^*=\frac{\gamma^p}{\gamma^a+\gamma^p}x$ it follows that
$\frac{1}{\gamma^a}U_t(\gamma^a \Gamma^*)+ \frac{1}{\gamma^p}U_t(\gamma^p [x-\Gamma^*])=\frac{1}{\hat{\gamma}} U_t(\hat{\gamma} x)$. Therefore this $\Gamma^*$ attains the essential supremum above. Thus the Principal's problem reduces to:
\begin{equation}
\label{eqw}
	\esssup_{\Gamma} \left\{ -c_t(A) + \frac{1}{\hat{\gamma}}U_t\left(\hat{\gamma}[h_{t+1}+A\Delta \tilde{P}_{t+1}]\right) \right\}.
\end{equation}
If this problem is attained at $A^*$, then the previous argument shows that $\Gamma^* = \frac{\gamma^p}{\gamma^a+\gamma^p} (h_{t+1}+A^*\Delta \tilde{P}_{t+1})$ is optimal. The problem \eqref{eqw} is of the same form as that analyzed in Lemma \ref{attaing}, simply replacing $U^a$ by $\frac{1}{\hat{\gamma}} U_t(\hat{\gamma}\cdot) $, calling $X=h$
and taking $\beta =1$. In particular, we obtain existence of an optimizer $A^*$. Because the one-step unconstrained problem is attained, Proposition \ref{reduc} shows that
taking $\beta = 1$, $A^*$ and $\Gamma^*$ at time $t$ yields an optimal one-step decision. \hfill $\Box$


\begin{remark}
In this article we chose to work in the biggest conditional (loc.\ convex) space of $L^p$-type, this is, the conditional $L^1$ space. The reason is twofold. On the one hand, had we worked with smaller subspaces, we would have had in principle more tools at hand to prove the attainability of Principal's one-step problems. However, we chose not to limit the scope of utility functionals a priori, in terms of their domains, for which the theory would be applicable to. On the other hand, even acknowledging the fact that our $L^0-L^1$ upper semicontinuity requirement is not a mild one, the alternative would have been to impose from the outset some sort of ``sup-compactness'' of our functionals (more precisely, of their convolutions) or again to work with smaller spaces than conditional $L^1$; ideally conditionally reflexive ones. It seems to us that our simple sequential (and rather point-wise) $L^0-L^1$ upper semicontinuity has the advantage of being a more tractable and less technical requirement than the other, very valid ones.
\end{remark}

\section{Optimal contracting under predictable representation}
\label{mark}

Up to now our probability space and price process were rather general. In this section we add more structure to the problem in order to obtain more explicit solutions. In particular we fix a volatility matrix $\sigma \in \R^{N,d}$ with linearly independent rows ($d \geq N$), assume that the flow of information is generated by a d-dimensional process $\bar w = (w^1, ..., w^d)$ whose evolution is observed by both parties and that the price dynamics follows:
\begin{equation}
\Delta P_{t+1}= diag(P_t)\left [ \mu  + \sigma \Delta \bar{w}_{t+1}\right ] \label{dyn}
\end{equation}
%


Moreover, we shall work under the following ``Predictable Representation Property'' and assume that our utility functionals satisfy a Markov condition. 
 
\begin{assumption}
\label{suposPRP} 
%
The \textit{Predictable Representation Property (PRP)} holds: for some $D \in \mathbb{N}\cup\{0\}$ there exists processes $w^{d+1}, ..., w^D$ adapted to the filtration $\{\F_t\}$ generated by the process $\bar w$ such that the extended process $w=(\bar{w}^1, ..., \bar{w}^d, w^{d+1}, ..., w^D)$ has uncorrelated increments which are independent from the past, have zero mean, non-trivial finite second moments, 
and
\begin{equation}
L^0(\F_{t+1}) =  \left\{x + Z  \Delta w_{t+1} \mbox{ : } x\in L^0(\F_t),Z\in [L^0(\F_t)]^D \right\}. \label{PRP}
\end{equation}
%

We stress that if initially the  $d$-dimensional $\bar w$ process driving the price had not enjoyed the PRP, then Assumption \ref{suposPRP} simply says that we can complete the former process in such a way that the enlarged process does enjoy the PRP, without changing the informational structure of the model. The following example clarifies our PRP assumption. 

\begin{example}[Bernoulli Walk]
\label{introbernoulli}
Consider in $\RR^d$, $d$ independent Bernoulli walks $w^1,\dots,w^d$ on the time grid $\{0,h,2h,\dots,T\}$ starting at $0$, such that $\Prob(\Delta w^i_t =\sqrt{h}) = \Prob(\Delta w^i_t = -\sqrt{h}) = \frac{1}{2}$. They do not necessarily fulfilll \eqref{PRP}, unless $d=1$. Yet it is well-known that for $D=2^d-1$, there exists an adapted family $w^{d+1},\cdots,w^{D}$ of likewise distributed random walks, such that the whole extended family $w^1,\dots,w^{D}$ has increments uncorrelated to each other and independent from the past, and such that \eqref{PRP} holds.
\end{example}

We further restrict ourselves to preference functionals which satisfy the following Markov Property.  

\end{assumption}
\begin{assumption}
\label{suposmarkov}
The \textit{generators} $g^l$ (l=a,p) defined by 
$$Z \in [L^0(\F_t)]^D \mapsto g_t^l(Z) := U_t^l( Z \Delta w_{t+1}),$$
are \textit{Markovian} in the sense that $g^a,g^p$ map $\RR^D$ to $\RR$. 
\end{assumption}

If a preference functional $U$ satisfies the usual conditions and the PRP holds, then all the relevant information of $U_t$ is summarized by its generator. Clearly $g_t$ inherits from $U_t$ being null at the origin and concave. In the case that $P$ may only take a finite number of values, and by the ``local property,'' $\ind{Z(\cdot)=z}g_t(Z)(\cdot)=\ind{Z(\cdot)=z}g_t(z)(\cdot) $. 

\begin{example}
\label{ejemutilities}
For optimized certainty equivalents, the generator $g(x) := U_t(x\Delta w_t)=\sup_{s}\{s-\E(H(s-x\Delta w_t))\}$ clearly satisfies the markovianity assumption under the PRP. 
\end{example}




\begin{remark}\label{positiva}
Under Assumption \ref{suposPRP} one could re-write the Agent's and Principal's recursions as Backward Stochastic Difference Equation in a direct way. In doing so we would replace $\Gamma$ by $\gamma\Delta w$ everywhere in Principal's problem, this having major advantages as by-product. First, one may drop the $L^0-L^1$ upper semicontinuity assumption and simply work with the variational representations of the utility functionals. Indeed, by \eqref{bla} of Proposition \ref{implicanciausc} this would imply $L^0$ upper semicontinuity of $V$ (as in Principal's one step unconstrained problems) in the variables $(A,\gamma)$, which is all we need. As a consequence, the results of the previous section extend to e.g.\ every optimized certainty equivalent utility in the PRP case.
We spare the reader the repetitive work of proving the above points, and instead proceed to a more explicit characterization of optimal contracts. 

\end{remark}

From the substitution $\Gamma_{t+1}-\E[\Gamma_{t+1}|\F_{t+1}]=\gamma \Delta w_{t+1}$ for some $\gamma \in [L^0(\F_t)]^D$ valid by the PRP assumption, we may call a tuple $(A,\beta,\gamma)$ without danger of confusion a \textit{contract} (we shall always work with these variables under the PRP). Principal's recursion \eqref{rechas} and the incentive compatibility set $\C_t(\beta)$ may then be re-defined in terms of such tuple in an obvious way. 

\begin{remark}
\label{non-randomn}
From equation \eqref{rechas} it becomes apparent that under the PRP and Markovianity Assumptions $h_t$ becomes a real number for all $t$. Indeed, everything in the one-step optimization problems (the $g$'s and $c$'s) is non-random when evaluated at non-random inputs, from which it suffices to consider $(A,\beta,\gamma)\in\RR^N\times \RR \times \RR^D$ and maximize point-wise. This of course shows that in this case if there is an optimal contract, then the optimizer $(A,\beta,\gamma)$ is non-random.
\end{remark}
%
%
%
%
%
%

\subsection{Computing optimal contract and necessary optimality conditions}
\label{LagLag}
\quad

Starting from the original formulation \eqref{rechas}, we tackle the attainability issue without resorting immediately to the unconstrained variant. We will thus see that in fact solving this unconstrained problem is not only sufficient but necessary in a sense.  Furthermore, in our present framework we will be able to write down  explicitly the optimal contract. We first derive the First Order Conditions (FOC) for Agent's and Principal's one-step problems:

\begin{lemma}
\label{FOCLagrange} 
Assume that $g^p_t$ is once and $g^a_t,c_t$ are twice continuously differentiable, for $t \in \T$. Then: 
%
\begin{equation}
(A,\gamma)\in \C_t(\beta) \quad \mbox{if and only if} \quad 
 \beta\mu-\nabla c_t(A)+\beta\sigma \nabla g^a_t(\gamma)=0. \label{eqCg}
\end{equation}
Moreover, given an optimal contract $\{(A_t,\beta_t,\gamma_t)\}$ for the Principal, and supposing for every time $t \in \T$ that the implied one-step contracts form a regular point for the corresponding constraints appearing in the r.h.s.\ of \eqref{eqCg} -this is, the matrices
$\Bigl [ \mu+\sigma \nabla g^a_t(\gamma_t) \mbox{ }\vert\mbox{ } \beta_t\sigma \nabla^2g^a_t(\gamma_t) \mbox{ }\vert\mbox{ }  -\nabla^2 c_t(A_t) \Bigr]\in \RR^{N\times (1+D+N)}$ have full range- there exists Lagrange multipliers $\lambda_t \in \RR^N$ s.t. the following systems admit a solution: 
\begin{align}
0&=\left[\beta_t\mu - \nabla c_t(A_t)\right] + \beta_t \sigma \nabla g^a_t(\gamma_t) \label{uno}\\ 
0&=[\mu-\nabla c_t(A_t)] + \sigma \nabla g^p(\sigma 'A_t-\gamma_t)-\nabla^2 c_t(A_t)\lambda_t  \label{dos}\\
0&=\nabla g^a_t(\gamma_t) - \nabla g^p_t(\sigma 'A_t-\gamma_t) + \beta_t \nabla^2 g^a_t(\gamma_t)\sigma'\lambda_t\label{tres}\\
0&=\lambda_t  [\mu +\sigma \nabla g^a_t(\gamma_t)].\label{cuatro}
\end{align}
\end{lemma}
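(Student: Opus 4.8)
The plan is to split the statement into two independent assertions. First, the characterization \eqref{eqCg} of the incentive-compatibility set $\C_t(\beta)$. By definition, $(A,\gamma)\in\C_t(\beta)$ means that $A$ maximizes $\bar A\mapsto U_t^a(\Gamma+\beta[\bar A-A]\Delta\tilde P_{t+1})-c_t(\bar A)$ over $\bar A\in\RR^N$, where under the PRP we write $\Gamma$ via its generator, so that $U_t^a(\Gamma+\beta[\bar A-A]\Delta\tilde P_{t+1})=g_t^a(\gamma+\beta\sigma'[\bar A-A])+\E[\Gamma|\F_t]$ after using \eqref{dyn}, translation invariance and the local/Markov property; note $\Delta\tilde P_{t+1}=\mu+\sigma\Delta w_{t+1}$. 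The objective $\bar A\mapsto g_t^a(\gamma+\beta\sigma'[\bar A-A])+\beta\mu[\bar A-A]-c_t(\bar A)$ is concave in $\bar A$ (since $g_t^a$ is concave and $c_t$ strictly convex), hence $A$ is optimal iff the gradient vanishes at $\bar A=A$; the chain rule gives $\beta\sigma\nabla g_t^a(\gamma)+\beta\mu-\nabla c_t(A)=0$, which is \eqref{eqCg}. Here I use that concavity of $g_t^a$ plus differentiability makes the first-order condition both necessary and sufficient, and that $\sigma$ has linearly independent rows so the composition behaves as expected.

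Second, the Lagrangian assertion. By Proposition \ref{reduc} (and Remark \ref{non-randomn}), an optimal contract can be taken non-random, and by Remark \ref{non-randomn} the Principal's one-step problem at time $t$ is the finite-dimensional problem
\[
\max_{(A,\beta,\gamma)\in\RR^N\times\RR\times\RR^D}\ g_t^a(\gamma)-c_t(A)+g_t^p(h_{t+1}-\gamma-\E[\cdots]+A\Delta\tilde P\text{-part})
\]
subject to the constraint $G_t(A,\beta,\gamma):=\beta\mu-\nabla c_t(A)+\beta\sigma\nabla g_t^a(\gamma)=0$ coming from \eqref{eqCg}; after the substitution $\Gamma_{t+1}-\E[\Gamma_{t+1}|\F_t]=\gamma\Delta w_{t+1}$ the term $h_{t+1}+A\Delta\tilde P_{t+1}-\Gamma_{t+1}$ contributes $g_t^p(\sigma'A-\gamma)$ (up to $\F_t$-measurable constants that cancel). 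One then writes down the Lagrangian $\mathcal L=g_t^a(\gamma)-c_t(A)+g_t^p(\sigma'A-\gamma)-\lambda_t\cdot G_t(A,\beta,\gamma)$ and invokes the Karush–Kuhn–Tucker / Lagrange multiplier theorem for equality-constrained smooth optimization: at a local maximizer that is a regular point of the constraint (i.e. $DG_t$ has full row rank $N$, which is exactly the full-range hypothesis on $\bigl[\mu+\sigma\nabla g_t^a(\gamma_t)\mid\beta_t\sigma\nabla^2 g_t^a(\gamma_t)\mid-\nabla^2 c_t(A_t)\bigr]$), there exists $\lambda_t\in\RR^N$ with $\nabla\mathcal L=0$. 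Computing the partial derivatives: $\partial_A\mathcal L=0$ gives \eqref{dos} (using $\partial_A G_t=-\nabla^2 c_t(A)$ and $\partial_A g_t^p(\sigma'A-\gamma)=\sigma\nabla g_t^p(\sigma'A-\gamma)$, plus $\mu$ from the $A\mu$-term); $\partial_\gamma\mathcal L=0$ gives \eqref{tres} (since $\partial_\gamma G_t=\beta\sigma\nabla^2 g_t^a(\gamma)$); $\partial_\beta\mathcal L=0$ gives $\lambda_t\cdot(\mu+\sigma\nabla g_t^a(\gamma_t))=0$, which is \eqref{cuatro}; and \eqref{uno} is simply the primal feasibility $G_t(A_t,\beta_t,\gamma_t)=0$, i.e. \eqref{eqCg} again.

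The main obstacle is bookkeeping: carefully tracking how the change of variables $\Gamma_{t+1}=\gamma\Delta w_{t+1}+\E[\Gamma_{t+1}|\F_t]$ and the price dynamics \eqref{dyn} turn the conditional functionals $U_t^a,U_t^p$ into the Euclidean generators $g_t^a,g_t^p$ evaluated at $\gamma$ and $\sigma'A-\gamma$ respectively, and which terms are $\F_t$-measurable constants that drop out of the optimization. One must also be careful that $\beta$ enters the constraint but not the objective, so that \eqref{cuatro} is genuinely the $\beta$-stationarity condition and does not collapse trivially. Finally, one should remark that the regular-point hypothesis is what licenses the use of Lagrange multipliers (rather than the weaker Fritz John conditions), and that the differentiability hypotheses ($g_t^p\in C^1$, $g_t^a,c_t\in C^2$) are exactly what is needed for $DG_t$ to be well-defined and for $\nabla\mathcal L$ to make sense; no convexity of the Principal's objective is used or needed for necessity.
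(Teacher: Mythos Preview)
Your proposal is correct and follows essentially the same route as the paper: establish \eqref{eqCg} as the first-order condition of the Agent's concave problem, then form the Lagrangian for the Principal's finite-dimensional constrained problem and invoke the standard multiplier theorem under the regularity (full-rank Jacobian) hypothesis to obtain \eqref{uno}--\eqref{cuatro}. Two minor bookkeeping slips to fix: your displayed Lagrangian $\mathcal L$ omits the $A\mu$ term (which you correctly invoke moments later when computing $\partial_A\mathcal L$), and your sign convention $-\lambda_t\cdot G_t$ would flip the sign of the $\lambda_t$-terms in \eqref{dos}--\eqref{tres} relative to the paper's statement---harmless for existence, but worth aligning; also, the reduction to a deterministic problem comes from Remark~\ref{non-randomn} alone, not Proposition~\ref{reduc}.
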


\begin{proof}
We omit the time index for simplicity. The identity \eqref{eqCg} follows by differentiation and noticing that the optimization problem in $\C_t(\beta)$ is concave in the $A$ variable. It is also easy to see that the matrix $$\Big[ \mu+\sigma \nabla g^a(\gamma) \mbox{ }\vert\mbox{ } \beta\sigma \nabla^2g^a(\gamma) \mbox{ }\vert\mbox{ }  -\nabla^2 c(A) \Bigr ]\in \RR^{N\times (N+d+1)}$$ has as rows the gradients of the components of $\beta\mu-\nabla c_t(A)+\beta\sigma \nabla g^a_t(\gamma)$. 
By e.g.\ \cite[Chapter 3]{bstour} we thus have the existence of a Lagrange multiplier $\lambda$. Forming the Lagrangian 
$$L = \left[A\mu - c_t(A)\right]  +  g_t^a\left(\gamma \right)  + g_t^p\left(A\cdot\sigma-\gamma \right) +
\lambda\cdot\left\{ \left[\beta\mu - \nabla c_t(A)\right] + \beta \sigma \nabla g^a\left(\gamma \right)\right\}$$
and taking the partial derivatives w.r.t.\ $\lambda,A,\gamma,\beta$ yields the desired system.
\end{proof}

Dropping the time index again, notice that multiplying \eqref{uno} by $\lambda$ yields $\lambda\nabla c(A)=0$. Thus multiplying \eqref{tres} by $\lambda'\sigma$, \eqref{dos} by $\lambda$, adding them up and then multiplying by $\beta$ yields:
$$\beta \lambda' [\beta\sigma \nabla^2 g^a(\gamma)\sigma'-\nabla^2c(A)]\lambda=0.$$
Therefore, as soon as one searches for a $\beta > 0$ and either $c$ or $g^a$ are respectively strictly convex or concave, then necessarily $\lambda =0$. This shows that a reasonable optimal solution to the problem must necessarily solve also the ``unconstrained'' problem with FOC:
\begin{align*}
0&=\left[\beta\mu - \nabla c_t(A)\right] + \beta \sigma \nabla g^a(\gamma) \\ 
0&=[\mu-\nabla c_t(A)] + \sigma \nabla g^p(\sigma'A-\gamma) \\
0&=\nabla g^a(\gamma) - \nabla g^p(\sigma'A-\gamma).
\end{align*}
%
We knew from previous sections, in greater generality, that solving the unconstrained problem is sufficient to construct a solution to the original constrained one. Hence these last equations show that, in the present context at least, passing through the unconstrained formulation is actually also {necessary}, at least for contracts with $\beta>0$.

Subtracting the second from the first equation above and then using the third one, we get:
$$(\beta-1)[\mu+\sigma \nabla g^a(\gamma)]=0.$$  
Thus either $\beta=1$ is optimal, or $\mu+\sigma \nabla g^a(\gamma)= 0$. This last case can be called degenerate, since under it we derive from \eqref{uno} that it is optimal for the Agent to exercise minimum effort: $\nabla c(A)=0$. Since necessary conditions give a larger set of potential optimal points than the actual set of optima, we are inclined to say that this degenerate case is suboptimal. 

\subsection{Base preferences}

We close this section with an analysis of the benchmark case where both parties' preferences originate from a common base preference functional: $U^l(\cdot)= \frac{1}{\gamma^l}U(\gamma^l\cdot)$ for $l=a,p$. In terms of generators, this means that $g^l(\cdot)=\frac{1}{\gamma^l} g(\gamma^l \cdot)$. We assume that $\nabla g$ is injective. Then $(A^*,\gamma^*,\beta^{*})$ satisfies the system in Lemma \ref{FOCLagrange}, with $\lambda=0$, where $A^*$ solves
\begin{equation}
\label{markFOC}
0=[\mu-\nabla c_t(A^*)] + \sigma \nabla g\left[\frac{\gamma^a\gamma^p}{\gamma^a+\gamma^p}\sigma'A^*\right],
\end{equation}
and $\gamma^* = \frac{\gamma^p}{\gamma^a+\gamma^p}\sigma'A^*$ and $\beta^{*}=1$. 
%

{The final part of the following proposition shows to what extend the structure of optimal contracts in \cite[Theorem 1]{Ou-Yang} can be recovered. }

\begin{proposition}
\label{summmark}
Under the Markovianity and PRP Assumptions, the optimal contract (interpreted as a mapping between strategies to payments) is of the form of:
$$A \mapsto \bar{S}(A)= \kappa + \sum \gamma_t^*\Delta w_{t+1}+[W^A_T-\tilde{W}_T],$$
where $W^A_T=W_0+\sum  A_t\Delta \tilde{P}_{t+1} $, $\tilde{W}=W^{A^*}$, and $\kappa\in\RR$. Here $A^*$ and $\gamma^*$ (both vector/scalar valued deterministic processes) are the optimal ones for the Principal. Moreover, if the utilities stem from a common base functional, then we can write the optimal contract as:
$$ A \mapsto \bar{S}(A)= \bar{\kappa} + \frac{\gamma^p}{\gamma^p+\gamma^a}W^A_T+  \frac{\gamma^a}{\gamma^p+\gamma^a}[W^A_T-\tilde{W}_T] ,$$
{having the form of cash plus a convex combination of the wealth generated by the Agent and the performance (gains/losses) obtained w.r.t.\ a benchmark portfolio, as in \cite[Theorem 1]{Ou-Yang}.}

\end{proposition}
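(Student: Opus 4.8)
The plan is to read off the contract's structure directly from Theorem \ref{master}, using the predictable representation to simplify its lump-sum part. Recall first that in the present setting the optimal one-step contracts have $\beta_t^*=1$ (Proposition \ref{reduc}, cf.\ also Section \ref{LagLag}) and that, by Remark \ref{non-randomn}, the optimizers $(A_t^*,\gamma_t^*)$ as well as the continuation values $h_{t+1}$ are deterministic. So my first step is to substitute $\beta_t=1$ into the contract exhibited in Theorem \ref{master}: using $\Delta W_{t+1}^{\bar{A}}=\bar{A}_t\Delta\tilde{P}_{t+1}$ from \eqref{flow}, the term $\sum_t[\Delta W_{t+1}^{\bar{A}}-A_t^*\Delta\tilde{P}_{t+1}]$ telescopes to $(W_T^{\bar{A}}-W_0)-(W_T^{A^*}-W_0)=W_T^{\bar{A}}-\tilde{W}_T$, whence $\bar{S}(\bar{A})=R+\Theta+[W_T^{\bar{A}}-\tilde{W}_T]$. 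It then remains only to show that $\Theta$ equals $\sum_t\gamma_t^*\Delta w_{t+1}$ plus a real constant.

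For that, I would choose the maximizing $\Gamma_{t+1}$ in \eqref{rechas} to satisfy $\E[\Gamma_{t+1}\vert\F_t]=0$; this is without loss of generality, since translation invariance makes both the objective and the constraint set $\C_t(\beta)$, hence also the resulting $\Theta$, invariant under adding an $\F_t$-measurable term to $\Gamma_{t+1}$. By the PRP \eqref{PRP} together with the Markov Assumption \ref{suposmarkov}, one then has $\Gamma_{t+1}=\gamma_t^*\Delta w_{t+1}$ with $\gamma_t^*\in\R^D$ deterministic, so that $U_t^a(\Gamma_{t+1})=g_t^a(\gamma_t^*)\in\R$ and $c_t(A_t^*)\in\R$. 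Substituting into the formula $\Theta=\sum_t[\Gamma_{t+1}-U_t^a(\Gamma_{t+1})+c_t(A_t)]$ of Theorem \ref{master} gives $\Theta=\sum_t\gamma_t^*\Delta w_{t+1}+\sum_t[c_t(A_t^*)-g_t^a(\gamma_t^*)]$, and setting $\kappa:=R+\sum_t[c_t(A_t^*)-g_t^a(\gamma_t^*)]\in\R$ yields the first displayed form.

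For the base-preference case I would invoke Theorem \ref{teoUsimilar}, by which the maximizer is $\Gamma_{t+1}=\frac{\gamma^p}{\gamma^a+\gamma^p}(h_{t+1}+A_t^*\Delta\tilde{P}_{t+1})$. Using the dynamics \eqref{dyn}, $\Delta\tilde{P}_{t+1}=\mu+\sigma\Delta\bar{w}_{t+1}$, so the martingale increment of $\Gamma_{t+1}$ (the only part that survives the normalization of the previous step) is $\frac{\gamma^p}{\gamma^a+\gamma^p}A_t^*\sigma\Delta\bar{w}_{t+1}$; this identifies $\gamma_t^*$ as the vector whose first $d$ coordinates are $\frac{\gamma^p}{\gamma^a+\gamma^p}\sigma'A_t^*$ and whose remaining coordinates vanish, as recorded in Section \ref{mark}. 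Summing over $t$ and using \eqref{dyn} once more together with $\tilde{W}_T=W_T^{A^*}=W_0+\sum_t A_t^*\Delta\tilde{P}_{t+1}$ gives $\sum_t\gamma_t^*\Delta w_{t+1}=\frac{\gamma^p}{\gamma^a+\gamma^p}\bigl(\tilde{W}_T-W_0-\sum_t A_t^*\mu\bigr)$. Substituting into $\bar{S}(\bar{A})=\kappa+\sum_t\gamma_t^*\Delta w_{t+1}+[W_T^{\bar{A}}-\tilde{W}_T]$, absorbing the deterministic pieces into $\bar{\kappa}:=\kappa-\frac{\gamma^p}{\gamma^a+\gamma^p}(W_0+\sum_t A_t^*\mu)$, and splitting $W_T^{\bar{A}}=\frac{\gamma^p}{\gamma^a+\gamma^p}W_T^{\bar{A}}+\frac{\gamma^a}{\gamma^a+\gamma^p}W_T^{\bar{A}}$ so that the coefficient of $\tilde{W}_T$ becomes $\frac{\gamma^p}{\gamma^a+\gamma^p}-1=-\frac{\gamma^a}{\gamma^a+\gamma^p}$, one lands on $\bar{\kappa}+\frac{\gamma^p}{\gamma^a+\gamma^p}W_T^{\bar{A}}+\frac{\gamma^a}{\gamma^a+\gamma^p}[W_T^{\bar{A}}-\tilde{W}_T]$, which is the asserted form.

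The main obstacle, and essentially the only step that is not pure bookkeeping, is the clean identification of the predictable coefficient $\gamma_t^*$ from the $\Gamma_{t+1}$ furnished by Theorem \ref{teoUsimilar}: one must extract its martingale component through the PRP decomposition \eqref{PRP} and match it against the affine price dynamics \eqref{dyn}. Everything else is telescoping of wealth increments and tracking of deterministic constants; the one point requiring care is that the Markov Assumption \ref{suposmarkov} is exactly what guarantees the ``cash'' quantities $h_{t+1}$, $g_t^a(\gamma_t^*)$ and $c_t(A_t^*)$ are genuine real numbers, so that $\kappa$ and $\bar{\kappa}$ really are constants.
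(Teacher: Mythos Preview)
Your proof is correct and follows essentially the same route as the paper: invoke Theorem \ref{master} with $\beta_t=1$, use the PRP to write $\Gamma_{t+1}=\gamma_t^*\Delta w_{t+1}$ and the Markov assumption to see the remaining terms are constants, then for the base-preference case identify $\gamma_t^*=\frac{\gamma^p}{\gamma^a+\gamma^p}\sigma'A_t^*$ and do the telescoping/splitting algebra. The only cosmetic difference is that the paper obtains this last identity from the first-order conditions derived just before the proposition (Lemma \ref{FOCLagrange} with $\lambda=0$), whereas you extract it by taking the martingale part of the $\Gamma^*$ given by Theorem \ref{teoUsimilar}; both are valid and lead to the same computation.
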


\begin{proof}
By Theorem \ref{master} we get:
\begin{equation*}
\Theta = R+\sum \left[ \gamma^*_t\Delta w_{t+1} + c_t(A_t^*)-A_t^*\Delta \tilde{P}_{t+1} - g_t^a(\gamma^*_t)  \right]= \kappa + \sum  \gamma^*_t\Delta w_{t+1} - \tilde{W}_T,
\end{equation*}
where we used that $\gamma^*$ and $A^*$ are optimal (Lemma \ref{FOCLagrange}), that $\beta=1$ is optimal, that $\kappa := R+\sum c(A_t^*)- g_t^a(Z^a_t+\sigma 'A_t^*)$ is a constant, thanks to Assumption \ref{suposmarkov}, and the fact that $A^*_t$ and $\gamma^*_t$ are deterministic (Remark \ref{non-randomn}). Again by Theorem \ref{master} this shows that the contract $A\mapsto \kappa + \sum  \gamma^*_t\Delta w + W^A_T-\tilde{W}_T$ is optimal. If further the utility functionals are a re-scaling of one another, we know that $\gamma_t^*=\frac{\gamma^p}{\gamma^p+\gamma^a}\sigma'A^*_t$. Plugging in this into the previous expression for the optimal contract, we conclude.
\end{proof}
%
%
%


%

\begin{example}[1d-Bernoulli Setting, Entropic Utility]
\label{ex1dentropia}
Suppose Agent's and Principal's utility functions are respectively
$$U^a_t(X)=-\frac{1}{\gamma^a}\log\left(\E\left[e^{-\gamma^a X}\vert \F_t \right] \right) \mbox{ and }U^p_t(X)=-\frac{1}{\gamma^p}\log\left(\E\left[e^{-\gamma^p X}\vert \F_t \right] \right),$$ 
with $\gamma^a,\gamma^p>0$, and that Agent's cost function is $c(a)=h\frac{a^2}{2}$. Assume also a one dimensional market driven by a simple Bernoulli-walk setting (that is $N=d=1$: one asset, one source of randomness); see example \ref{introbernoulli}. 
We first observe that $g_t(x)= -\log\left ( \frac{e^{\sqrt{h}x}+e^{-\sqrt{h}x}}{2} \right )= -log\circ \cosh(\sqrt{h}x)$, from which $\nabla g_t(x)= - \sqrt{h} \tanh(\sqrt{h}x)$. From here, and manipulating \eqref{markFOC}, we get that the optimal action $A_t^*$ at time $t$ is the solution to the equation:
$$-\frac{\gamma^a\gamma^p}{\gamma^a+\gamma^p} \sqrt{h} \sigma A_t^* \hspace{1pt}=\hspace{1pt} \frac{1}{2}\log\left ( \frac{\sigma\sqrt{h} + A_t^*h-\mu}{\sigma\sqrt{h} - A_t^*h+\mu}  \right ).$$  

\end{example}

\section{Conclusion}

The present article clarifies the structure of optimal linear contracts in dynamic models of portfolio delegation when both parties' preferences satisfy translation invariance, time consistency and certain regularity conditions. We have shown how the problem of dynamic contracting can be reduced to a recursive sequence of one-period conditional optimization problems. Using conditional analysis techniques we established general attainability results for the Agent and Principal problems and derived the representation of optimal contracts found in \cite{Ou-Yang} under a Markov-PRP assumption and for base preferences and general costs. Several questions are still open. First, the restriction to linear contracts is undesirable. Unfortunately, our method does in no obvious way carry over to non-linear contracts. Second, in the PRP framework we assumed that the Principal observes the driving process $\bar w$. Although this assumption seems common in the literature, it would be more natural to assume that the Principal observes the price increments only. This would add an additional adverse selection component to our model, if one interprets the Agent's additional information as his type, and hence leading to very complex optimization problems. Finally, it would be interesting to analyze portfolio delegation models under limited liability. If one restricts oneself a-priori to a particular class of pay-off profiles such as call options, then our methods can probably still be used to establish existence of optimal contracts (within the pre-specified class). It is an open questions how to analyze models of limited liability without any such a-priori restriction. 


\begin{appendix}

\section{Conditional Analysis}

This appendix recalls conditional analysis results needed to analyze our dynamic contracting problem. We also establish new results which are key to our PA problem. For a detailed discussion of finite dimensional conditional analysis we refer to \cite{CondRd} and references therein;  for a thorough treatment of conditional analysis on $L^p$ spaces we refer to \cite{FKV}. 

\subsection{Finite dimensional conditional analysis}

On a given probability space $(\Omega,\F,\Prob)$ we denote by $L$ and $L^0$ the sets of all, respectively all a.s.\ finite random variables. We apply almost-sure identification and ordering on this sets and put $\overline{L}:=\{X\in L:X>-\infty\}$ and $\underline{L}:=\{X\in L:X<\infty\}$ and denote by $\NN(\F)$ the set of variables in $L^0$ which take values in $\NN$. We fix $N\in \NN$ and view $E:=[L^0(\F)]^N$ as a finite-dimensional topological $L^0(\F)$-module over the ring $L^0(\F)$. On $E$ we define the \textit{conditional norm} $\| X \| = (X X)^{\frac{1}{2}}$ (notice that this is a random variable), where the product is the euclidean one.

\begin{definition}
\label{directorio}
A set $C \subset E$ is called:

\begin{itemize}

\item stable if $\ind{A}X+\ind{A^c}Y\in C$, for every $X,Y\in C$, $A\in\F$

\item $\sigma-$stable if $\sum_{n\in \NN}\ind{A_n}X_n \in C$, for every sequence $(X_n)\subset C$ and partition $(A_n)\subset\F$ of $\Omega$

\item $L^0-$convex if $\lambda X + (1-\lambda)Y \in C$, for every $X,Y\in C$ and $\lambda \in L^0$ with values in $[0,1]$

\item sequentially closed if it contains all the limits of its a.s.\ converging sequences.

\item $L^0-$bounded if $\esssup_{X\in C}\| X \| \in L^0 $.

\end{itemize}
\end{definition}

A stable and sequentially closed set is $\sigma-$stable. We define for $M\in \NN(\F)$ and $(X_n)\subset E$ the element $X_M = \sum_{n\in\NN}\ind{M=n}X_n \in E$ and notice that if the former sequence belongs to a $\sigma-$stable set, then the latter does so too. The following result is a generalization of the classical Bolzano-Weierstrass Theorem.

\begin{lemma}
\label{BW}
Let $(X_n)\subset E$ be $L^0-$bounded. Then there exists $X\in E$ and a sequence $(N_n)\in\NN(\F)$ such that $N_{n+1}>N_n$ and $X=\lim_{n\rightarrow\infty} X_{N_n}$ a.s.\, Also, if  $(x_n)\subset L^0$ is such that $x:=\limsup x_n \in L^0$, then there exists a sequence $(N_n)\in\NN(\F)$ such that $N_{n+1}>N_n$ and $x=\lim_{n\rightarrow\infty} x_{N_n}$ a.s.\
\end{lemma}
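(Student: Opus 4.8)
The plan is to establish the two assertions separately, deriving the vector statement from the scalar one coordinate by coordinate, and to obtain the scalar statement by a conditional ``bisection'' argument that replaces the classical nested-intervals construction. First I would treat the real-valued case. Suppose $(x_n)\subset L^0$ with $x:=\limsup_n x_n\in L^0$. The idea is to produce a strictly increasing sequence $(N_n)\in\NN(\F)$ with $x_{N_n}\to x$ a.s. Because $x=\limsup_n x_n$, for every $k\in\NN$ and a.e.\ $\omega$ there are infinitely many indices $n$ with $|x_n(\omega)-x(\omega)|\le 1/k$. The set $M_{k}:=\{n: |x_n-x|\le 1/k\}$ is, for a.e.\ $\omega$, an infinite subset of $\NN$; one then defines $N_k(\omega)$ measurably as, say, the smallest element of $M_{k}(\omega)$ that is strictly larger than $N_{k-1}(\omega)$ (with $N_0\equiv 0$). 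Measurability of $N_k$ follows because $\{N_k = n\}=\{|x_n-x|\le 1/k\}\cap\{N_{k-1}<n\}\cap\bigcap_{N_{k-1}<m<n}\{|x_m-x|>1/k\}$, a finite Boolean combination of measurable sets once $N_{k-1}$ is known; so $N_k\in\NN(\F)$ by induction, $N_k>N_{k-1}$ by construction, and $|x_{N_k}-x|\le 1/k\to 0$ a.s., i.e.\ $x_{N_k}\to x$ a.s.

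Next I would deduce the vector assertion. Let $(X_n)\subset E=[L^0(\F)]^N$ be $L^0$-bounded, meaning $Y:=\esssup_n\|X_n\|\in L^0$. Write $X_n=(X_n^1,\dots,X_n^N)$. Since $|X_n^1|\le\|X_n\|\le Y$, the sequence $(X_n^1)$ is $L^0$-bounded, so $x^1:=\limsup_n X_n^1$ is in $L^0$ (it is dominated by $Y$ and bounded below by $-Y$). Applying the scalar result to $(X_n^1)$ gives a strictly increasing $(N^{(1)}_n)\in\NN(\F)$ with $X^1_{N^{(1)}_n}\to X^1$ a.s.\ for some $X^1\in L^0$. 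Now pass to the subsequence indexed by $N^{(1)}_n$ and repeat the argument on the second coordinate: $(X^2_{N^{(1)}_n})_n$ is still $L^0$-bounded by $Y$, so its $\limsup$ is in $L^0$, and the scalar result yields a further increasing refinement $(N^{(2)}_n)\in\NN(\F)$ along which both the first coordinate (already convergent, hence convergent along any subsequence) and the second coordinate converge a.s. Iterating $N$ times produces $(N_n):=(N^{(N)}_n)\in\NN(\F)$, strictly increasing, along which every coordinate converges a.s.\ to some $X^i\in L^0$; setting $X:=(X^1,\dots,X^N)\in E$ gives $X_{N_n}\to X$ a.s.

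The main subtlety, and the step I would be most careful with, is the \emph{measurability of the random subsequence indices} $N_k$ and the fact that one is allowed to select them simultaneously for a.e.\ $\omega$: the classical Bolzano--Weierstrass picks a deterministic subsequence, but here the subsequence is genuinely $\omega$-dependent, and one must check that the ``first index after $N_{k-1}$ with the desired property'' is an $\F$-measurable $\NN$-valued function. This is exactly the point where the finite Boolean description of $\{N_k=n\}$ above is needed, and where one uses that a.s.\ the relevant index sets $M_k(\omega)$ are unbounded (which is precisely what $x=\limsup x_n$ guarantees). A second, minor point is the bookkeeping in the coordinatewise iteration: one must note that refining the subsequence does not destroy the convergence already obtained on earlier coordinates (convergence is inherited by subsequences), and that at each stage the relevant $\limsup$ still lies in $L^0$ because it remains dominated by the fixed bound $Y$. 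With these observations in place the proof is complete.
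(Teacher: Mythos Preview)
Your argument is correct. For the scalar assertion your construction is essentially the paper's: both define $N_n$ recursively as the first index past $N_{n-1}$ satisfying a closeness condition, and both rely on the elementary observation that such indices are unbounded because $x=\limsup x_n$. The only cosmetic difference is that the paper uses the one-sided criterion $x_m\ge x-1/n$ and then obtains the upper bound from $N_n\ge n$ via $x_{N_n}\le\sup_{m\ge n}x_m\downarrow x$, whereas you impose the two-sided bound $|x_n-x|\le 1/k$ directly; the latter requires the small extra remark that eventually $x_n<x+1/k$, which you do not state but which is immediate from the definition of $\limsup$.

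For the vector assertion the paper simply cites \cite[Theorem 3.8]{CondRd}, while you supply a self-contained coordinatewise iteration. Your route is the standard one and works; the only point worth spelling out more explicitly is that after the first step you are applying the scalar lemma to the \emph{deterministic-index} sequence $n\mapsto X^2_{N^{(1)}_n}$, so the composition $N^{(2)}_n:=N^{(1)}_{M_n}$ of two strictly increasing $\NN(\F)$-valued sequences is again in $\NN(\F)$ and strictly increasing. This is what makes ``refining the subsequence'' legitimate in the conditional setting, and it is the place where a careless reader might worry about measurability.
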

\begin{proof}
For the first statement refer to \cite[Theorem 3.8]{CondRd}. For the second, define $N_0=0$ and $N_{n}=\min\{m>N_{n-1}:x_m \geq x-1/n\}$. Then $N_n \in \NN(\F)$ and $N_{n+1}>N_n$, from which $N_n\geq n$ follows.
Now, notice that $\sup_{m\geq n} x_m \geq \sup_{m\geq N_n} x_m \geq x_{N_n} \geq x-1/n$ a.s., from which $x=\lim_{n\rightarrow\infty} x_{N_n}$ a.s.\
\end{proof}


As in the euclidean case, convexity opens the way to a necessary and sufficient characterization of boundedness (see \cite[Theorem 3.13]{CondRd}):

\begin{theorem}
\label{L0convbound}
Let $C$ be a sequentially closed $L^0-$convex subset of $E$ which contains $0$. Then $C$ is $L^0-$bounded if and only if for any $X\in C\backslash \{0\}$ there exists a $k\in \NN$ such that $kX \notin C$.
\end{theorem}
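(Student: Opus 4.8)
The plan is to prove the two implications separately, the forward one being routine and the converse carrying all the content. For the forward direction I would assume $C$ is $L^0$-bounded and set $B:=\esssup_{X\in C}\|X\|\in L^0(\F)$. Given $X\in C\setminus\{0\}$, the event $A:=\{\|X\|>0\}$ has positive probability, and since $\{\|X\|>B/k\}\uparrow A$ as $k\to\infty$, there is a (deterministic) $k\in\NN$ with $\Prob(\|kX\|>B)=\Prob(\|X\|>B/k)>0$; as $\|Y\|\le B$ a.s.\ for every $Y\in C$, this forces $kX\notin C$.

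For the converse I would first record two structural facts: $L^0$-convexity implies stability (take $\lambda=\ind{A}$ in the definition of $L^0$-convex), hence $C$, being in addition sequentially closed, is $\sigma$-stable. Next, put $B:=\esssup_{X\in C}\|X\|\in\overline{L^0}(\F)$; the goal is to show $\Prob(\Omega_\infty)=0$ where $\Omega_\infty:=\{B=+\infty\}$. Choosing a countable family realizing the essential supremum and replacing it by pointwise maxima of norms (legitimate by stability), I get a sequence $X_n\in C$ with $\|X_n\|\uparrow B$ a.s. Assume for contradiction $\Prob(\Omega_\infty)>0$. Set $\lambda_n:=(\|X_n\|\vee 1)^{-1}\in L^0\cap(0,1]$ and $Z_n:=\lambda_n X_n\in C$ (a convex combination of $X_n$ and $0\in C$). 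Then $\|Z_n\|\le 1$, so $(Z_n)$ is $L^0$-bounded, and Lemma \ref{BW} yields strictly increasing $N_n\in\NN(\F)$ with $Z_{N_n}\to Z^*$ a.s.; by $\sigma$-stability $Z_{N_n}\in C$, and sequential closedness gives $Z^*\in C$. On $\Omega_\infty$ one has $\|X_n\|\to\infty$, hence $\|Z_n\|=1$ eventually, so $\|Z^*\|=1$ on $\Omega_\infty$; therefore $X^*:=\ind{\Omega_\infty}Z^*\in C$ (stability) is not identically $0$.

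The key remaining point is that $kX^*\in C$ for every $k\in\NN$. Fix $k$ and set $W_n:=\ind{\Omega_\infty}(k\lambda_n\wedge 1)X_n\in C$ (again a convex combination with $0$, localized via stability). On $\Omega_\infty$ we have $k\lambda_n\le 1$ eventually, so $W_n=kZ_n$ there, while $W_n=0$ off $\Omega_\infty$; hence $W_{N_n}\to\ind{\Omega_\infty}kZ^*=kX^*$ a.s., and sequential closedness gives $kX^*\in C$. Since $k$ was arbitrary and $X^*\in C\setminus\{0\}$, this contradicts the hypothesis, so $\Prob(\Omega_\infty)=0$ and $C$ is $L^0$-bounded.

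The main obstacle I anticipate is purely the interplay between the ``local/conditional'' nature of the setting and sequential closedness: the natural normalized sequence $Z_n$ converges only after passing to an $\NN(\F)$-indexed (random) subsequence, and the rescaled sequence $k\lambda_n X_n$ only lies in $C$ on the part of $\Omega$ where $\|X_n\|$ is large. Both difficulties are resolved by Lemma \ref{BW} together with systematically localizing on $\Omega_\infty$ through multiplication by $\ind{\Omega_\infty}$, which is precisely where $\sigma$-stability (derived from $L^0$-convexity plus sequential closedness) is indispensable.
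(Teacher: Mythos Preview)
Your argument is correct. Note, however, that the paper does not supply its own proof of this theorem: it simply quotes the result from \cite[Theorem 3.13]{CondRd}. So there is no in-paper proof to compare against; you have in effect filled in what the paper outsources.

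A couple of minor points worth tightening. In the forward direction, when you write $\{\|X\|>B/k\}\uparrow A$, you are implicitly using that $\|X\|\le B$ a.s.\ (since $X\in C$), which forces $B>0$ on $A$ and makes the monotone convergence legitimate; this is fine but could be stated. In the converse, the passage ``$\|Z_n\|=1$ eventually on $\Omega_\infty$ hence $\|Z_{N_n}\|=1$ eventually'' relies on $N_n\ge n$, which follows from $N_{n+1}>N_n$ in $\NN(\F)$; this is exactly how Lemma \ref{BW} is used in the paper and is correct. The construction $W_n:=\ind{\Omega_\infty}(k\lambda_n\wedge 1)X_n$ is the right device: it is a single $L^0$-convex combination of $X_n$ and $0$ (the coefficient $\ind{\Omega_\infty}(k\lambda_n\wedge 1)$ lies in $L^0(\F)\cap[0,1]$), so no separate localization step is needed, and $\sigma$-stability gives $W_{N_n}\in C$ before passing to the limit. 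Your identification of where $\sigma$-stability and Lemma \ref{BW} enter is accurate.
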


Let us now introduce the notions of continuity, convexity and stability of functions defined on subsets of $E$ and taking values in a set of random variables.  

\begin{definition}
\label{directorio2}
Let $C \subset E$. A function $f:C\rightarrow L$ is called:

\begin{itemize}

\item $L^0-$lower semicontinuous at $X\in C$ if $f(X)\leq \liminf f(X_n)$ for every sequence $(X_n)\subset C$ with a.s.\ limit $X$.

\item$L^0-$continuous at $X\in C$ if $f(X)= \lim f(X_n)$ whenever $(X_n)\subset C$ has a.s.\ limit $X$.

\item $L^0-$convex if $f(\lambda X + (1-\lambda)Y) \leq \lambda f(X) + (1-\lambda)f(Y)$, for every $X,Y\in C$ and $\lambda \in L^0$ with values in $[0,1]$

\item stable if $f(\ind{A}X+\ind{A^c}Y) = \ind{A}f(X)+\ind{A^c}f(Y)$, for every $X,Y\in C$, $A\in\F$.

\end{itemize}

\end{definition}

For the last two items it is assumed that $C$ is $L^0-$convex, respectively, stable. Strict $L^0-$convexity is defined in terms of a strict inequality. Finally $f$ is called (upper/lower semi)continuous on $C$ if it is so on every point of $C$. If $f$ is continuous and stable over a $\sigma-$stable and sequentially closed set, then it satisfies the stability property for countable partitions too. If $f$ is $L^0-$convex or $L^0-$concave, then it is local (meaning $\ind{A}f(X)=\ind{A}f(Y)$ whenever $\ind{A}X=\ind{A}Y$), which in itself directly implies that it also satisfies the stability property for countable partitions.

%

The following result is implied by the proof of \cite[Theorem 4.13]{CondRd}, since all the authors really use is $\sigma-$stability of the set under consideration (which is implied by their stronger assumptions). We give a self-contained proof here.

\begin{lemma}
\label{lemnoacotado}
If a non-empty set $C\subset E$ is $\sigma-$stable and is not $L^0-$bounded, then there is a set $\tilde{\Omega}$ with $\Prob(\tilde{\Omega})>0$ and a sequence $\{X_n\}\subset C$ such that, for every $n\in\NN$, $\vert X_n\vert\geq n$ over $\tilde{\Omega}$ 
\end{lemma}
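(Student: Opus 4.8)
The statement is purely about subsets of the finite-dimensional module $E=[L^0(\F)]^N$: if a non-empty $\sigma$-stable set $C$ fails to be $L^0$-bounded, then the failure can be witnessed uniformly on a fixed non-negligible set $\tilde\Omega$ by a single sequence $\{X_n\}\subset C$ with $|X_n|\ge n$ on $\tilde\Omega$. The plan is to unwind the definition of $L^0$-boundedness (Definition \ref{directorio}): $C$ is $L^0$-bounded iff $\esssup_{X\in C}\|X\|\in L^0$, i.e. iff the essential supremum is a.s.\ finite. So non-boundedness means that $Y:=\esssup_{X\in C}\|X\|$ takes the value $+\infty$ on a set $\tilde\Omega:=\{Y=+\infty\}$ with $\Prob(\tilde\Omega)>0$.

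First I would exploit $\sigma$-stability to turn the essential supremum into a countable supremum along a \emph{single} sequence: since $C$ is $\sigma$-stable (hence in particular stable under finite gluing), the family $\{\|X\|:X\in C\}$ is directed upwards in $L^0$ (given $X,X'\in C$, put $A=\{\|X\|\ge\|X'\|\}$ and $\ind{A}X+\ind{A^c}X'\in C$ realizes the pointwise max), so there is a sequence $\{Y_k\}\subset C$ with $\|Y_k\|\uparrow Y$ a.s. Next, for each $n$ I would construct $X_n$ from the $Y_k$'s by a measurable selection/gluing argument: on $\tilde\Omega$ we have $\sup_k\|Y_k\|=+\infty$, so the sets $A_{k}^{n}:=\{\|Y_k\|\ge n\}$ cover $\tilde\Omega$; refining them into a partition $\{B_k^n\}_{k}$ of $\tilde\Omega$ (with $B_k^n\subset A_k^n$, $B_k^n\in\F$, pairwise disjoint, union $\tilde\Omega$) and fixing any $X_0\in C$, set
$$
X_n:=\sum_{k}\ind{B_k^n}Y_k \;+\;\ind{\tilde\Omega^c}X_0.
$$
By $\sigma$-stability $X_n\in C$, and by construction $\|X_n\|=\|Y_k\|\ge n$ on each $B_k^n$, hence $|X_n|=\|X_n\|\ge n$ on all of $\tilde\Omega$. (Here I use that the conditional norm $\|\cdot\|$ is local/stable, so it distributes over the gluing — this is immediate from $\|X\|=(XX)^{1/2}$ computed pointwise.) This produces exactly the required sequence.

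The main obstacle, and the only genuinely delicate point, is the passage from "the directed family $\{\|X\|:X\in C\}$ has a.s.\ infinite essential supremum on $\tilde\Omega$" to "there is a countable subfamily, glued into members of $C$, whose pointwise maxima already blow up on $\tilde\Omega$". This is where $\sigma$-stability (as opposed to mere stability) is essential: countably many pieces must be assembled simultaneously, and one must be careful that the gluing stays inside $C$ and that the resulting $X_n$ is genuinely $\F$-measurable — this is handled by the standard fact recorded after Definition \ref{directorio} that $X_M=\sum_n\ind{M=n}X_n\in C$ whenever $\{X_n\}\subset C$ and $M\in\NN(\F)$, applied with $M$ the $\F$-measurable index selecting, on $\tilde\Omega$, a $k$ with $\|Y_k\|\ge n$ (and any fixed value off $\tilde\Omega$). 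Everything else — directedness, the monotone approximation $\|Y_k\|\uparrow Y$, and the pointwise estimate — is routine.
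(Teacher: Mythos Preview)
Your proof is correct and follows the same core strategy as the paper's: identify the set on which norms are unbounded and then use $\sigma$-stability to glue countably many witnesses into a single $X_n\in C$ with $\lvert X_n\rvert\ge n$ there. The only difference is packaging: the paper defines $A_n$ as the essential union of all sets on which some $X\in C$ has $\lvert X\rvert\ge n$, proves $\Prob(\bigcap_n A_n)>0$ by contradiction (showing that otherwise $\sum_n n\,\ind{A_n^c\cap A_{n-1}}$ would be an $L^0(\F)$ bound for every $\lvert X\rvert$), and then extracts countably many witnesses $X^{l,n}$ via the essential-supremum-of-indicators construction before gluing; you instead observe directly that stability makes $\{\|X\|:X\in C\}$ directed upwards, so a single sequence $\{Y_k\}\subset C$ already has $\|Y_k\|\uparrow Y=\esssup_{X\in C}\|X\|$, and $\tilde\Omega=\{Y=+\infty\}$ has positive measure straight from the definition of $L^0$-boundedness. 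Your route is slightly more economical, but the mechanism is the same.
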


\begin{proof}
We define $U_n:=\{B\in \F:\exists X\in C,\vert X\vert\geq n \mbox{ on }B\}$, which is non-empty since $C$ is unbounded, introduce the family of decreasing sets $A_n:=\left\{\esssup\limits_{B\in U_n} \ind{B} = 1 \right\}$ and put $A:=\bigcap_{n}A_n$. Assuming that $\Prob(A)=0$, or equivalently that $\Prob\left(\cup_n A_n^c\right)=1$, then for every $X\in C$:
$$\vert X\vert=\left\vert \sum_n X \ind{\left\{A_n^c \cap A_{n-1}\right\}} \right\vert \leq \sum_n \vert X\vert \ind{\left\{A_n^c \cap A_{n-1}\right\}} \leq \sum_n n \ind{\left\{A_n^c \cap A_{n-1}\right\}}\in L^0(\F) .$$
Since $X \in C$ was arbitrary, this implies that $C$ is $L^0(\F)-$bounded. Therefore $\Prob(A)>0$ must hold. By definition of $\esssup$ we have that there exist $\{B^{l,n}\}_l\in U_n$ such that $\esssup\limits_{B\in U_n}\ind{B}=\sup_l\ind{B^{l,n}}$ a.s.\ This implies $A_n=\bigcup_l B^{l,n}$ a.s.\ Taking $X^{l,n}$ such that $\vert X^{l,n}\vert\geq n$  on $B^{l,n}$, and fixing an $X^*\in C$ arbitrary, let us define:
$$X^{(n)}:= X^* \ind{\left\{(\bigcup_l B^{l,n})^c\right\}}+\sum_l X^{l,n}\ind{\left\{B^{l,n}\cap (\cup_{m<l}B^{m,n})^c\right\}}+ X^{0,n}\ind{B^{0,n}},$$
which belongs to $C$ thanks	to $\sigma-$stability. Clearly $$\vert X^{(n)}\vert\geq n\ind{\left\{\bigcup_l B^{l,n}\right\}} + \vert X^*\vert \ind{\left\{(\bigcup_l B^{l,n})^c\right\}},$$
and therefore a.s.\ $\vert\ind{A}X^{(n)}\vert \geq n\ind{A}$. Thus we have that $\vert X^{(n)}\vert\geq n$ on $A$ for every $n$. 
\end{proof}
    
The following conditional optimization theorem is used to prove attainability of the Agent Problem. For a proof we refer to  \cite[Theorem 4.4]{CondRd}.

\begin{theorem}
\label{CondOpt}
Let $C$ be a sequentially closed and stable subset of $E$ and $f:C\rightarrow \overline{L}$ be a $L^0-$lower semicontinuous, stable function. Assume there exists an $X_0\in C$ such that the set $\{X\in C:f(X)\leq f(X_0)\}$ is $L^0-$bounded.
Then there exists an $\hat{X}\in C$ such that 
$$f\left(\hat{X}\right)= \essinf\limits_{X\in C} f(X).$$
If $f$ and $C$ are $L^0-$convex then the ``$\argmin$'' set is also $L^0-$convex, and if in addition $f$ is strictly $L^0-$convex then $\hat{X}$ is the sole (a.s.) optimizer.

\end{theorem}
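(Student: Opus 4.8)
The plan is to transpose the classical Weierstrass extreme value argument to the conditional setting, using $\NN(\F)$-indexed subsequences in place of ordinary ones and $L^0$-lower semicontinuity in place of continuity.

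First I would reduce to the case where $C$ itself is $L^0$-bounded. Put $\Lambda:=\{X\in C:f(X)\le f(X_0)\}$. Then $\Lambda$ is stable (for $X,Y\in\Lambda$ and $A\in\F$, $\ind{A}X+\ind{A^c}Y\in C$ by stability of $C$ and has $f$-value $\ind{A}f(X)+\ind{A^c}f(Y)\le f(X_0)$), sequentially closed (if $X_n\in\Lambda$ and $X_n\to X$ then $X\in C$ and $f(X)\le\liminf f(X_n)\le f(X_0)$ by $L^0$-lower semicontinuity), and $L^0$-bounded by hypothesis. Moreover $\essinf_{X\in\Lambda}f(X)=\essinf_{X\in C}f(X)=:m$, because for any $X\in C$ the element $\ind{A}X+\ind{A^c}X_0$ with $A:=\{f(X)\le f(X_0)\}$ lies in $\Lambda$ and has value $f(X)\wedge f(X_0)$, so the left-hand essinf is dominated by $m\wedge f(X_0)=m$ (using $m\le f(X_0)$ since $X_0\in C$), the reverse inequality being trivial. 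Hence we may assume $C=\Lambda$ is $L^0$-bounded.

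Next I would observe that $\{f(X):X\in C\}$ is directed downwards: for $X,Y\in C$, with $A:=\{f(X)\le f(Y)\}\in\F$ the element $Z:=\ind{A}X+\ind{A^c}Y$ lies in $C$ and $f(Z)=f(X)\wedge f(Y)$ by stability of $f$. Standard properties of essential infima then give a sequence $X_n\in C$ with $f(X_n)\downarrow m$ a.s. Since $C$ is $L^0$-bounded, the conditional Bolzano--Weierstrass Lemma \ref{BW} yields $\hat X\in E$ and a strictly increasing sequence $(N_n)\in\NN(\F)$ with $X_{N_n}\to\hat X$ a.s.; discarding its first term we may assume $N_n\ge n$. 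Each $X_{N_n}=\sum_{k\in\NN}\ind{\{N_n=k\}}X_k$ is a $\sigma$-stable combination of elements of $C$, hence lies in $C$, and $C$ being sequentially closed gives $\hat X\in C$. Stability of $f$ forces locality ($\ind{A}X=\ind{A}Y$ with $X,Y\in C$ implies $\ind{A}f(X)=\ind{A}f(Y)$, by applying stability to $\ind{A}X+\ind{A^c}W$ for a fixed $W\in C$), so $f(X_{N_n})=\sum_k\ind{\{N_n=k\}}f(X_k)$; combined with $N_n\ge n$ and $f(X_k)\downarrow m$ this yields $m\le f(X_{N_n})\le f(X_n)$, hence $f(X_{N_n})\to m$ a.s. Finally $L^0$-lower semicontinuity of $f$ at $\hat X$ gives $f(\hat X)\le\liminf_n f(X_{N_n})=m$, while $\hat X\in C$ forces $f(\hat X)\ge m$; therefore $f(\hat X)=m$, and in particular $m>-\infty$ since $f$ is $\overline{L}$-valued.

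For the convexity part: if $f$ and $C$ are $L^0$-convex and $\hat X,\hat Y$ both attain $m$, then for $\lambda\in L^0\cap[0,1]$ the point $\lambda\hat X+(1-\lambda)\hat Y$ lies in $C$ with $m\le f(\lambda\hat X+(1-\lambda)\hat Y)\le\lambda f(\hat X)+(1-\lambda)f(\hat Y)=m$, so the argmin set is $L^0$-convex; if in addition $f$ is strictly $L^0$-convex and $A:=\{\hat X\ne\hat Y\}$ had positive probability, strict convexity with $\lambda=\tfrac12$ would give $\ind{A}f(\tfrac12\hat X+\tfrac12\hat Y)<\ind{A}m$, contradicting that $m$ is a lower bound for $f$ on $C$, so $\hat X=\hat Y$ a.s. I expect the main obstacle to be the extraction in the third paragraph: keeping the limit of the $\NN(\F)$-indexed subsequence inside $C$, and comparing $f$ along that random subsequence with the original minimizing sequence --- this is where $\sigma$-stability, sequential closedness and the locality consequences of stability must substitute for the genuine compactness available in the Euclidean case, the remainder being a routine transcription.
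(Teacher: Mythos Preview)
Your proof is correct. The paper does not give its own proof of this theorem but simply refers to \cite[Theorem 4.4]{CondRd}; your argument is exactly the expected conditional Weierstrass route (reduction to the $L^0$-bounded sublevel set, directedness, extraction via Lemma \ref{BW}, and $L^0$-lower semicontinuity at the limit), and all the delicate points---$\sigma$-stability of $C$ from stability plus sequential closedness, locality of $f$ from stability, and the comparison $f(X_{N_n})\le f(X_n)$ via $N_n\ge n$---are handled correctly.
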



{We finally adapt a Komlos-type lemma (as in \cite[Lemma A1.1]{DelSch}) for conditionally bounded random variables, which we use to prove our general attainability result (Theorem \ref{teogeralasim}). We thank a referee for hinting at the proof we give now. }

\begin{lemma}
\label{teoDSenchulado}
 Let $\{\xi_n\}_n$ be $[0,+\infty)$-valued random variables defined on a common probability space $(\Omega,\G,\Prob)$, take $\F $ a sub-sigma algebra and assume that the set $C:=conv\{\xi_n : n\in \NN\}$ satisfies the following conditional boundedness condition:
\[ 
	\forall \epsilon \in L^0_+(\F),\exists a \in L^0(\F)\mbox{ such that }\forall h\in C, \Prob(h \geq a\vert \F)\leq \epsilon.
\]
Then there exists a $[0,+\infty)$-valued random variable $X$ and a sequence $\{x_n\}$, where $x_n$ belongs to the convex hull of $\{\xi_n,\xi_{n+1},\dots\}$ such that $x_n\rightarrow X$ almost surely.
 \end{lemma}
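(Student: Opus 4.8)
The plan is to separate the statement into a qualitative part---existence of a.s.\ convergent convex combinations with a limit taking values in $[0,+\infty]$---and a quantitative part---that the limit does not charge $+\infty$---and to reduce the first part to the classical Komlos-type lemma.

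First I would invoke \cite[Lemma A1.1]{DelSch} for the sequence $\{\xi_n\}_n$ of $[0,+\infty)$-valued random variables: it yields $x_n\in conv\{\xi_n,\xi_{n+1},\dots\}$ and a $[0,+\infty]$-valued random variable $X$ with $x_n\to X$ a.s. Each $x_n$ then belongs to $C=conv\{\xi_n:n\in\NN\}$. (If a self-contained argument is preferred, this follows by the usual device: put $\alpha_n:=\inf\{\E[e^{-h}]:h\in conv\{\xi_n,\xi_{n+1},\dots\}\}\in[0,1]$, which is nondecreasing in $n$; pick $x_n$ with $\E[e^{-x_n}]\leq\alpha_n+2^{-n}$; and use the identity $\tfrac12(e^{-u}+e^{-v})-e^{-(u+v)/2}=\tfrac12(e^{-u/2}-e^{-v/2})^2$ together with $\tfrac{x_n+x_m}{2}\in conv\{\xi_{n\wedge m},\dots\}$ to see that $e^{-x_n/2}$ is Cauchy in $L^2$, hence a.s.\ convergent by the usual refinement; then set $X:=-2\log(\lim_n e^{-x_n/2})$.)

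Next I would establish $\Prob(X=+\infty)=0$. Fix $k\in\NN$ and apply the conditional boundedness hypothesis with the constant, strictly positive choice $\epsilon=2^{-k}\in L^0_+(\F)$: there is $a_k\in L^0(\F)$ with $\Prob(h\geq a_k\,\vert\,\F)\leq 2^{-k}$ for every $h\in C$, hence for every $h=x_n$. Since $a_k<+\infty$ a.s., on the event $\{X=+\infty\}$ we have $x_n\to+\infty>a_k$ eventually, so $\ind{\{X=+\infty\}}\leq\liminf_n\ind{\{x_n\geq a_k\}}$. Applying the conditional Fatou lemma to the nonnegative random variables $\ind{\{x_n\geq a_k\}}$ gives
\[
	\Prob(X=+\infty\,\vert\,\F)\;\leq\;\E\!\left[\liminf_n\ind{\{x_n\geq a_k\}}\ \Big|\ \F\right]\;\leq\;\liminf_n\Prob(x_n\geq a_k\,\vert\,\F)\;\leq\;2^{-k}.
\]
Letting $k\to\infty$ yields $\Prob(X=+\infty\,\vert\,\F)=0$ a.s., hence $\Prob(X=+\infty)=0$, so $X$ is $[0,+\infty)$-valued; this is the desired limit.

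The only genuinely new point, and the step I expect to require care, is the last one: one must replace the constant truncation level used in the classical finiteness argument by the $\F$-measurable threshold $a_k$ supplied by the conditional boundedness assumption, and pass the estimate through the \emph{conditional} Fatou lemma rather than the ordinary one. Everything preceding it is classical and may simply be quoted from \cite{DelSch}.
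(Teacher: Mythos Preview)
Your proof is correct. Both your approach and the paper's invoke \cite[Lemma A1.1]{DelSch}, but with the steps ordered differently. Rather than first extracting a $[0,+\infty]$-valued limit and then ruling out $+\infty$ via conditional Fatou, the paper first shows that $C$ is bounded in probability \emph{unconditionally}, which is all the DelSch lemma needs to deliver a finite limit directly. Concretely, the paper argues that $p_n := \esssup_{h\in C}\Prob(h \geq n \mid \F) \to 0$ almost surely (this follows from the hypothesis applied with constant $\epsilon$, since on $\{a_\epsilon\leq n\}$ one has $p_n\leq\epsilon$ and $a_\epsilon$ is a.s.\ finite), and then uses dominated convergence ($p_n \in [0,1]$) to get $\E[p_n] \to 0$, whence $\sup_{h\in C}\Prob(h \geq n) \leq \E[p_n] \to 0$. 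Your conditional Fatou step works, but the paper's route shows that the conditional structure can be eliminated \emph{before} invoking Komlos; so the step you flag as ``genuinely new'' is in fact avoidable, and the finiteness of the limit comes for free from unconditional boundedness in probability.
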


\begin{proof}
{By \cite[Lemma A1.1]{DelSch} it suffices to show that $C$ is bounded in probability. By assumption, we have that $  p_n:=\esssup_{h\in C} \Prob(h \geq n\vert \F) \to 0$,
as $n\to\infty$ and $\Prob-$a.s. Since also $p_n\in [0,1]$ a.s.\ we conclude by dominated convergence that $\E\left [ p_n \right ]\to 0$, which of course is stronger than 
$\sup_{h\in C}\Prob[h\geq n]\to 0$, so we conclude. }
\end{proof}

%

\subsection{Conditional analysis on $L^p$} 
Let $\F$ be a sub sigma-algebra of $\G$. For every $p\in [1,+\infty]$ we define:
\begin{eqnarray*}
\vert\vert X\vert\vert_p &=& \left\{
\begin{array}{cc}
\E[\vert X\vert^p\vert \F] & \hspace{10pt}\mbox{ if } p\in[1,+\infty)\\
\essinf \{Y\in L^0_+(\F) \mbox{ s.t. }Y\geq \vert X\vert\} & \mbox{if } p=+\infty.
\end{array}\right.
\end{eqnarray*}
This is well defined for every $X\in L^0(\G)$. We further define the conditional $L^p$-space $$L^p_{\F}(\G):=\{X\in L^0(\G)\mbox{ st. }\vert\vert X\vert\vert_p \in L^0(\F)\}.$$ It is shown in \cite{FKV} that $L^p_{\F}(\G)$ is a topological $L^0(\F)-$module over the topological ring $L^0(\F)$, and $\vert\vert \cdot\vert\vert_p$ is an $L^0(\F)-$norm inducing the module topology on $L^p_{\F}(\G)$. 

A function $U:L^p_{\F}(\G)\rightarrow \underline{L}^0$ is called:

\begin{itemize}

\item $L^0(\F)-$concave: if $U(\lambda X+(1-\lambda)X')\geq \lambda U(X)+(1-\lambda)U(X')$ for every $\lambda\in L^0(\F)\cap[0,1]$ and every $X,X'\in L^p_{\F}(\G)$

\item proper: if $\exists X\in L^p_{\F}(\G)$ such that $U(X)>-\infty$ and $\forall X'\in L^p_{\F}(\G)$ it holds $U(X)<\infty$

\item $L^p_{\F}(\G)$-upper semicontinuous: if for every net $\{X_{\alpha}\}\subset L^p_{\F}(\G)$ converging to some $X$ in conditional norm, it holds that $\essinf_{\beta}\esssup_{\alpha \geq \beta} U(X_{\alpha})\leq U(X)$

\item monotone: if $U(X)\geq U(X')$ whenever $X\geq X'$ 

\item translation invariant: if $U(X+Y)=U(X)+Y$ for every $X\in L^p_{\F}(\G)$ and $Y \in L^0(\F)$

\end{itemize}

The following representation result re-phrases \cite[Corollary 3.14]{FKV}:

\begin{theorem}
\label{Urep}
Let $p\in [1,\infty)$ and $U:L^p_{\F}(\G)\rightarrow \underline{L}^0(\F)$ satisfy the above conditions. Let $q$ be the H\"older conjugate of $p$ and define 
\[
	\mathcal{W}:=\{Z\in L^q_{\F}(\G):Z\geq 0, \E[Z\vert\F]=1\}, \quad 
	\alpha(Z):=\esssup_{X\in L^p_{\F}(\G)}\{ U(X)-\E[ZX\vert\F]\}.
\]
Then $U$ satisfies the following variational representation:
\begin{equation}
U(X)= \essinf\limits_{Z\in \mathcal{W}}\{ \E[ZX\vert\F] + \alpha(Z)\}. \notag
\end{equation}

\end{theorem}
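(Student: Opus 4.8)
The plan is to obtain the representation as a specialization of the conditional Fenchel--Moreau biconjugation theorem on the module $L^p_{\F}(\G)$ with dual module $L^q_{\F}(\G)$; this is essentially \cite[Corollary 3.14]{FKV}, so the only thing left to check is that monotonicity and translation invariance confine the relevant dual elements to $\mathcal{W}$.

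First I would extend the conjugate to the whole dual module, setting $\alpha(Z):=\esssup_{X\in L^p_{\F}(\G)}\{U(X)-\E[ZX\vert\F]\}\in\overline{L}^0(\F)$ for every $Z\in L^q_{\F}(\G)$. Since $U$ is proper, $L^0(\F)$-concave and $L^p_{\F}(\G)$-upper semicontinuous, the conditional biconjugation result of \cite{FKV} applies and yields
\[
  U(X)=\essinf_{Z\in L^q_{\F}(\G)}\{\E[ZX\vert\F]+\alpha(Z)\}\qquad\text{for all }X\in L^p_{\F}(\G),
\]
where one uses that the pairing $(X,Z)\mapsto\E[ZX\vert\F]$ identifies $L^q_{\F}(\G)$ with the topological dual $L^0(\F)$-module of $L^p_{\F}(\G)$.

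Next I would remove from the essential infimum all $Z$ that are not a.s.\ non-negative. Fix $X_0\in L^p_{\F}(\G)$ with $U(X_0)>-\infty$ (such $X_0$ exists by properness). For $B:=\{Z<0\}\in\G$ and $n\in\NN$, monotonicity gives $U(X_0+n\ind{B})\geq U(X_0)$, hence
\[
  \alpha(Z)\geq U(X_0+n\ind{B})-\E[Z(X_0+n\ind{B})\vert\F]\geq \bigl(U(X_0)-\E[ZX_0\vert\F]\bigr)-n\,\E[Z\ind{B}\vert\F].
\]
If $\Prob(Z<0)>0$ then $\E[Z\ind{B}]<0$, so $\E[Z\ind{B}\vert\F]<0$ on a non-negligible set, and letting $n\to\infty$ forces $\alpha(Z)=+\infty$ there; by the local property such $Z$ is irrelevant for the essential infimum, which may therefore be restricted to $\{Z\geq 0\}$. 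An analogous argument with translation invariance: for $Y\in L^0(\F)$,
\[
  \alpha(Z)\geq U(X_0+Y)-\E[Z(X_0+Y)\vert\F]=\bigl(U(X_0)-\E[ZX_0\vert\F]\bigr)+Y\bigl(1-\E[Z\vert\F]\bigr);
\]
taking $Y=n\,\mathrm{sgn}(1-\E[Z\vert\F])$ and letting $n\to\infty$ shows $\alpha(Z)=+\infty$ unless $\E[Z\vert\F]=1$ a.s. Combining both restrictions, the essential infimum in the biconjugation formula may be taken over $\mathcal{W}=\{Z\in L^q_{\F}(\G):Z\geq 0,\ \E[Z\vert\F]=1\}$ alone, which is the asserted identity.

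The genuinely substantial input is the conditional Fenchel--Moreau step itself, i.e.\ biconjugation on $L^p_{\F}(\G)$ with dual $L^q_{\F}(\G)$, which rests on conditional Hahn--Banach separation in $L^0(\F)$-modules; this is supplied by \cite{FKV} and is invoked here rather than reproved. Everything else reduces to the two restriction arguments above, with the local property used to upgrade the non-negligible-set estimates into statements about the essential infimum.
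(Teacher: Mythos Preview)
Your proposal is correct and matches the paper's approach: the paper gives no independent proof but simply records the statement as a rephrasing of \cite[Corollary 3.14]{FKV}, exactly the source you invoke for the biconjugation step. Your additional reduction of the essential infimum from $L^q_{\F}(\G)$ to $\mathcal{W}$ via monotonicity and translation invariance is the standard argument and is correct; the only slightly compressed point is the phrase ``by the local property such $Z$ is irrelevant for the essential infimum,'' which is justified by replacing $Z$ with $Z\ind{A^c}+\ind{A}$ on the $\F$-set $A$ where $\alpha(Z)=+\infty$ and using that $\alpha$ inherits locality from $U$.
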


In the next Proposition we prove that $L^p_{\F}(\G)-$upper semicontinuity is a consequence of $L^0-L^p$ upper semicontinuity (see Definition \ref{ranx}). This of course implies Proposition \ref{vorgeschmack}.

\begin{proposition}
\label{implicanciausc}
Let $U:L^p_{\F}(\G)\rightarrow \underline{L}^0(\F)$ be $L^0-L^p$ upper semicontinuous. Then $U$ is also $L^p_{\F}(\G)-$upper semicontinuous. Furthermore, if $U$ is also proper, monotone, translation invariant and $L^0(\F)$-concave, then $U$ admits a variational representation and for any $N\in\N$ and $\Delta \in [L^p_{\F}(\G)]^N$ the functional
\begin{equation}
A\in [L^0(\F)]^N \mapsto U(A \Delta) \label{bla}
\end{equation}
is $L^0$-upper semicontinuous in the sense of Definition \ref{directorio2}.{Under the same hypotheses, if $A_n\in [L^0(\F)]^N\to A$ a.s.\ and $\{\Gamma^n\}_n$ is $L^p_{\F}(\G)$-bounded such that  $\Gamma^n\to \Gamma$ a.s.\ then
$$\limsup_n U(A_n\Delta+\Gamma_n)\leq U(A\Delta +\Gamma)$$
}
\end{proposition}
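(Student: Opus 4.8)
The plan is to prove the three assertions of Proposition~\ref{implicanciausc} in sequence, since each relies on the previous one. First I would establish that $L^0$-$L^p$ upper semicontinuity implies $L^p_{\F}(\G)$-upper semicontinuity. The subtle point is passing from \emph{net} convergence in conditional norm (as in the definition of $L^p_{\F}(\G)$-u.s.c.) to the \emph{sequential}, a.s.\ convergence hypothesis available from $L^0$-$L^p$ u.s.c.\ To do this I would argue by contradiction: if $U$ failed to be $L^p_{\F}(\G)$-u.s.c.\ at some $X$, there would be a net $\{X_\alpha\}$ converging to $X$ in conditional norm with $\essinf_\beta\esssup_{\alpha\geq\beta}U(X_\alpha) > U(X)$ on a non-negligible set; exploiting the local property and $\sigma$-stability one can replace the net by a sequence $X_n \to X$ in conditional norm with $\limsup_n U(X_n)$ still exceeding $U(X)$ on a non-negligible set. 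Convergence in conditional $L^p$-norm means $\E[|X_n - X|^p\mid\F]\to 0$ in $L^0(\F)$; passing to a subsequence indexed by an $\F$-measurable strictly increasing sequence (via Lemma~\ref{BW}) one obtains $\E[|X_{N_n}-X|^p\mid\F]\to 0$ fast enough that $X_{N_n}\to X$ a.s.\ along a further (measurable) subsequence, while the tail-$\limsup$ of $U$ is preserved along measurable subsequences by the local property; then one also truncates/localizes so the subsequence is conditionally bounded in $L^p$, and $L^0$-$L^p$ u.s.c.\ delivers the contradiction.

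Given $L^p_{\F}(\G)$-u.s.c.\ together with properness, monotonicity, translation invariance and $L^0(\F)$-concavity, the variational representation is immediate from Theorem~\ref{Urep}. For the statement about $A\mapsto U(A\Delta)$, I would fix $A_n\to A$ a.s.\ in $[L^0(\F)]^N$ and show $\limsup_n U(A_n\Delta)\leq U(A\Delta)$. The point is that $A_n\Delta$ need not be conditionally $L^p$-bounded a priori, so one localizes: partition $\Omega$ into $\F$-measurable sets on which $\sup_n|A_n|$ is bounded by a fixed natural number (possible since $A_n\to A$ a.s.\ forces $\sup_n|A_n|\in L^0(\F)$), use the local property to reduce to each piece, and on each piece $A_n\Delta\to A\Delta$ a.s.\ with $|A_n\Delta|\leq k|\Delta|\in L^p_{\F}(\G)$, so the sequence is conditionally $L^p$-bounded and $L^0$-$L^p$ u.s.c.\ applies directly. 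Reassembling via $\sigma$-stability and the local property of $U$ gives the claim.

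The final assertion follows the same template but now with the perturbation $\Gamma_n$ added. I would set $Y_n := A_n\Delta + \Gamma_n$ and $Y := A\Delta + \Gamma$; then $Y_n\to Y$ a.s., and I must check $\sup_n \E[|Y_n|^p\mid\F]\in L^0(\F)$. By the triangle inequality in $L^p_{\F}(\G)$ (or the elementary bound $|Y_n|^p \leq 2^{p-1}(|A_n\Delta|^p + |\Gamma_n|^p)$), it suffices that both $\{A_n\Delta\}_n$ and $\{\Gamma_n\}_n$ are conditionally $L^p$-bounded; the latter is a hypothesis, and the former follows as above after localizing on the $\F$-measurable sets where $\sup_n|A_n|\leq k$. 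Having verified conditional $L^p$-boundedness of $\{Y_n\}$, a single application of $L^0$-$L^p$ upper semicontinuity of $U$ gives $\limsup_n U(Y_n)\leq U(Y)$, which is exactly the desired inequality; the localization is again patched together using the local property of $U$ and $\sigma$-stability.

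The main obstacle I anticipate is the first step: carefully moving from the net-based, norm-convergence definition of $L^p_{\F}(\G)$-u.s.c.\ to the sequential a.s.-convergence hypothesis, while simultaneously arranging conditional $L^p$-boundedness of the extracted subsequence, all without losing the non-negligible set on which the strict inequality is assumed to hold. This requires combining the measurable Bolzano--Weierstrass extraction (Lemma~\ref{BW}), the local property, and a localization/truncation argument in just the right order. The remaining two steps are then routine localization arguments built on top of it.
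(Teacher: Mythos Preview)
For the second and third assertions your approach coincides with the paper's, except that the localization is superfluous: since $A_n\to A$ a.s.\ in $[L^0(\F)]^N$ one already has $\sup_n|A_n^i|\in L^0(\F)$, and the paper simply bounds
\[
\E[|A_n\Delta+\Gamma_n|^p\mid\F]^{1/p}\leq C\,\sup_i|A_n^i|\,\E[|\Delta|^p\mid\F]^{1/p}+\E[|\Gamma_n|^p\mid\F]^{1/p}\in L^0(\F)
\]
and applies $L^0$-$L^p$ u.s.c.\ in one stroke. Your partition into $\{\sup_n|A_n|\leq k\}$ is harmless but buys nothing.

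The first step, however, contains a genuine gap. You start from a net violating u.s.c.\ and claim that ``exploiting the local property and $\sigma$-stability one can replace the net by a sequence $X_n\to X$ in conditional norm'' with the same tail behaviour of $U$, then extract an a.s.\ convergent subsequence via Lemma~\ref{BW}. But the conditional $L^p$ topology is in general not first countable, and neither the local property of $U$ nor $\sigma$-stability of sets furnishes a mechanism to pass from an arbitrary convergent net $\{X_\alpha\}$ to a sequence while preserving $\essinf_\beta\esssup_{\alpha\geq\beta}U(X_\alpha)>U(X)$ on a non-negligible set; Lemma~\ref{BW} concerns $L^0(\F)$-bounded sequences in $[L^0(\F)]^N$ and does not act on nets in $L^p_{\F}(\G)$. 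The paper avoids nets altogether by invoking \cite[Lemma~3.10]{FKV2}: $L^p_{\F}(\G)$-u.s.c.\ is equivalent to conditional closedness of every superlevel set $K_k=\{X:U(X)\geq k\}$, $k\in L^0(\F)$. Assuming $(K_k)^c$ is not open yields $X$ with $U(X)<k$ on a non-negligible set and $K_k\cap B(X,1/N)\neq\emptyset$ for every $N\in\NN(\F)$; choosing $X_N\in K_k\cap B(X,1/N)$, a conditional Markov inequality followed by a Borel--Cantelli argument produces $M_n\in\NN(\F)$ with $X_{M_n}\to X$ a.s.\ and $X_{M_n}\in B(X,1)$ (hence conditionally $L^p$-bounded), so $L^0$-$L^p$ u.s.c.\ gives $k\leq U(X)$, a contradiction. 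The ingredient your plan is missing is precisely this closed-superlevel-set characterization, which is what makes the passage from nets to countable families legitimate.
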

\begin{proof}
For the first part, by \cite[Lemma 3.10]{FKV2}, it is enough to prove that the sets $K_k:=\{ X \in L^p_{\F}(\G): U(X)\geq k \}$ are conditionally closed for every $k\in L^0(\F)$. We will prove that their complements are conditionally open. To this end we fix such a $k$ and and assume to the contrary that $(K_k)^c$ is not open. We thus take $X$ such that $U(X)<k$ on a non-negligible set and such that for every $N\in \NN(\F)$ we have that $K_k\cap B(X,1/N)\neq \emptyset$, where $B(X,1/N)=\{Z:\E(\vert Z-X\vert^p\vert\F)\leq 1/N \}$. This means that we can find, for every $N\in \NN(\F)$, an element $X_N\in B(X,1/N)$ such that $U(X_N)\geq k$ a.s.\ A straightforward adaptation of Markov's inequality yields
\[
	\Prob(\vert X_N-X\vert\geq \epsilon \vert \F)\leq \frac{\E(\vert X_N-X\vert^p\vert\F)}{\epsilon^p}
\]
for every $\epsilon\in L^0(\F)_{++}$. From this we may find for every natural number $n$ an element $M_n\in \NN(\F)$ such that:

\begin{itemize}

\item for every $N\in\NN(\F)$ st. $N\geq M_n$ it holds that $\Prob(\vert X_N-X\vert\geq 1/n \vert \F)\leq 1/n^2$ a.s.\

\item for every n: $M_{n+1}> M_n$ a.s.\

\end{itemize}

Now, we will use a ``Borel-Cantelli Lemma''-type reasoning in order to prove that the sequence $\{X_{M_n}\}$ converges almost surely to $X$. First notice that for a fixed $l\in \NN$:
$$\sum\limits_{n\in\NN} \Prob(\vert X_{M_n}-X\vert\geq 1/l \vert \F)\leq \sum\limits_{n\leq l} \Prob(\vert X_{M_n}-X\vert\geq 1/l \vert \F) + \sum\limits_{n > l} \Prob(\vert X_{M_n}-X\vert\geq 1/n \vert \F),$$ 
and since the last term is bounded above by $\sum_{n>l}1/n^2$, the original sum belongs to $L^0(\F)$ (and so is a.s.\ finite). Define now $i.o.\left\{\vert X_{M_{\cdot}}-X\vert\geq 1/l \right\} := \bigcap_{m\in\NN} \bigcup_{n\geq m} \left\{\vert X_{M_n}-X\vert\geq 1/l \right\}$. Then:
$$\Prob \left( i.o.\left\{\vert X_{M_{\cdot}}-X\vert\geq 1/l \right\}\vert\F\right ) \leq \Prob\left(\bigcup_{n\geq m}\{ \vert X_{M_n}-X\vert\geq 1/l  \}  \vert\F\right)
\leq \sum_{n\geq m }  \Prob( \vert X_{M_n}-X\vert\geq 1/l   \vert\F),$$
and so the left-hand side does not depend on $m$ whereas the right one tends a.s.\ to $0$ as $m$ increases. This shows that $\Prob \left( i.o.\left\{\vert X_{M_{\cdot}}-X\vert\geq 1/l \right\}\vert\F\right )=0$ a.s. Taking expectations, $\Prob \left( i.o.\left\{\vert X_{M_{\cdot}}-X\vert\geq 1/l \right\}\right )=0$. Since this holds for every $l$, we conclude that indeed $\{X_{M_n}\}$ converges almost surely to $X$.

Finally we have by the $L^0-L^p$ upper semicontinuity assumption that $k\leq \limsup_n U(X_{M_n})\leq U(X)$ a.s.\ since by definition $X_{M_n}\in B(X,1)$, but also $U(X)< k$ on a non-negligible set, which is a contradiction. This completes the proof of the first statement.

By Theorem \ref{Urep} and the first claim we know that $U$ has a variational representation. {That $A\mapsto U(A\Delta)$ is $L^0$-upper semicontinuous is a consequence of the last claim in the proposition (taking $\Gamma_n=0$). So to establish the last claim and finish the proof, is suffices to compute 
$$\E\left[|A_n\Delta+\Gamma_n|^p|\F \right ]^{1/p} \leq C\sup_{i=1,\dots,N} |A_n^i|\E\left[|\Delta|^p|\F \right ]^{1/p}+\E\left[|\Gamma_n|^p|\F \right ]^{1/p}, $$
and observe that the r.h.s.\ is bounded from above by some r.v.\ in $L^0(\F)$, by conditional $L^p$-boundedness of the $\Gamma_n$ and since the (components of) the $A_n$ converge a.s. All in all  $\{A_n\Delta+\Gamma_n\}_n$ is $L^p_{\F}(\G)$-bounded and converges a.s.\ to $A\Delta+\Gamma$, so we conclude by the $L^0-L^p$ upper semicontinuity assumption.}
%
\end{proof}

\section{Optimized certainty equivalents and Proof of Theorem \ref{casodina}}

{We start with a number of technical results for the examples in Section \ref{preferences}. }

{
\begin{lemma}
\label{ejemx2}
{The following hold. 
\begin{itemize}
\item[(i)]
{The extensions in (\ref{eqextension}) are well defined for TVAR and, more generally, for  
optimized certainty equivalent families for which both $1\in \cap_t int(dom(H^*_t))$ and every $H_t$ is bounded from below (equivalently $0\in dom(H_t^*)$).}
\item[(ii)] 
Take $\F=\F_t$, $\G=\F_{t+1}$ for $t$ fixed. For any $\gamma>0$ and $\lambda\in (0,1)$, the entropic functional 
$$X\in L^1_{{\F}}(\G) \mapsto -\frac{1}{\gamma}\log\left(\E(exp(-\gamma X)|\F)\right),$$
as well as the Tail-value-at-risk functional 
$$X\in L^1_{{\F}}(\G)\mapsto \esssup_{s}\left\{s-\lambda^{-1}\E([s-X]_+|\F)  \right\},$$
are $L^0-L^1$ u.s.c. More generally, optimized certainty equivalents for which $1\in int(dom(H^*))$ are $L^0-L^1$ u.s.c. 
\item[(iii)]The TVAR family and, more generally, OCE families for which $1\in \cap_t int(dom(H^*_t))$ and every $H_t$ is bounded from below, all satisfy Assumption \ref{suposUasymm} after pasting. 
\end{itemize}}
%

\end{lemma}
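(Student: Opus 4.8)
The three statements reinforce one another, so I would prove them in the order (i), (ii), (iii), the analytic core being (ii). As a common preliminary I would record two elementary estimates for a one-step OCE $\tilde U_t(Y)=\esssup_{s}\{s-\E[H_t(s-Y)|\G_t]\}$ on $L^\infty(\G_{t+1})$: Fenchel's inequality $H_t(u)\ge uZ-H_t^*(Z)$ together with $\E[Z|\G_t]=1$ gives $\tilde U_t(Y)\le\E[YZ|\G_t]+\E[H_t^*(Z)|\G_t]$ for every $Z\in\mathcal W_t$, and with $Z=1$ (using the standing $H_t^*(1)=0$) simply $\tilde U_t(Y)\le\E[Y|\G_t]$; and when $1\in int(dom(H_t^*))$, picking $\delta\in(0,1)$ with $[1-\delta,1+\delta]\subset dom(H_t^*)$ (on which $H_t^*\le C_t$ by convexity) and using $H_t(u)\ge(1\pm\delta)u-H_t^*(1\pm\delta)$ gives $s-\E[H_t(s-Y)|\G_t]\le-\delta s+(1+\delta)\E[Y|\G_t]+C_t$ for $s>0$ and $\le\delta s+(1-\delta)\E[Y|\G_t]+C_t$ for $s<0$, hence $s\mapsto s-\E[H_t(s-Y)|\G_t]\to-\infty$ as $|s|\to\infty$, uniformly over $Y$ with $\E[Y|\G_t]$ in a bounded set. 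For (i): in the extension \eqref{eqextension}, monotonicity of $\tilde U_t$ makes $m\mapsto\tilde U_t([X\wedge n]\vee m)$ non-increasing, and since $[X\wedge n]\vee m\le n$, normalisation and translation invariance give $\tilde U_t([X\wedge n]\vee m)\le\tilde U_t(n)=n$, so the inner limit exists in $[-\infty,n]\subset\underline L(\G_t)$; the inner limits are non-decreasing in $n$, so the outer limit exists in $(-\infty,+\infty]$, and only the value $+\infty$ must be ruled out. Here I would use the dual bound along a random density $Z\in\mathcal W_t$ valued in $[1-\delta,1+\delta]\subset dom(H_t^*)$ (available since $1\in int(dom(H_t^*))$) together with boundedness below of $H_t$ to dominate the double limit by a fixed $\G_t$-measurable random variable, pinning the codomain to $\underline L(\G_t)$; the pasting $U_t:=\tilde U_t\circ\cdots\circ\tilde U_T$ is then well defined as in Example \ref{ejempaste}. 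The TVAR case is the instance $H_t(u)=\lambda^{-1}u_+$, for which one computes $dom(H_t^*)=[0,1/\lambda]$ with $1$ in its interior precisely because $\lambda\in(0,1)$, and $H_t\ge0$.

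For (ii), the entropic case is immediate: $\limsup_n U_t(X_n)\le U_t(X)$ is equivalent to $\E[e^{-\gamma X}|\F_t]\le\liminf_n\E[e^{-\gamma X_n}|\F_t]$, which is the conditional Fatou lemma for the nonnegative variables $e^{-\gamma X_n}\to e^{-\gamma X}$ a.s.\ (so the $L^1_{\F_t}$-boundedness is not even used there). For TVAR and, more generally, OCE with $1\in int(dom(H_t^*))$: put $K:=\sup_n\E[|X_n|\,|\,\F_t]\in L^0(\F_t)$ and $L:=\limsup_n\tilde U_t(X_n)$; on $\{L=-\infty\}$ there is nothing to prove, and on $\{L>-\infty\}$ the tail estimate above shows that for $|s|$ larger than some $R\in L^0(\F_t)$ built from $K,C_t,L$ one has $s-\E[H_t(s-X_n)|\F_t]<L-1$, so $\tilde U_t(X_n)$ is approached only by $s$ in the conditionally bounded set $\{|s|\le R\}$. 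I would then take $\F_t$-measurable $s_n\in[-R,R]$ with $s_n-\E[H_t(s_n-X_n)|\F_t]\ge\tilde U_t(X_n)-1/n$, pass along a conditional subsequence $\{N_k\}\subset\NN(\F_t)$ (Lemma \ref{BW}) realising $L$ and, by conditional Bolzano--Weierstrass (Lemma \ref{BW} again), with $s_{N_k}\to s^*\in[-R,R]$ a.s.; lower semicontinuity of the closed convex $H_t$ and conditional Fatou (legitimate since $H_t$ is bounded below on the relevant range, in particular $H_t\ge0$ for TVAR) then give $L\le s^*-\E[H_t(s^*-X)|\F_t]\le\tilde U_t(X)$. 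One checks along the way that $X<+\infty$ a.s.\ (again conditional Fatou applied to $|X_n|$), so the computations make sense; on a positive-measure set where $X=-\infty$ both sides equal $-\infty$ directly.

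For (iii), Assumption \ref{suposUasymm} demands three things of the pasted $U^a,U^p$. (a) The usual conditions: time consistency and translation invariance come from (i) via Example \ref{ejempaste}, while normalisation, monotonicity and $\F_t$-conditional concavity (the map $(s,Y)\mapsto s-\E[H_t(s-Y)|\F_t]$ is jointly concave, hence $\tilde U_t$ is $\F_t$-concave) are inherited stepwise through the composition. (b) $L^0$--$L^1$ upper semicontinuity of $U_t$ on $L^1_{\F_t}(\F_T)$: a downward induction on the composition — the one-step estimates above, together with a matching lower bound obtained from the Fenchel inequality at a $Z\le1$ with $H_t^*(Z)<\infty$ and boundedness below of $H_t$, show that if $X_n\to X$ a.s.\ is $L^1_{\F_t}(\F_T)$-bounded then $\{\tilde U_{T-1}(X_n)\}_n$ is $L^1_{\F_{T-1}}(\F_{T-1})$-bounded, so (ii) transports the $\limsup$ through one step; iterating down to time $t$ concludes. (c) $\{X_-:X\in dom(U_t)\}\subset L^1_{\F_t}(\F_T)$: for each one-step OCE, $\tilde U_t(Y)>-\infty$ forces $\E[Y_-|\F_t]<\infty$ (bounding $Y_-Z$ by Fenchel at a $Z\le1$ with $H_t^*(Z)<\infty$), and this passes through the composition via the tower property. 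The TVAR family is the special case already isolated in (i).

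\emph{Main obstacle.} The delicate point is the \emph{uniform in $n$} confinement of the essential supremum over $s$ in part (ii) — reducing to a conditionally bounded set of $s$'s so that conditional Bolzano--Weierstrass applies — which is precisely where $1\in int(dom(H_t^*))$ (controlling both $s\to+\infty$ and $s\to-\infty$) and the $L^1_{\F_t}$-boundedness of $\{X_n\}$ (making the threshold $R$ a genuine element of $L^0(\F_t)$, via $K$) enter. A secondary, purely technical, difficulty is carrying out the ``conditional subsequence plus conditional Fatou'' interchange rigorously in the $L^0(\F_t)$-module setting, for which Lemma \ref{BW} is exactly the right tool.
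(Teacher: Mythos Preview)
Your treatment of (ii) is essentially the paper's argument: confine the near-optimal $s_n$ to a conditionally bounded set via Fenchel at $1\pm\delta$, extract a conditional subsequence with Lemma \ref{BW}, and pass to the limit by conditional Fatou (using that $H$ is minorised by an affine function). The paper does exactly this, only phrasing the $s_n$-bounds through Jensen followed by Fenchel at some $R>1$ and $r<1$ in $dom(H^*)$. You correctly identify the uniform-in-$n$ confinement of $s$ as the crux, and that $1\in int(dom(H^*))$ together with the $L^1_{\F_t}$-bound on $\{X_n\}$ is precisely what makes the threshold $R$ live in $L^0(\F_t)$.

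There is, however, a genuine gap in your part (i). You write that you would ``use the dual bound along a random density $Z\in\mathcal W_t$ valued in $[1-\delta,1+\delta]$ (available since $1\in int(dom(H_t^*))$)'' to dominate the double limit. The issue is \emph{which} $Z$: for an arbitrary $X\in L^0(\G_{t+1})$ (no integrability assumed) the dual bound $\tilde U_t(X\wedge k)\le \E[Z(X\wedge k)|\G_t]+\E[H_t^*(Z)|\G_t]$ gives only $\le k+C_t$ for a generic $Z$ (e.g.\ $Z\equiv 1$), which is not uniform in $k$. What is needed is a density $\xi\in\mathcal W_t$ with $\xi\in[0,1+\delta]$ \emph{and} $\E[\xi X_+|\G_t]<\infty$, and constructing such a $\xi$ for arbitrary a.s.-finite $X$ is exactly the nontrivial step the paper carries out (see the proof of (i) and Remark \ref{remtechnical}): one sets $N^\epsilon:=\inf\{n:\Prob(X\le n|\G_t)>(1+\epsilon)^{-1}\}\in\NN(\G_t)$ and then builds $\xi$ supported on $\{X\le N^\epsilon+n\}$ along a $\G_t$-measurable partition, so that simultaneously $\E[\xi|\G_t]=1$, $\xi\le 1+\epsilon$, and $\E[\xi X|\G_t]$ is dominated by an explicit $\G_t$-measurable quantity. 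Neither boundedness below of $H_t$ nor $1\in int(dom(H_t^*))$ alone delivers this; your sketch skips the construction.

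For (iii) your outline is sound and in fact more explicit than the paper's: the paper simply records that ``upper semicontinuity has already been dealt with'' (i.e.\ (ii)) and that pasting is justified by (i), and then checks the domain condition via $H_t(l)\ge l$ (from $H_t^*(1)=0$) together with $H_t$ bounded below, which gives $\E[H_t(s-X)|\F_t]<\infty\Rightarrow\E[X_-|\F_t]<\infty$. Your inductive transport of the $L^0$--$L^1$ u.s.c.\ through the composition is a reasonable elaboration, though for the way Assumption \ref{suposUasymm} is actually used in the paper (one-step, on $L^1_{\F_t}(\F_{t+1})$), part (ii) already suffices.
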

}
\begin{proof}
{For TVAR we have $H(l)=\lambda^{-1}[l]_+$ and $H^*(x)=\Psi_{[0,\lambda^{-1}]}(x)$, the convex indicator of $[0,\lambda^{-1}]$. In particular, $1\in int(dom(H^*))$ and $H$ is lower bounded. 
%
In the following we work with abstract OCEs for which the latter conditions hold.}

\begin{itemize}
{
	\item[i)] It suffices to show that the extensions defined by \eqref{eqextension} produce a functional which never attains the value $+\infty$. Let $U$ stand for any OCE (associated to $H$) satisfying the stated properties and let $X$ be a r.v.\ not attaining $+\infty$. For $1+\epsilon\in int(dom(H^*)) $ we define
\begin{align*}
N^{\epsilon} &:= \,\, \sum_{n=1}^{\infty}n \ind{\Prob(X\leq n |\F)>[1+\epsilon]^{-1},\, \Prob(X\leq n-1 |\F)\leq [1+\epsilon]^{-1}} \, =\,\, \inf\{n\in\N:\Prob(X\leq n |\F)>[1+\epsilon]^{-1}\},
\end{align*}
which is then finite and belongs to $\N(\F)$. Inspired by \cite[Proof of (12)]{Equilibrium} we introduce a partition $A_0:=\{\Prob(X\leq N^{\epsilon} |\F)>0\}$ and $A_n:=\{\Prob(X\leq N^{\epsilon}+n |\F)>0,\, \Prob(X\leq N^{\epsilon}+ n-1 |\F)=0\}$ for $n\geq 1$, so we define
$$ \xi=\sum_{n\geq 0} \ind{A_n}\frac{\ind{X\leq N^{\epsilon}+n}}{\Prob(X\leq N^{\epsilon}+n|\F)}.$$
It is then easy to see that $\E[\xi|\F]=1$ and that $\xi \in [0,1+\epsilon]$, so that a.s.\ $\xi\in dom(H^*)$. We conclude that
\begin{eqnarray*}
	U(X\wedge k) &=& \esssup_s \{s-\E[H(s-X\wedge k)]\} \\
	& \leq & \esssup_s \{s-s\E[\xi|\F]+\E[\xi(X\wedge k)+H^*(\xi)|\F]\}\\
	& \leq & \E[\xi X|\F]+\E[H^*(\xi)|\F] \leq \sum_{n\geq 0}\ind{A_n}[N^{\epsilon}+n]+\E[H^*(\xi)|\F]<+\infty, \notag
\end{eqnarray*} 
where finiteness comes from the fact that $H^*$ must send $[0,1+\epsilon]$ into a bounded set. Hence  $\lim_{k\to +\infty}U(X\wedge k)<+\infty$ and we conclude.}


\item[ii)] Let us take $X_n$ bounded in $L^1_{\F}(\G)$ such that $X_n\rightarrow X$ a.s.\ For any $C\in L^0(\F)$ we want to show that if $\esssup_{s}\left\{s-\E(H(s-X_n)|\F)  \right\}\geq C$ then also $\esssup_{s}\left\{s-\E(H(s-X)|\F)  \right\}\geq C$. Indeed, let us first take $s_n\in L^0(\F)$ such that
$$s_n-\E(H(s_n-X_n)|\F)\geq C-n^{-1}. $$
%
%
%
Because $H$ is convex, lower-semicontinuous and proper, we have that $H(s_n-\E(X_n|\F))\geq R[s_n-\E(X_n|\F)]-H^*(R)$ for each $R$, and so in particular
$$s_n[1-R]\geq C-n^{-1}-H^*(R)-R\E(X_n|\F),$$
for every $R\in dom(H^*)$. If we were able to find $R\in dom(H^*)\cap (1,\infty)$ then for such element we would have 
$$s_n\leq \frac{C-n^{-1}-H^*(R)-R\E(X_n|\F)}{1-R}.$$
Similarly, if $r\in dom(H^*)\cap (-\infty,1)$ existed then we would get 
$$s_n\geq \frac{C-n^{-1}-H^*(r)-r\E(X_n|\F)}{1-r}.$$
Altogether, we could conclude that the quantities $s_n$ are $L^0(\F)$-bounded, since the random variables $X_n$ were bounded in $L^1_{F}(\G)$. By Lemma \ref{BW} we could find $N_n\in\N(\F)$ increasing to $+\infty$ such that $s_{N_n}\rightarrow \bar{s}$ a.s.\, for some $\bar{s}\in L^0(\F)$, and obviously $X_{N_n}\rightarrow X$ a.s.\ still. By locality we would have that 
$$s_{N_n}-\E(H(s_{N_n}-X_{N_n})|\F)\geq C-{N_n}^{-1}. $$
Taking $\limsup_n$, using the fact that $H$ is bounded below by an affine function and conditional Fatou's Lemma, we may obtain
$$\bar{s}-\E(H(\bar{s}-X)|\F)\geq C. $$
This readily implies what we wanted to prove. To conclude, observe that the conditions $dom(H^*)\cap (1,\infty)\neq\emptyset$ and $dom(H^*)\cap (-\infty,1)\neq\emptyset$ are together equivalent to $1\in int(dom(H^*))$ in our case, since $1\in dom(H^*)$ by definition and $H^*$ is convex. 

{\item[iii)] Upper semincontinuity has already been dealt with. Under the stated conditions pasting is also already justified, so it only remains to show the condition on the domain of $U_t$. If $X\in dom(U_t) $, then there must be some $s\in\R$ such that $\E[H(s-X)|\F_t]<\infty$. But as $H$ is bounded from below, this implies $\E[H(s+X_-)|\F_t]<\infty$, and by the normalization property on $H$ (implying $H(l)\geq l$) we get in turn $\E[X_-|\F_t]<\infty$ as desired.
}
\end{itemize}
\end{proof}

{
\begin{remark}\label{remtechnical}
The reason we had to prove \cite[Proof of (12)]{Equilibrium} anew in the first part of the preceding proof, is that we want to include the case where $dom(H^*)\neq [0,\infty)$, in order to cover e.g.\ the TVAR family. This creates the difficulty of finding a $\xi$ satisfying simultaneously $\E[\xi|\F]=1$, $\xi \in [0,1+\epsilon]$ and $\E[\xi X|\F]<\infty$. 
\end{remark}
}

\begin{remark}
\label{oceobs}
In case $U^a_t$ arises as an optimized certainty equivalent (see example \ref{oce}), we know by Lemma \ref{ejemx2} and Proposition \ref{implicanciausc} that it has a variational representation. Notice then by Young inequality that
$$\alpha_t(Z)\leq \esssup_X\esssup_s\{s-\E[H_t(s-X)|\F_t]-\E[ZX|\F]\}\leq \E[H_t^*(Z)|\F_t],$$
whenever $Z\in L^{\infty}_{\F_t}(\F_{t+1})$ is s.t.\ $\E[Z|\F_t]=1$. In \cite{Equilibrium} and \cite{Divutil} it is proved that under given conditions there is equality above (for $Z\in\mathcal{W}_t$). In any case we see that the conditions in Theorem \ref{teogeralasim} on $K^{a,p}$ are satisfied if these utility functionals are such that $1\in int(dom(H_t^*))$; indeed, we may just take $\epsilon>0$ such that $[1-\epsilon,1+\epsilon]\subset dom(H^*_t) $.
\end{remark}

Equipped with the previous result, we provide the proof of our general existence of optimal contracts. 
 
\hspace{0.5mm} 
 
{\sl Proof of Theorem \ref{casodina}.} We may assume $\F=\F^A$. Under the given condition on the $H$'s, the condition on the $K$'s (see Theorem \ref{teogeralasim}) is satisfied, thanks to Remark \ref{oceobs}. By Proposition \ref{Genmetateo} it remains to show that $h_{t+1}\in dom(U^p_t)$ for each $t$.

Either using Young's inequality or invoking Remark \ref{oceobs}, we know that $U_t^a(X),U_t^p(X)\leq \E[X|\F_t]$. From this we see
\begin{align*}
V_t(A,\Gamma)&\leq -c(A)
+A\E[\Delta \tilde{P} \vert\F_t]+\E[h_{t+1}\vert \F_t]\\
&\leq -\left(\frac{c_t(A)}{|A|}-\frac{2p_+}{p_-}\right)\vert A\vert + \E[h_{t+1}\vert \F_t]\\ &\leq K_t + \E[h_{t+1}\vert \F_t],
\end{align*}
where we used that $\vert \Delta \tilde{P}_{t+1} \vert=\vert diag(P_t)^{-1}\Delta P_{t+1} \vert \leq \frac{2p_+}{p_-}$ and the existence of $K_t\in\RR$ is a consequence of the growth of $c_t$ and its continuity. From here we get by definition that $h_{t} \leq   K_t + \E[h_{t+1}\vert \F_t]$ and since $h_T=0$ by backwards inductions follows that $h^p_{t+1}\leq L$ for some constant $L$ and all $t$. Monotonicity of the $U^p$'s mean that $h_{t+1}\in dom(U^p_t)$.
\hfill $\Box$ 
%
%
\section{Proof of Theorem \ref{master}}

First we turn our attention to the Agent's recursion. Let $\bar{a}$ be a generic sequence of efforts. From equation \eqref{eqlarga}, we see that defining 
$$H_t(\bar{a}_t,\dots,\bar{a}_{T-1}) := U_t^a\left( \Theta(P_{0:T}) +  \sum_{s\geq t}\left\{ \beta_s\Delta W^{\bar{a}}_{s+1} -c_{s+1}(\bar{a}_{s+1})\right\} \right) - c_t(\bar{a}_t) , $$
we get the recursion 
\begin{align*}
H_T (\bar{a}_t,\dots,\bar{a}_{T-1})&= \Theta(P_{0:T}) ,\\
H_t(\bar{a}_t,\dots,\bar{a}_{T-1}) &=  U_t^a\left( H_{t+1}(\bar{a}_{t+1},\dots,\bar{a}_{T-1}) + \beta_t\bar{a}_t \Delta\tilde{P}_{t+1} \right)  - c_t(\bar{a}_t)  .
\end{align*}
Then, in terms of $H_t:=\esssup_{a_{t},\dots,a_{T-1}} H_t(a_{t},\dots,a_{T-1})$, we get: 
$$H_t(\bar{a}_{t},\dots,\bar{a}_{T-1})\leq -c(\bar{a}_t)+ U^a_t\left(\esssup_{a_{t+1},\dots,a_{T-1}} H_{t+1}(a_{t+1},\dots,a_{T-1}) +\beta_t \bar{a}_{t}\Delta \tilde{P}_{t+1}\right).$$
This yields that $H_t\leq \esssup_{a_t}\left\{-c(a_t)+ U^a_t\left(H_{t+1} + \beta_t a_{t}\Delta \tilde{P}_{t+1}\right)\right\}$.
For $t=T-1$ this is an equality and by assumption the value $H_{T-1}$ is attained at some $\hat{a}_{T-1}$. Suppose now that equality holds in the previous equation for $t+1,\dots,T-1$, and $H_{t+1}$ was attained say at $(\hat{a}_{t+1},\dots,\hat{a}_{T-1})$. This implies that:
\begin{align*}
H_t & \leq \esssup_{a_t}\left\{-c(a_t)+ U^a_t\left(H_{t+1}(\hat{a}_{t+1},\dots,\hat{a}_{T-1}) + \beta_t  a_{t}\Delta \tilde{P}_{t+1}\right)\right\}\\
 &\leq \esssup_{a_t,\dots,a_{T-1}}\left\{-c(a_t)+ U^a_t\left(H_{t+1}(a_{t+1},\dots,a_{T-1}) + \beta_t a_{t}\Delta \tilde{P}_{t+1}\right)\right\}\\
 &= \esssup_{A_t,\dots,A_{T-1}} H_{t}(A_{t},\dots,A_{T-1}) \hspace{3pt}=:\hspace{3pt} H_t.
 \end{align*}
So indeed at time $t$ also $H_t = \esssup_{a_t}\left\{-c(a_t)+ U^a_t\left(H_{t+1} + \beta_t  a_{t}\Delta \tilde{P}_{t+1}\right)\right\}$ holds and by assumption the last term is attained at some $\hat{a}_t$,
from which $H_t$ is attained at $(\hat{a}_{t},\dots,\hat{a}_{T-1})$. This closes the inductive step, and therefore the desired recursion holds.

Now we will establish rigorously recursion \eqref{intermhG} (equivalently \eqref{intermh}). To this end we denote by $\beta=(\beta_t)_t$ a generic decision variable for the Principal and $a=(a_t)_t$ where $ a_t\in L^0(\F_t)^N$, a corresponding optimal effort for the Agent. Let 
$$N:=\sum_{s\geq t+1} \left[(1-\beta_s)a_s\Delta\tilde{P}_{s+1} -\Delta H_{s+1} \right].$$ 
Then using the just proven expression for $H_t$ (i.e.\ \eqref{recHbas}), and setting $\Gamma=\beta_t a_t\Delta\tilde{P}_{t+1} + H_{t+1}$, we get:
\small
\begin{align*}
U_t^p\left( \sum_{s\geq t} \left[(1-\beta_s)a_s\Delta\tilde{P}_{s+1} -\Delta H_{s+1} \right] \right) =& U_t^p\left( (1-\beta_t)a_t\Delta\tilde{P}_{t+1} - H_{t+1}  - c_t(a_t) + N\right .\\\vspace{1pt}
&+ \left . U_t^a\left( H_{t+1}+ \beta_t a_t\Delta\tilde{P}_{t+1} \right)\right),\\
=& U_t^p\left( a_t\Delta\tilde{P}_{t+1} - \Gamma +  U_t^a\left( \Gamma  \right) - c_t(a_t) + N\right),\\
=& U_t^a\left( \Gamma  \right) - c_t(a_t) + U_t^p\left( a_t\Delta\tilde{P}_{t+1}-\Gamma + N \right ).
\end{align*}
\normalsize
And now, applying time-consistency and translation invariance in the last term above we get:\small
$$ U_t^p\left( \sum_{s\geq t} \left[(1-\beta_s)a_s\Delta\tilde{P}_{s+1} -\Delta H_{s+1} \right] \right) = U_t^a\left( \Gamma  \right) - c_t(a_t) + U_t^p\left( a_t\Delta\tilde{P}_{t+1}-\Gamma + U_{t+1}^p(N) \right ).$$ \normalsize
Therefore calling $h_{t+1}(a,\Gamma)=U_{t+1}^p( N )$, we obtain recursion \eqref{intermhG}. That is to say, if $(a,\Gamma)$ does not satisfy this recursion, they will not be chosen by the Principal. In the same way we conclude for $(a,\beta)$ and recursion \eqref{intermh}. With these recursions for $h_t(\cdot)$ already established, we can proceed to prove \eqref{rechas} the same way we proved the recursion for $H$. First recall that actually $h_t(a,\Gamma)$ is a short-hand for $h_t((a_s,\Gamma_{s+1})_{s\geq t})$. From this and \eqref{intermhG} we have: 
$$ h_t((\bar{a}_{s},\bar{\Gamma}_{s+1})_{s\geq t})\leq U_t^p\left(\esssup_{A,\Gamma} h_{t+1}(A,\Gamma) +\bar{a}_{t}\Delta \tilde{P}_{t+1} - \bar{\Gamma}_{t+1}   \right) + U_t^a(\bar{\Gamma}_{t+1})-c_t(\bar{a}_t)   .$$
This yields 
\begin{align*}
h_t&
\leq  \esssup\limits_{(a_{t},\Gamma_{t+1})\in \C_t(\beta_t) }  U_t^p\left(h_{t+1} +a_{t}\Delta \tilde{P}_{t+1} - \Gamma_{t+1}   \right) + U_t^a(\Gamma_{t+1}) -c_t(a_t)  .
\end{align*}
For $t=T-1$ this is an equality (we defined $h_T=0$) and by assumption the value $h_{T-1}$ is attained. Using induction, similarly as how we did it for $H$, we get that \eqref{rechas} holds.

The validity of the change of variables $\beta_ta_t\Delta\tilde{P}_{t+1}+H_{t+1} \rightarrow \Gamma_{t+1}$ and the introduction of $\C(\beta)$ as a constraint inducing {incentive compatibility} are now obvious. This means that $h$ represents the future wealth prospects of the Principal. Hence at time $t=0$ we obtain a solution for the whole Principal's problem, proving as well that Principal's optimal wealth is $W_0-R+h_0$. 

We proceed now to prove that a solution to Principal's recursion delivers indeed an optimal (dynamic) contract, and that the Agent behaves as predicted. Call $(\beta_t,A_t,\Gamma_{t+1})_{t}$ the optimal quantities attaining $h$ in \eqref{rechas}. Define $\Theta$ and the contract $S$ as in the statement of the present theorem. Then: 
\small
\begin{align*}
 U_{T-1}^a\left( \Theta + \beta_{T-1}a_{T-1} \Delta\tilde{P}_{T} \right)  - c_{T-1} (a_{T-1} ) =&  \sum_{0\leq t< T-1} \left[ \Gamma_{t+1} - \beta_{t}A_{t}\Delta\tilde{P}_{t+1} - U^a_t(\Gamma_{t+1}) + c_t(A_t)  \right] \\
 &R+[c_{T-1}(A_{T-1} )  - c_{T-1} (a_{T-1} ) ] - U^a_{T-1}(\Gamma_T)  \\
 &+U_{T-1}^a\left( \Gamma_T - \beta_{T-1}A_{T-1} \Delta\tilde{P}_{T} + \beta_{T-1}a_{T-1} \Delta\tilde{P}_{T} \right).
\end{align*}
\normalsize
By definition of $\C(\beta)$ the sum of the last terms is smaller or equal than $0$, and exactly zero when $a_{T-1}=A_{T-1}$. Therefore
\small
\begin{align*}
\esssup_{a_{T-1}} \left\{U_{T-1}^a\left( \Theta + \beta_{T-1}a_{T-1} \Delta\tilde{P}_{T} \right)  - c_{T-1} (a_{T-1} )\right\} =&  R + \sum_{0\leq t< T-1} [ \Gamma_{t+1} - \beta_{t}A_{t}\Delta\tilde{P}_{t+1} + c_t(A_t) - U^a_t(\Gamma_{t+1})  ].
\end{align*}
\normalsize
This shows that at time $T-1$ the Agent chooses $A_{T-1}$ when presented with $(\Theta,\beta)$. If we define $H_T=\Theta$, we are thus entitled to call $H_T$ the value (left hand side or right one) in the above equality. By using backwards induction, as we have often done and hence omit, 
we have proven that the contract $S$ (defined from $(\Theta,\beta)$) is optimal for the Principal and incentive compatible (notice that automatically $H_0=R$), and the Agent indeed chooses $A$ under this contract. \hfill $\Box$

\end{appendix}

\bibliographystyle{plain}
\bibliography{mybib}

\end{document}